\newcommand{\sm}[1]{\mathcal{#1}}
\newcommand{\bo}[1]{\mathbf{#1}}
\newcommand{\bx}{\mathbf{x}}
\newcommand{\floor}[1]{\left\lfloor#1\right\rfloor}
\newcommand{\ceil}[1]{\left\lceil#1\right\rceil}
\newcommand{\real}{\mathbb{R}}
\DeclareMathOperator{\prob}{Prob}
\DeclareMathOperator{\cov}{Cov}
\DeclareFontFamily{U}{msb}{}
\DeclareFontShape{U}{msb}{m}{n}{ <5> <6> <7> <8> <9> gen * msbm
  <10> <10.95> <12> <14.4> <17.28> <20.74> <24.88> msbm10}{} 
\DeclareSymbolFont{AMSb}{U}{msb}{m}{n}
\DeclareMathSymbol{\E}{\mathalpha}{AMSb}{"45}
\newcommand{\zig}{\Pisymbol{psy}{126}}
\title{A random walk on image patches
\thanks{This work was partially supported by
    National Science Foundation Grants DMS 0941476, ECS 0501578, and a
    contract from Sandia National Laboratories.}}
\author{Kye M. Taylor \thanks{Department of Applied Mathematics, 526
    UCB University of Colorado, Boulder, CO 80309-0526
    (\email{taylorkm@colorado.edu}).} \and Fran\c{c}ois G. Meyer
  \thanks{Department of Electrical Engineering, 425 UCB, University of
    Colorado, Boulder, CO 80309-0425 (\email{fmeyer@colorado.edu}).}}
\begin{document}
\maketitle
\begin{abstract}
In this paper we address the problem of understanding the success of
algorithms that organize patches according to graph-based metrics.
Algorithms that analyze patches extracted from images or time series
have led to state-of-the art techniques for classification, denoising,
and the study of nonlinear dynamics. The main contribution of this
work is to provide a theoretical explanation for the above
experimental observations. Our approach relies on a detailed analysis
of the commute time metric on prototypical graph models that epitomize
the geometry observed in general patch graphs. We prove that a
parametrization of the graph based on commute times shrinks the mutual
distances between patches that correspond to rapid local changes in
the signal, while the distances between patches that correspond to
slow local changes expand. In effect, our results explain why the
parametrization of the set of patches based on the eigenfunctions of
the Laplacian can concentrate patches that correspond to rapid local
changes, which would otherwise be shattered in the space of
patches. While our results are based on a large sample analysis,
numerical experimentations on synthetic and real data indicate that
the results hold for datasets that are very small in practice.
\end{abstract}
\begin{keywords}
image patches, diffusion maps, Laplacian eigenmaps, graph Laplacian,
commute time
\end{keywords}
\begin{AMS}62H35, 05C12, 05C81, 05C90\end{AMS}
\pagestyle{myheadings}
\thispagestyle{plain}
\markboth{K.~M. TAYLOR AND F.~G. MEYER}{A RANDOM WALK ON IMAGE PATCHES}
\section{Introduction}
\paragraph{Problem statement and motivation} In this paper we address
the problem of understanding the success of algorithms that organize
patches according to graph-based metrics. Patches are local portions,
or snippets, of a signal or an image. The set of patches can be
organized by constructing a graph that connects patches that are
similar.  Indeed, it is reasonably straightforward to measure the
similarity between patches that are alike. The graph can then be used
to extend the notion of similarity to patches that are very
different. For instance, one can measure the distance between two visually
different patches by computing the number of edges of the shortest
path (geodesic) connecting them. In this work we explore a
distance defined by the commute time associated with a random walk
defined on the graph.

Algorithms that analyze patch data using graph-based metrics have led
to state-of-the art techniques for classification
\cite{Shen2008886,TaylorMeyer10}, denoising
\cite{bougleux09,Buades05,Gilboa08,Katkovnik10,peyre08,SingerBoaz,Szlam08},
and studying dynamics \cite{borges07,lac08,PhysRevLett.96.238701}.
The graph provides a new perspective from which to analyze the
similarities between patches, and consequently, the local signal or
image content they contain. For example, in \cite{borges07},
properties of the graph's geometry, such as the distribution of
clustering-coefficients and the average geodesic distance between two
vertices, are used to separate chaos and noise, or different types of
chaos. In \cite{Shen2008886,SingerBoaz,Szlam08,TaylorMeyer10}, the
geometry of the graph is analyzed by studying a random walk on
it. Specifically, the \textit{diffusion distance} \cite{Coifman06a}
(or \textit{spectral distance} \cite{Berard94}), and the
\textit{commute time distance} \cite{Bremaud99} (which is equivalent
to the \textit{resistance distance} \cite{citeulike:5942798}) are two
related graph metrics that are derived from the random walk, and that
can be used to parametrize the graph's geometry. These metrics can be
used to efficiently organize patches in a manner that reveals the
local behavior of the associated signal or image. In our previous work
\cite{Shen2008886,TaylorMeyer10}, we have noticed that metrics based
on a diffusion, or a random walk concentrate patches that contain
rapid changes in the signal or image data. These patches contain
changes associated with singularities (edges), rapid changes in
frequency (textures, oscillations), or energetic transients contained
in the underlying function. Furthermore, patches that contain only the
smooth parts of the image are more spread out according to such graph
metrics.

\paragraph{Outline of our approach and results} The main contribution
of this work is to provide a theoretical explanation for the above
experimental observations. Our approach relies on a detailed analysis
of the commute time metric on prototypical graph models that epitomize
the geometry observed in general patch-graphs. We assume that the set
of patches is composed of two broad classes: patches within which the
function varies smoothly and slowly, and patches where the function
exhibits anomalies: singularities, very rapid change in local
frequency, etc. We prove that a parametrization of the graph based on
commute times shrinks the mutual distances between vertices that
correspond to rapid local changes relative to the distances between
vertices that correspond to slow local changes. In effect, our results
explain why the parametrization of the set of patches based on the
eigenfunctions of the Laplacian \cite{SingerBoaz,Szlam08} can
concentrate anomalous patches, which would otherwise be shattered in
the space of patches. This concentration phenomenon can then be
exploited for further processing of the patches (e.g. denoising,
classification, etc). While our results are based on a large sample
analysis, numerical experimentations on synthetic and real data
indicate that the results hold for datasets that are very small in
practice.

\paragraph{Organization} This paper is organized as follows. In the
next section, we describe the patch-based representation of a signal,
and the associated patch-graph. We develop some intuition about the
graph of patches by studying several examples in section
\ref{ssec:firstlookpatchgraph}.  In section \ref{sec:patchgraphparam},
we describe the embedding of the graph of patches based on commute
time. The prototypical graph models that allow us to study the
parametrization are defined in section \ref{ssec:theory}. The main
theoretical result about the embedding of the graph models are
presented in section \ref{ss:ctestimates}. Numerical experiments
confirming our theoretical analysis are presented in section
\ref{sec:experiments}.  We finish with a discussion in section
\ref{sec:discuss}.
\section{Preliminaries and  Notation}
\label{sec:patchgraph}
For simplicity and without loss of generality, we assume that the
signal of interest is formed by a sequence of samples,
$\{x_n\}_{n = 1}^{N'}$. Because we want to extract $N= N'-(d-1)$
patches from this sequence, we need $d$ extra samples at the end
(hence the $N'$ samples). We first define the notion of a {\em patch}.\\

\begin{definition}
  We define a {\em patch} as a vector in $\real^d$ formed by a
  subsequence of $d$ contiguous samples extracted from the sequence $\{x_n\}_{n =1}^{N'}$,
  \begin{equation}
    \bx_n = 
    \begin{bmatrix}
      x_n & x_{n+1} &\ldots &x_{n+(d-1)}
    \end{bmatrix}      
    ^T,\hspace{.2in}\text{for }n = 1,2,\ldots, N.
    \label{eqn:apatch}
  \end{equation}
\end{definition} 
As we collect all the patches, we form the {\em patch-set} in $\real^d$.\\

\begin{definition}
  The  {\em patch-set} is defined as the set of patches extracted from
  the sequence $\{x_n\}_{n =1}^{N'}$,
  \begin{equation}
    \text{patch-set}=\{\bx_n, n = 1,2,\ldots, N\}.
    \label{eqn:patchset}
  \end{equation}
\end{definition} 
A main objective of this paper is to understand the organization of
the patch-set and relate this organization to the presence of local
changes in the signal or the image.  We note that the concept of patch is
related to the concept of time-delay embedding. Specifically, if the
sequence comprises measurements of a dynamical system, then Taken's
embedding theorem \cite{springerlink:10.1007/BF01053745,Takens81}
allows us to replace the unknown phase space of the dynamical system
with a topologically equivalent phase space formed by the patch-set
(\ref{eqn:patchset}). While in this work we do not assume that the
sequence $\{x_n\}$ is an observable of a dynamical system, we are
nevertheless interested in a similar goal: the organization of patches in $\real^d$.

Throughout this paper, we think about a patch, $\bo x_n$, in several
different ways. Originally, $\bx_n$ is simply a snippet of the time
series. Then, we think about $\bx_n$ as a point in $\real^d$. Later, we
also regard $\bx_n$ as a vertex of a graph.  Keeping these three
perspectives in mind is critical to our approach and understanding.

In order to study the discrete structure formed by the patch-set
(\ref{eqn:patchset}), we connect patches together (using
their nearest neighbors) and define a graph (or network) that we call
the {\em patch-graph}.\\

\begin{definition}
  The {\em patch-graph}, $\Gamma$, is a weighted graph defined as follows.
  \begin{remunerate}
  \item The vertices of $\Gamma$ are the patches $\{ \bx_n, n =1,\ldots, N\}$.
  \item Each vertex $\bx_n$ is connected to its $\nu$ nearest neighbors using
    the metric 
    \begin{equation}
      \rho(\bx_n,\bx_m) = \left\|\tfrac{\bx_n}{\|\bx_n\|}
        - \tfrac{\bx_m}{\|\bx_m\|}\right\|.
      \label{rho}
    \end{equation}
  \item The weight $w_{n,m}$ along the edge $\{\bx_n,\bx_m\}$ is given by
    \begin{equation}
      w_{n,m}= 
      \begin{cases}
        e^{\displaystyle -\rho^2(\bx_n,\bx_m)/\sigma^2} & \text{ if $\bx_n$ is connected to $\bx_m$,}\\
        0 &\text{ otherwise.}
      \end{cases}
      \label{eqn:weightdef}
    \end{equation}
  \end{remunerate}
\end{definition}
The edges of the patch-graph encode the similarities between its $N$
vertices.  We work with the metric $\rho$ (defined in (\ref{rho}))
because it is not sensitive to changes in the local energy of the
signal (measured by $\|\bx_n\|$). The metric $\rho$ allows us to
detect changes in the signal's local frequency content, or local
smoothness.  The parameter $\sigma$ controls the scaling of the
similarity $\rho(\bx_n,\bx_m)$ between $\bx_n$ and $\bx_m$ when
defining the edge weight $w_{n,m}$. In particular, $w_{n,m}$ will drop
rapidly to zero as $\rho(\bo{x}_n,\bo{x}_m)$ becomes larger than $\sigma$.

An important remark about the way we measure distances on the graph is
in order here. We use $\rho$ to define the graph topology defined by
the edges: which patch is connected to which patch. This is
appropriate since we can compare patches that are similar using $\rho$
(e.g. two patches containing the same edge, but at different
locations). On the other hand, as explained in section
\ref{sssec:commutetimes}, we use the commute time to analyze the
global geometry of the patch-graph.%
\begin{figure}[H]
\centerline{
    \includegraphics[width=.5\textwidth]{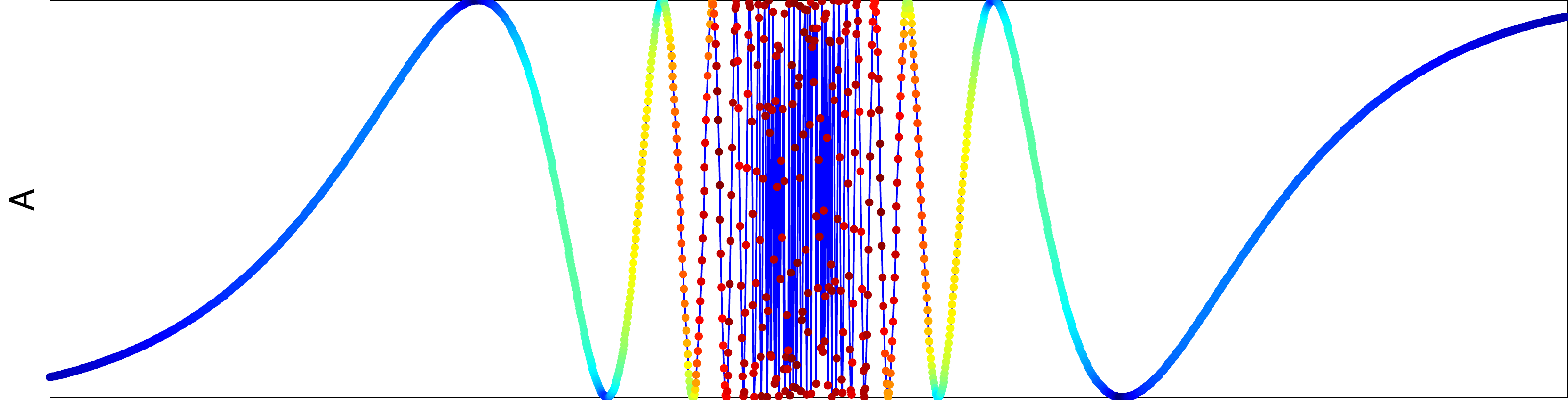} 
}
\centerline{
   \includegraphics[width=.5\textwidth]{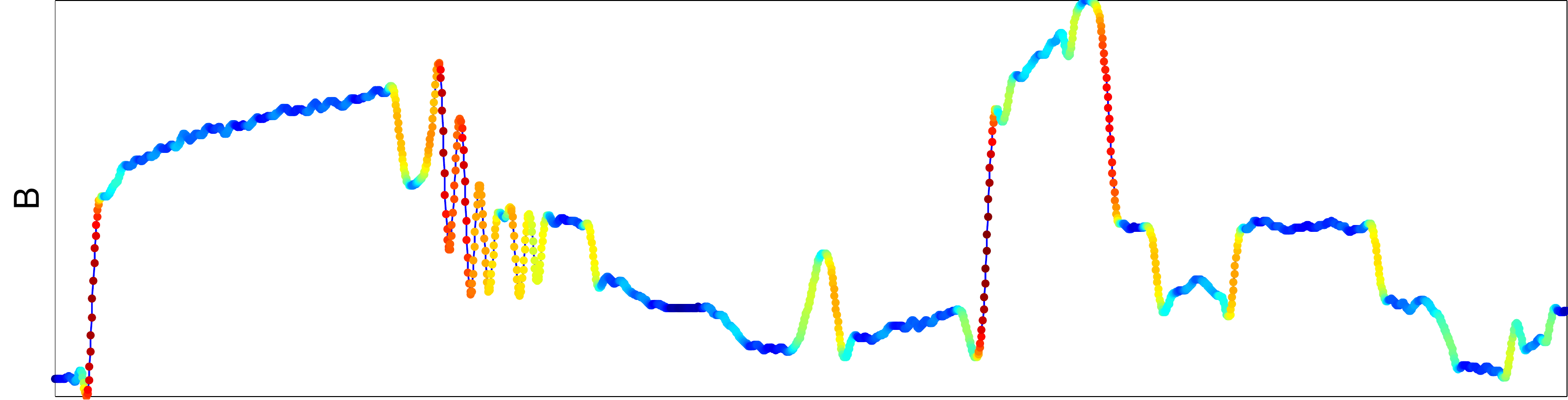} 
}\centerline{
    \includegraphics[width=.5\textwidth]{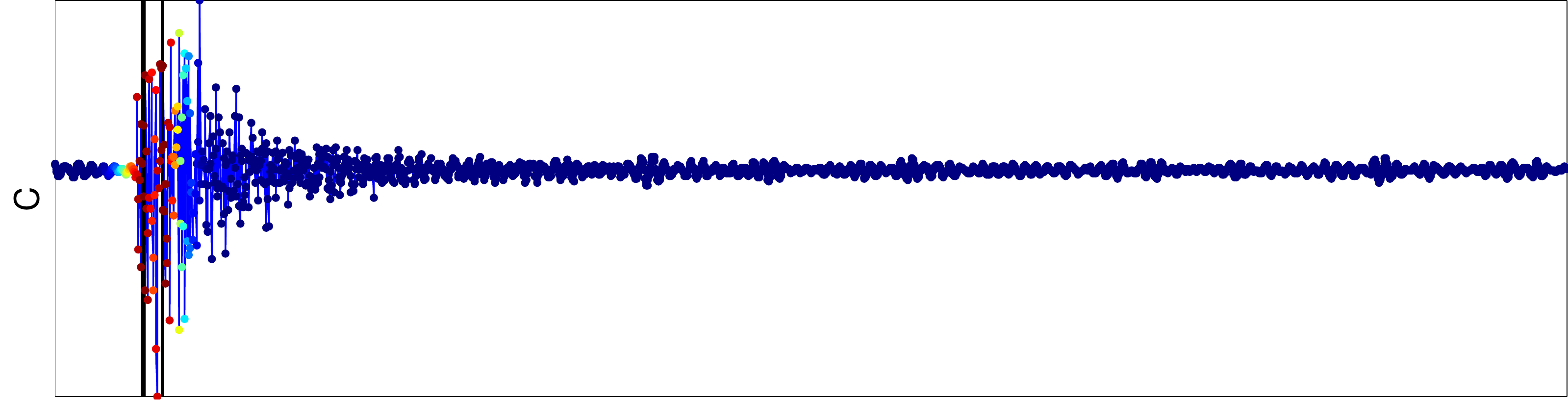} 
}
  \caption{A, B, C: time series composed of $N' = 2072$ samples.   The
    color of the signals A and B encodes the local variance (large = 
    red, low = blue). C: seismogram; the color indicates the
    temporal proximity to a seismic arrival, identified by vertical black lines. See text
    for more details. \label{fig:s2ts}}        
\end{figure}
\begin{figure}[H]
\centerline{
   \includegraphics[width=.27\textwidth
   ,height=.21\textheight]{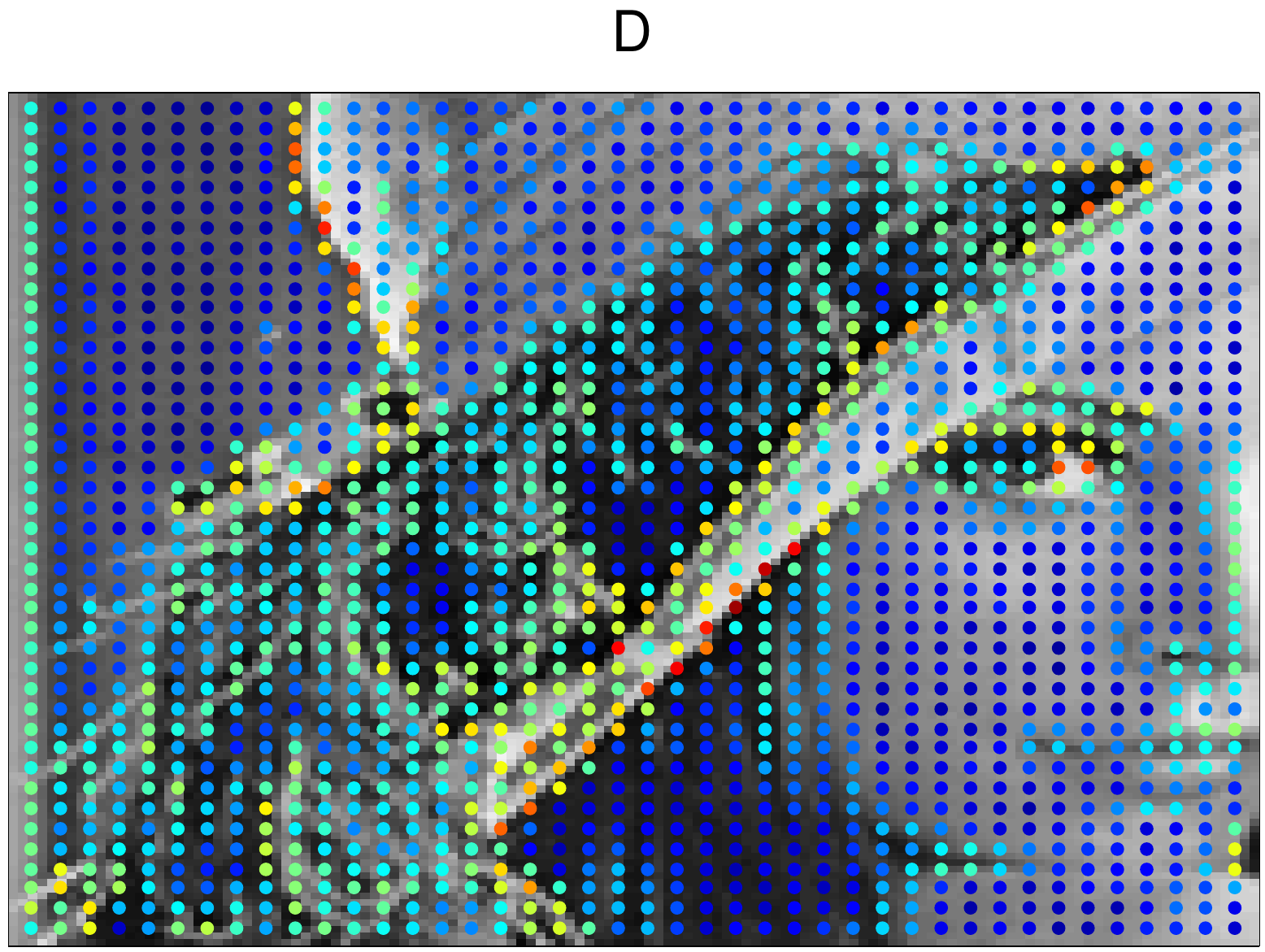} 
   \includegraphics[width=.27\textwidth
   ,height=.21\textheight]{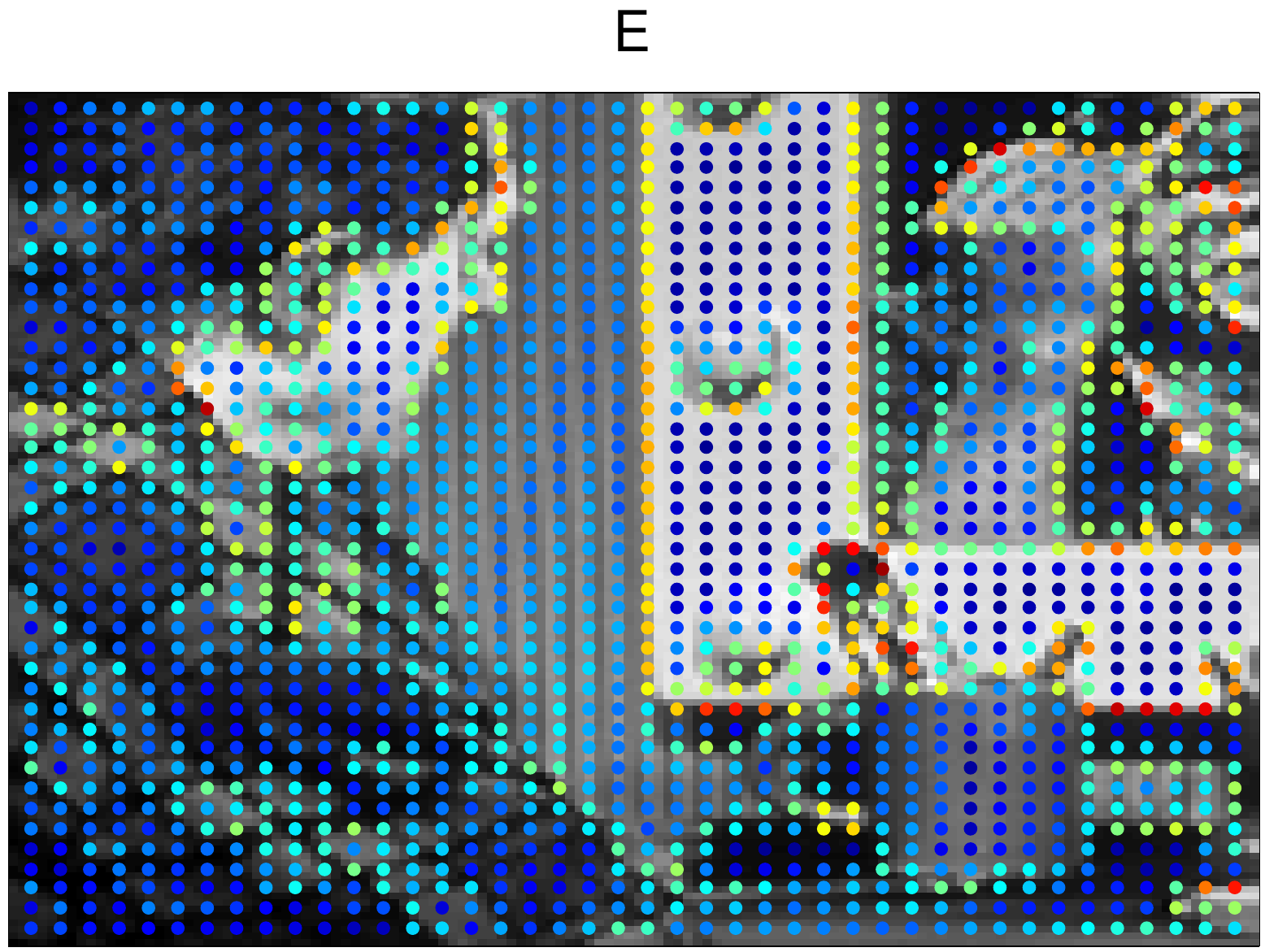} 
   \includegraphics[width=.27\textwidth,height=.21\textheight]{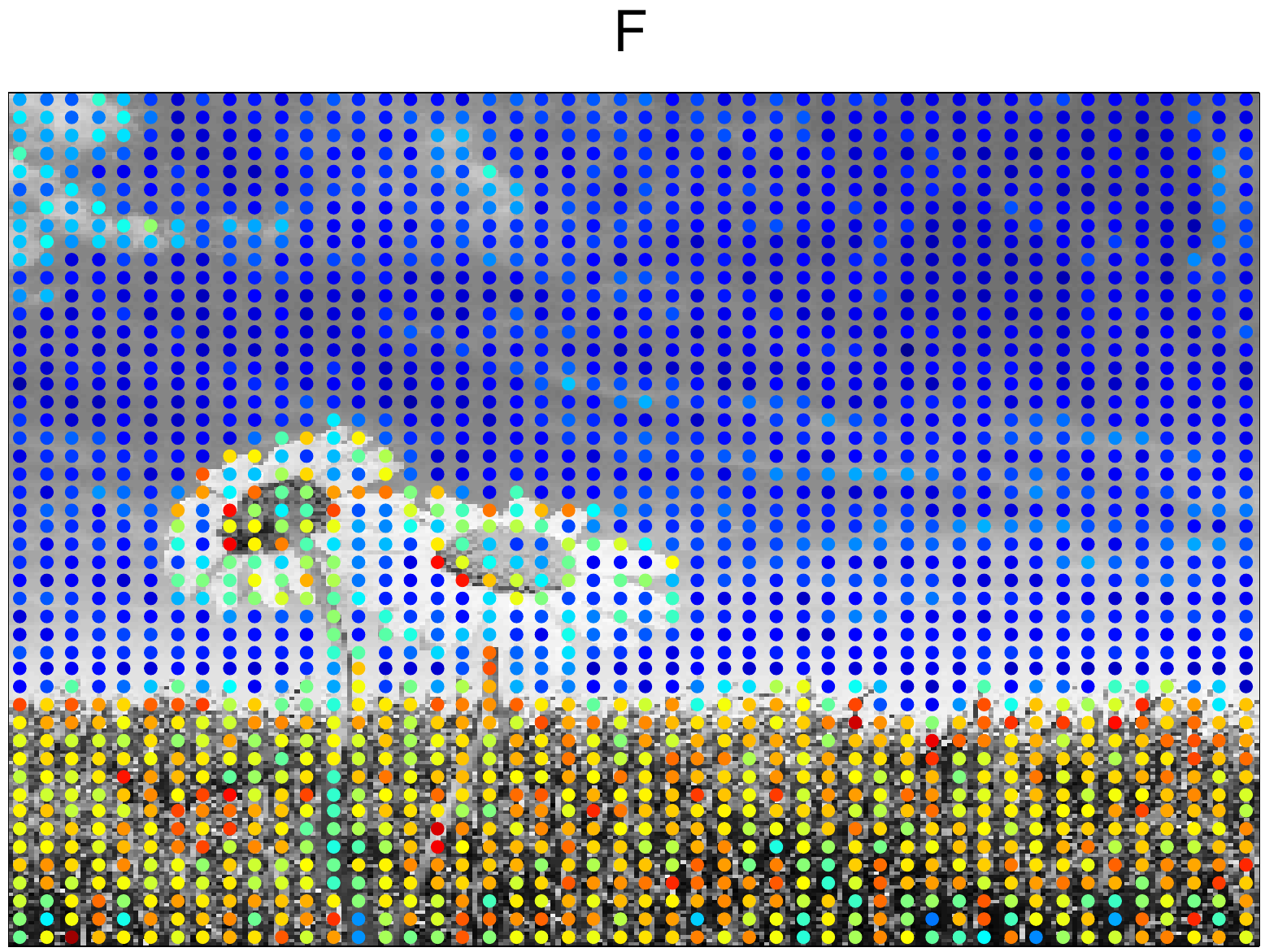}
}
  \caption{D, E, and F: image of size $128\times 128$, $128\times
    128$, and $240\times240$ pixels respectively. The color of the pixel
    at the center of each patch encodes the local variance of the
    image intensity.\label{fig:signalsec2im} }
\end{figure}
\noindent  Indeed, the distance defined by
$\rho$ becomes useless when we need to compare very different patches
(e.g. a patch of a uniform region vs a patch that contains an
edge). As explained in section \ref{sssec:commutetimes}, the
global organization of the patches can be discovered by studying the
speed at which a random walk propagates along the graph (via hitting
times). 

Finally, we note that the weighted graph is fully
characterized by its \textit{weight matrix}.\\

\begin{definition} The {\em weight matrix}
  $\bo{W}$ is the $N\times N$ matrix with entries $\bo{W}_{n,m} =
  w_{n,m}$. The {\em degree matrix} is the $N\times N$ diagonal matrix
  $\bo{D}$ with entries $\bo{D}_{n,n} = \sum_{l = 1}^N w_{n,l}$.
\end{definition}

\section{Warm up: A first look at the patch-set}
\label{ssec:firstlookpatchgraph}
The goal of this section is to provide the reader with some intuition
about the geometry of the patch-set and the associated
patch-graph. This will help us motivate our graph models and the
analysis of their geometry.  At the end of the section, we provide a
sketch of our plan of attack.
\subsection{Examples of signals and images}
\label{ssec:imsigexample}
We construct the patch-set associated with some examples of signals
and images. Because it is not practical to
visualize the patch-set in $\real^d$ when $d=25$, we display the
projection of the patch-set onto the three-dimensional space that
captures the largest variance in the patch-set (computed using principal
component analysis). Figure \ref{fig:s2ts} displays three signals
$\{x_n\}, n=1,\ldots,N'$, with $N'=2072$. Patches of size $d = 25$
samples are extracted around each time sample, which results in the
maximum overlap between patches. Signal A is a chirp, signal B is a
row of the image Lenna (shown in Fig. \ref{fig:signalsec2im}-D), and
signal C is a seismogram \cite{TaylorMeyer10}.

In order to quantify the local regularity of signals A and B, we
compute the variance over each patch, and color the curve according to
the magnitude of the local variance: hot (red) for large variance and
cold (blue) for low variance.  The color of signal C encodes the
temporal proximity to the arrival of a seismic wave associated with an
earthquake: hot color indicates close proximity, while cold
corresponds to baseline activity. Identifying arrival-times is
necessary for purposes such as locating an earthquake's epicenter.
This example illustrates the application of the present work to the
problem of detecting seismic waves \cite{TaylorMeyer10}.

Figure \ref{fig:signalsec2im} displays three images. We extract
patches of size $5 \times 5$. Here, the patches are not maximally
overlapping: we collect every third patch in the horizontal and
vertical directions for images D and E, while we collect every fifth
patch in each direction for image F. This results in patch-sets of
size $42\times 42$ for images D and E, and of size $48 \times 48$ for
image F. As before, the color of a pixel in the images encodes the local
variance within the patch centered at that pixel.
\subsection{Projections of the patch-sets}
Figure \ref{fig:pcasec2} shows the projections of each of the six
patch-sets.  Distances in Figure \ref{fig:pcasec2} correspond to the
normalized distance $\rho$. We observe that patches with high variance
(red-orange) appear to be scattered all over $\real^d$. These patches
correspond to regions where the image intensity varies
rapidly. Patches with low variance (blue-green), which correspond to
regions where the signal is smooth and varies very little, tend to be
concentrated along one-dimensional curves (for time series) and
two-dimensional surfaces (for images).  These visual observations can
be confirmed when computing the actual mutual distances between
patches (data not shown).

The organization of the patches in the patch-set can be explained
using simple arguments. Let us assume that the sequence $\{x_n\}$
corresponds to the sampling of an underlying differentiable function
$x(t)$, and assume that $x'(t)$, the derivative of $x(t)$, remains
small over the interval of interest. In this case, if two patches
$\bx_n$ and $\bx_m$ overlap significantly -- i.e. $|n -m|$ is small --
then they will be close to one another in $\real^d$. Indeed, the
values of the coordinates of patches $\bx_n$ and $\bx_m$ will be very
similar, since the signal $x(t)$ varies slowly. In principle, if the
sampling is fast enough, the patches should lie along a
one-dimensional curve in $\real^d$. By the same argument, when $x(t)$
exhibits rapid changes, the magnitude of the derivative, $|x'(t)|$,
can be very large, and therefore temporally neighboring patches are
not guaranteed to be spatial neighbors in $\real^d$. This argument
allows us to understand the  distribution of the patches in the signal
B, or the image F.

Instead of characterizing patches according to the local smoothness of
the underlying function, we can also analyze the distribution of the
patches according to the function's local frequency information. This
will help us understand the structure of the patch-set for signal~A.%
\begin{figure}[H]
\centerline{
  \includegraphics[width=.32\textwidth,height=.215\textheight]{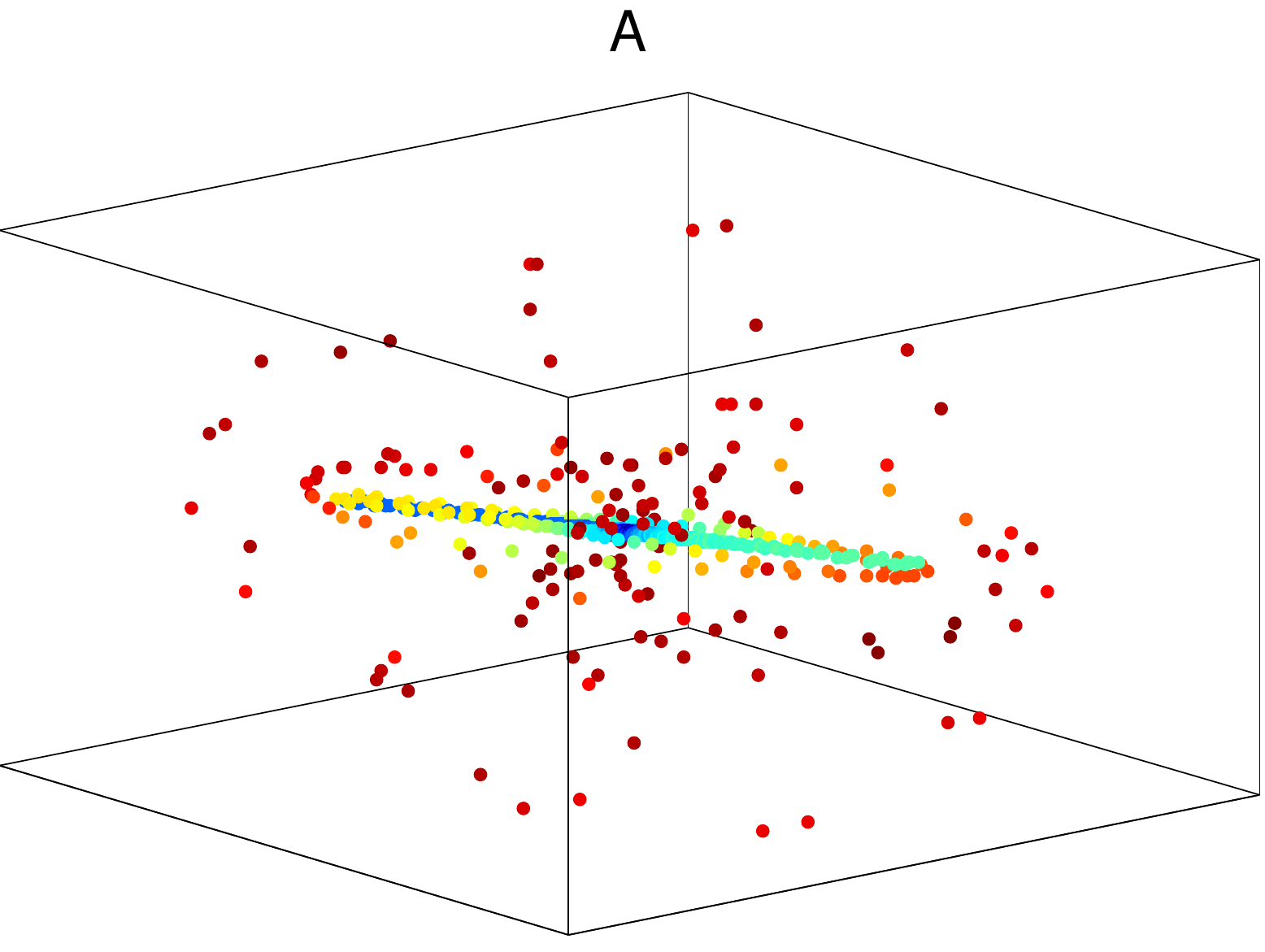}
  \includegraphics[width=.32\textwidth,height=.215\textheight]{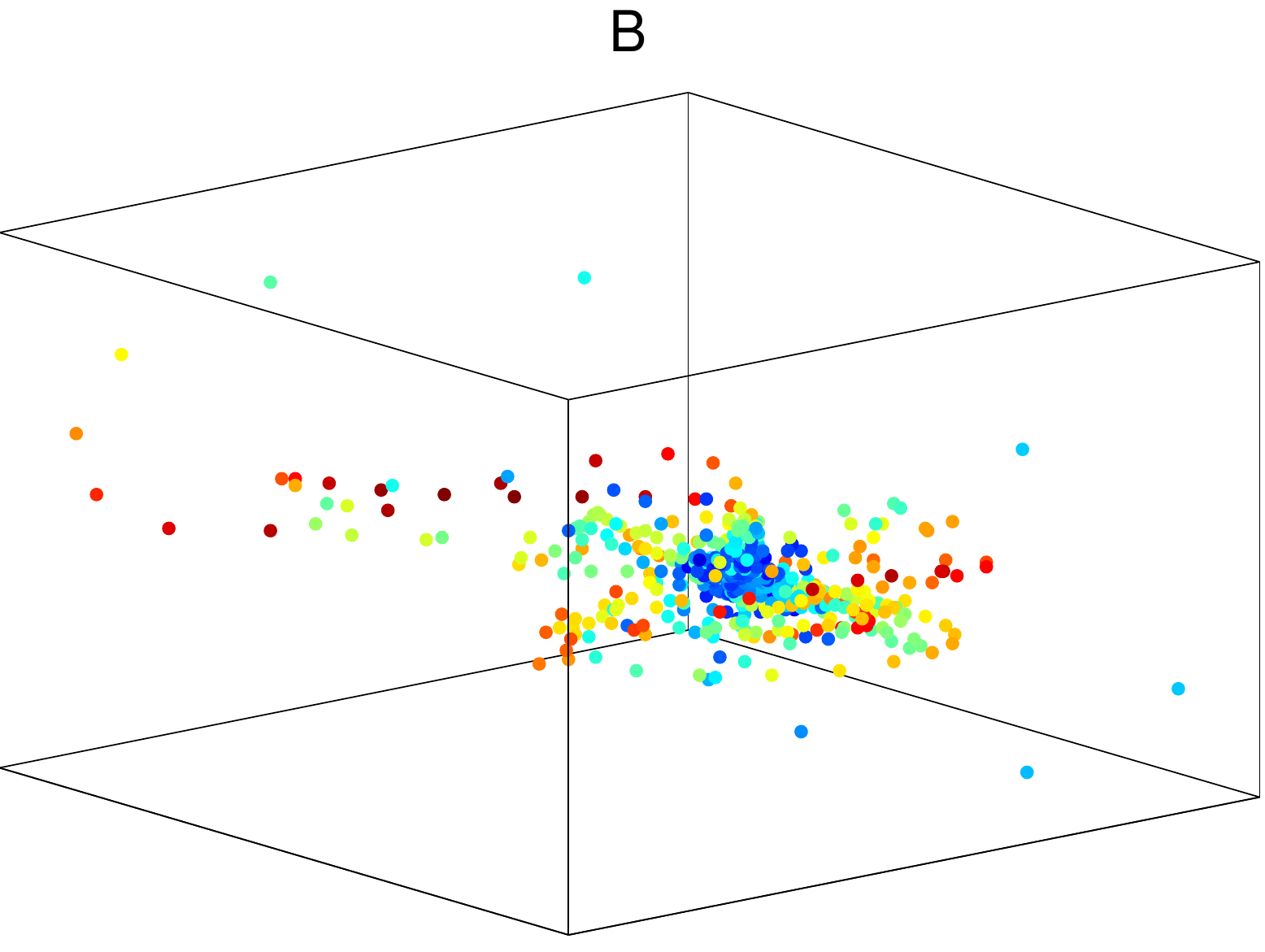}
  \includegraphics[width=.32\textwidth,height=.215\textheight]{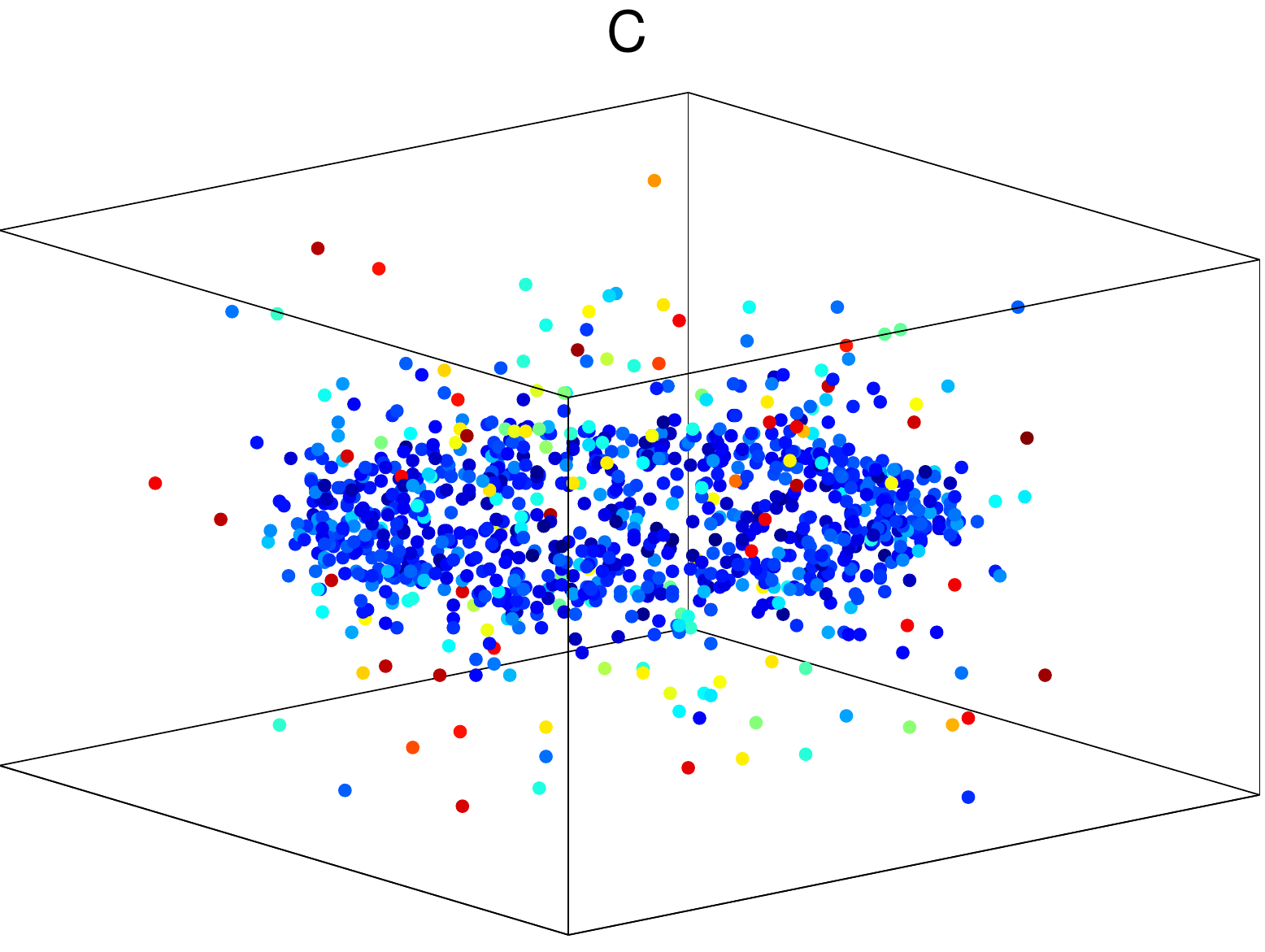}\\
}
\centerline{
    \includegraphics[width=.32\textwidth,height=.215\textheight]{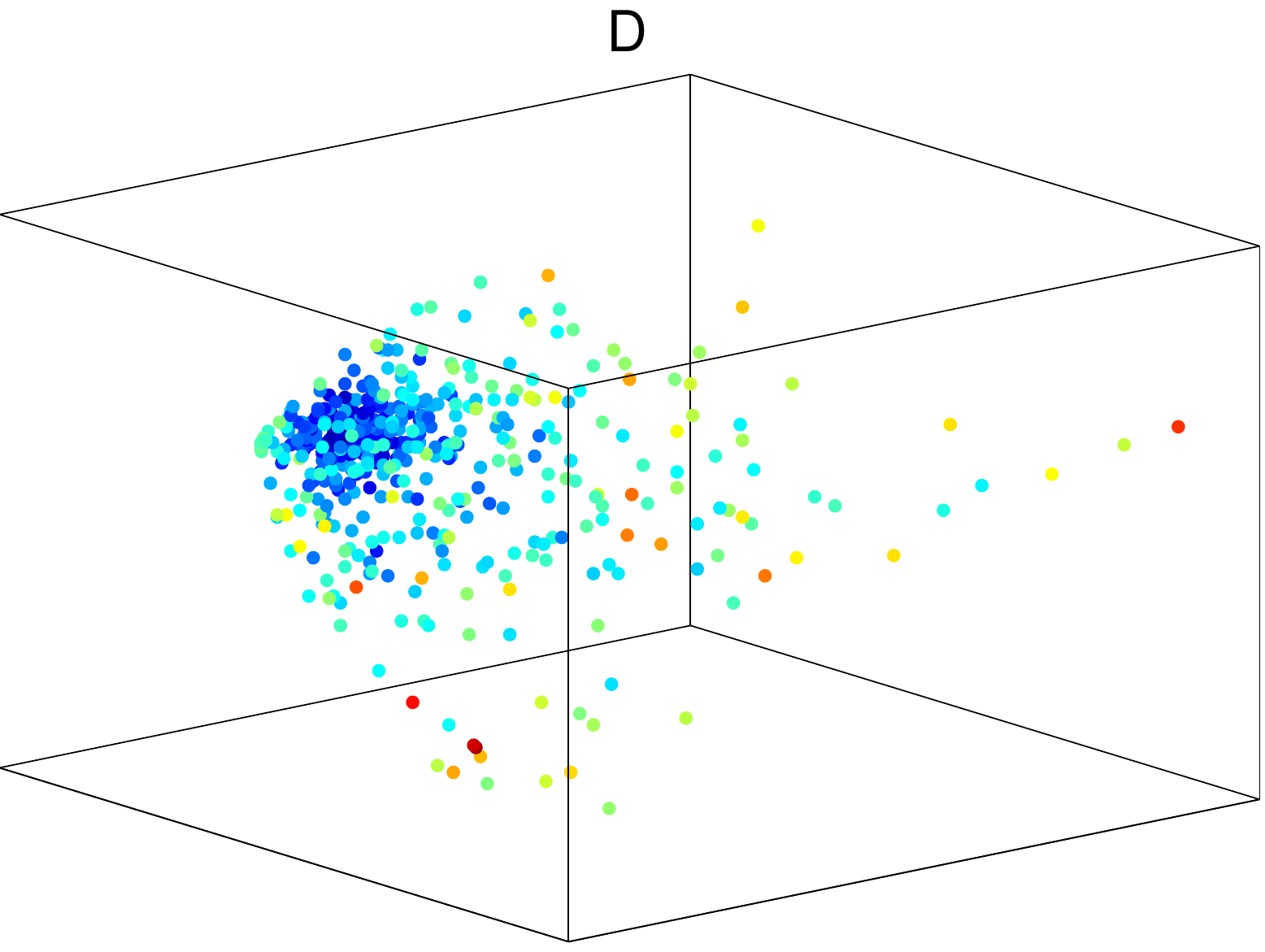}
    \includegraphics[width=.32\textwidth,height=.215\textheight]{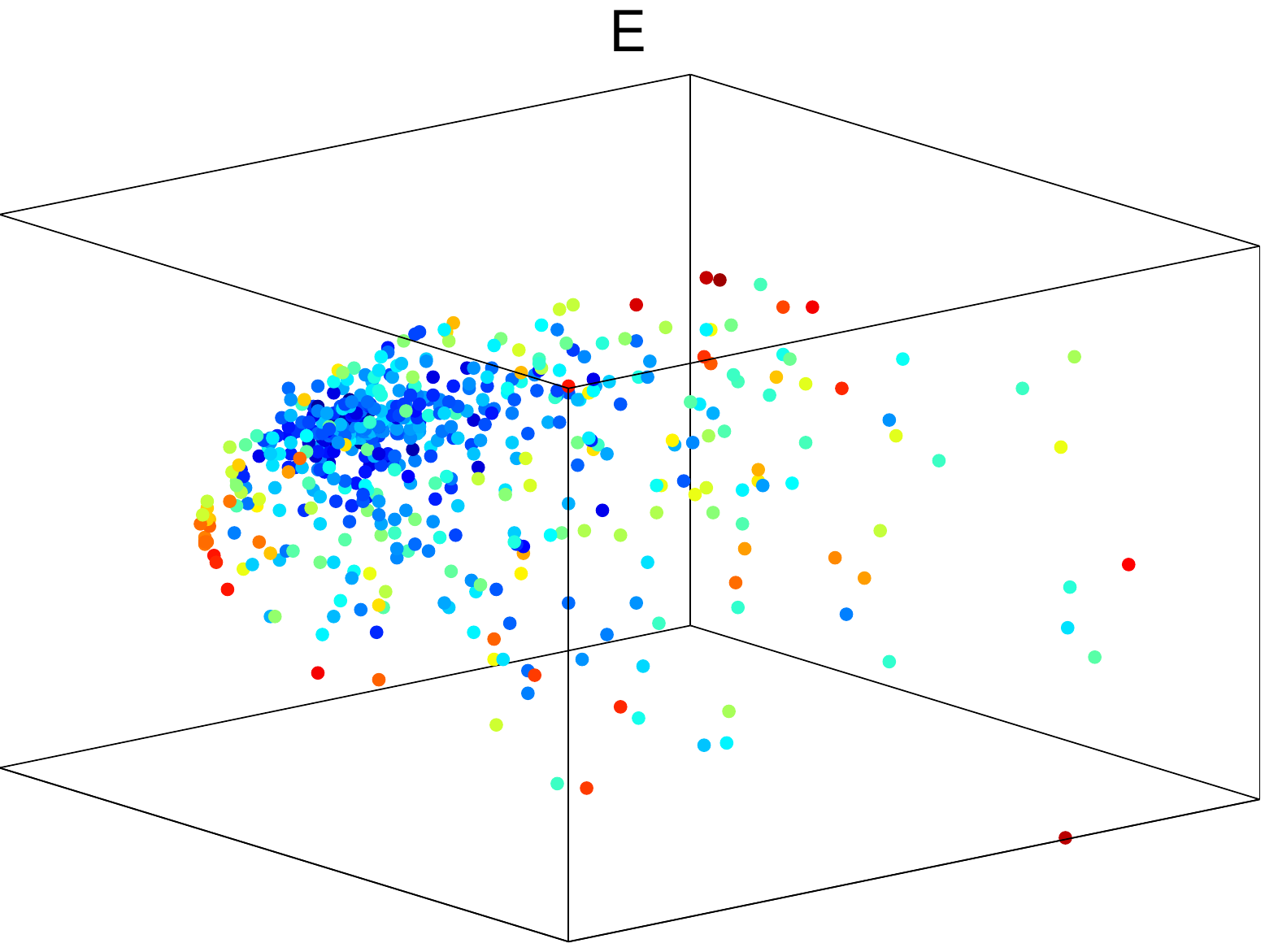}
    \includegraphics[width=.32\textwidth,height=.215\textheight]{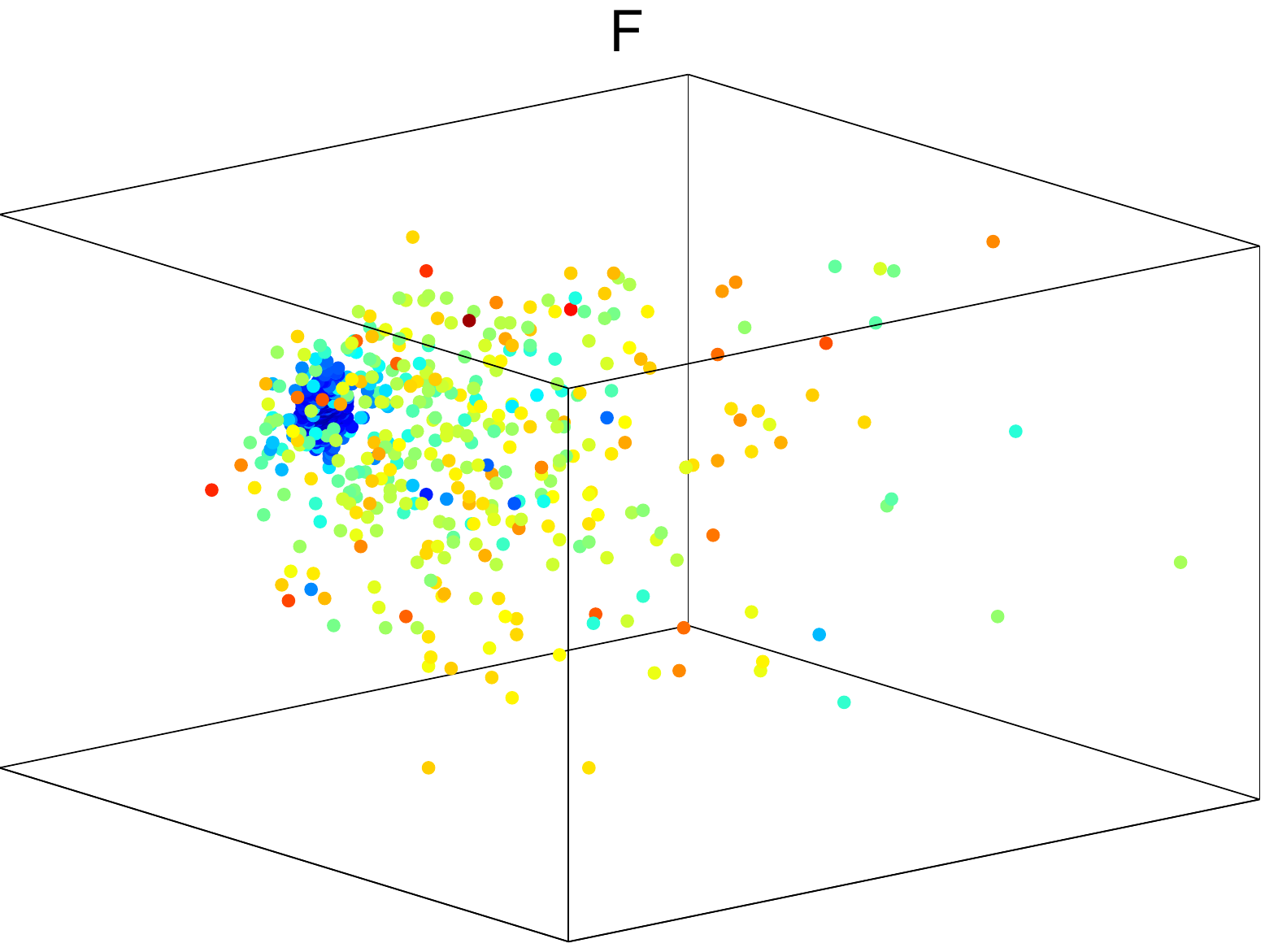}
}
  \caption{Principal component analysis of patch-sets associated with
    the time series A-C and the images D-F. Each point represents a
    patch; the  color encodes the variance within the patch (see Figures
    \ref{fig:s2ts} and \ref{fig:signalsec2im}.) \label{fig:pcasec2}}
\end{figure}
\noindent For this type of signal, it is appropriate to measure the
distance between the normalized patches, $\bx_n/\|\bx_n\|$ and
$\bx_m/\|\bx_m\|$ after computing the Fourier transforms (a simple
rotation of $\real^d$) of the respective patches. This process is akin
to the concept of time-frequency analysis. We expect that regions of
the signal with little local frequency changes will cluster in
$\real^d$: this is the case for the blue patches of the chirp A. On
the contrary, when the local frequency content changes rapidly (as in
the middle of the chirp A), the corresponding (red/orange) patches
will be at a large distance of one another in $\real^d$ (see Figure
\ref{fig:pcasec2}-A).

Finally, we can try to understand the organization of the patch-set for
the seismogram C. Let us assume that $\{x_n\}$ is obtained by sampling a
function of the form $x(t) = b(t) + w(t)$, where $w(t)$ represents a
seismic wave and $b(t)$ represents baseline activity. We can
expect that $w(t)$ is a rapidly oscillating transient with a rich
frequency content, while $b(t)$ is varying slowly. Now consider two
patches $\bx_n$ and $\bx_m$. It can be shown that if both patches
$\bx_n$ and $\bx_m$ are extracted from the baseline function, $b(t)$, and do
not contain any part of the energetic transient, then their mutual
distance is expected to be small. In addition, if $\bx_n$ contains
part of the energetic transient $w(t)$ and $\bx_m$ is extracted from the
baseline $b(t)$, then their mutual distance is expected to be large. Finally,
if $\bx_n$ and $\bx_m$ are composed of two different parts 
of $w(t)$,
then their mutual distance is also expected to be large (provided the
patches are sufficiently long and $w(t)$ oscillates sufficiently
fast). More generally, one can expect that two patches extracted from two
different energetic transients $w_1(t)$ and $w_2(t)$ will be at a
large distance from one another \cite{TaylorMeyer10}.%
\begin{figure}[H]
\centerline{
  \includegraphics[width=.3\textwidth,height=.235\textheight]{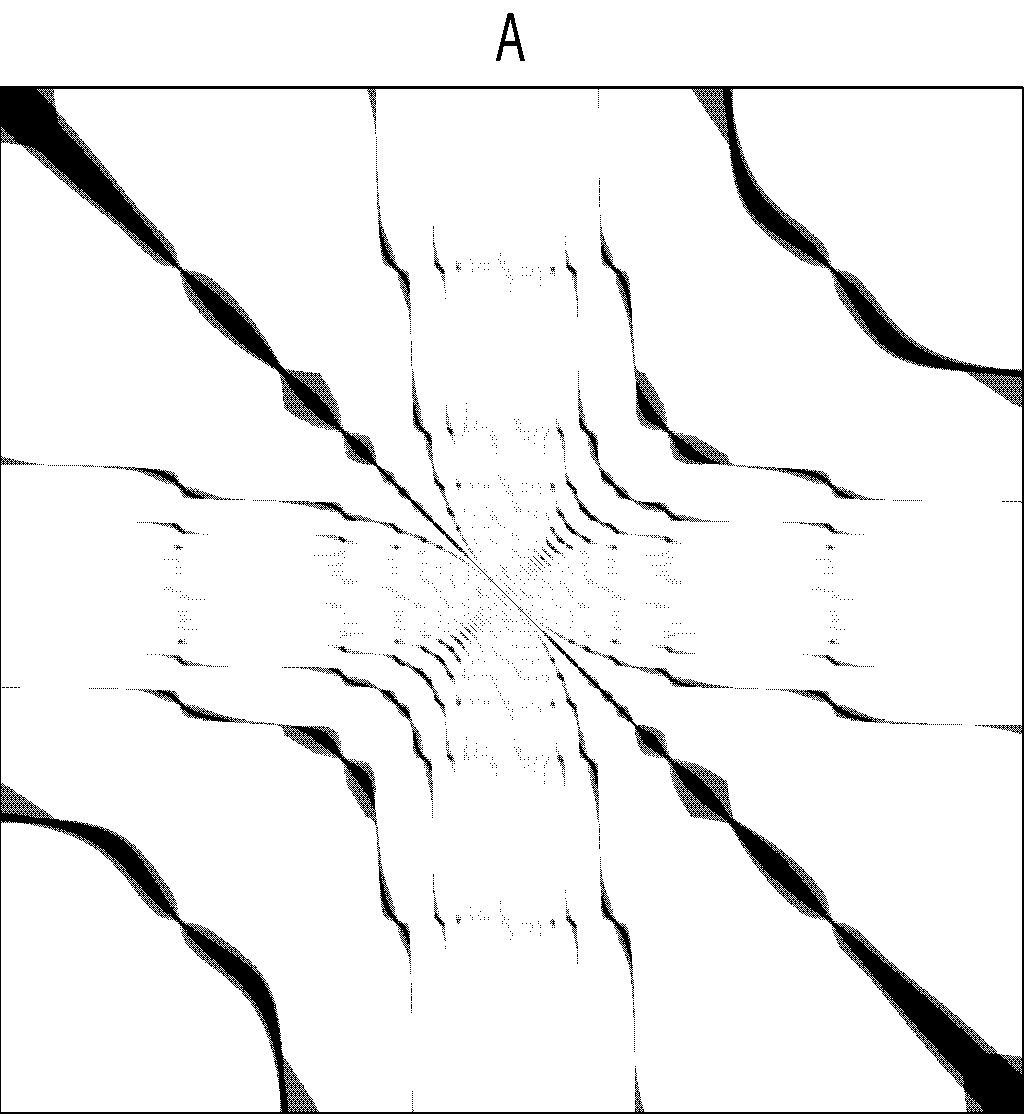}
  \includegraphics[width=.3\textwidth,height=.235\textheight]{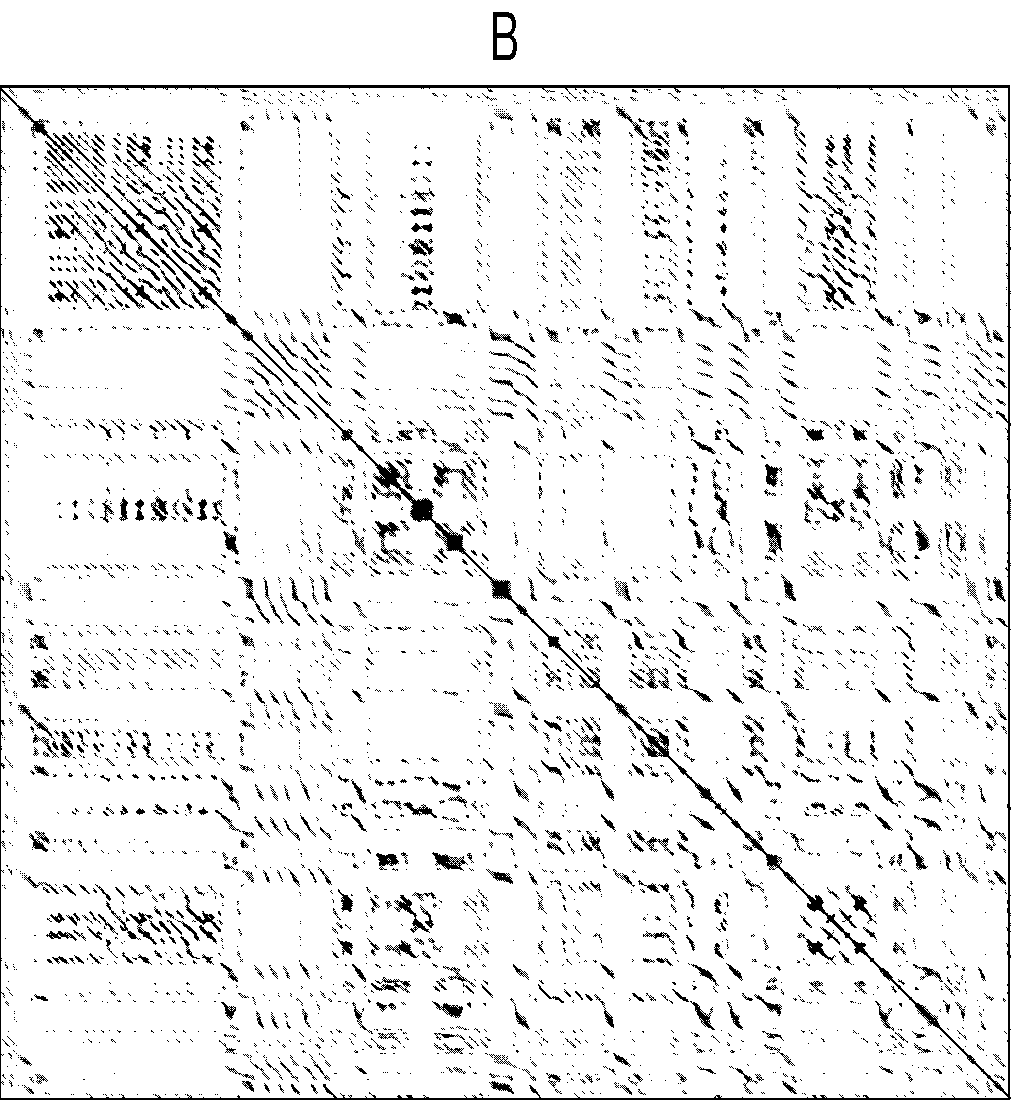}
  \includegraphics[width=.3\textwidth,height=.235\textheight]{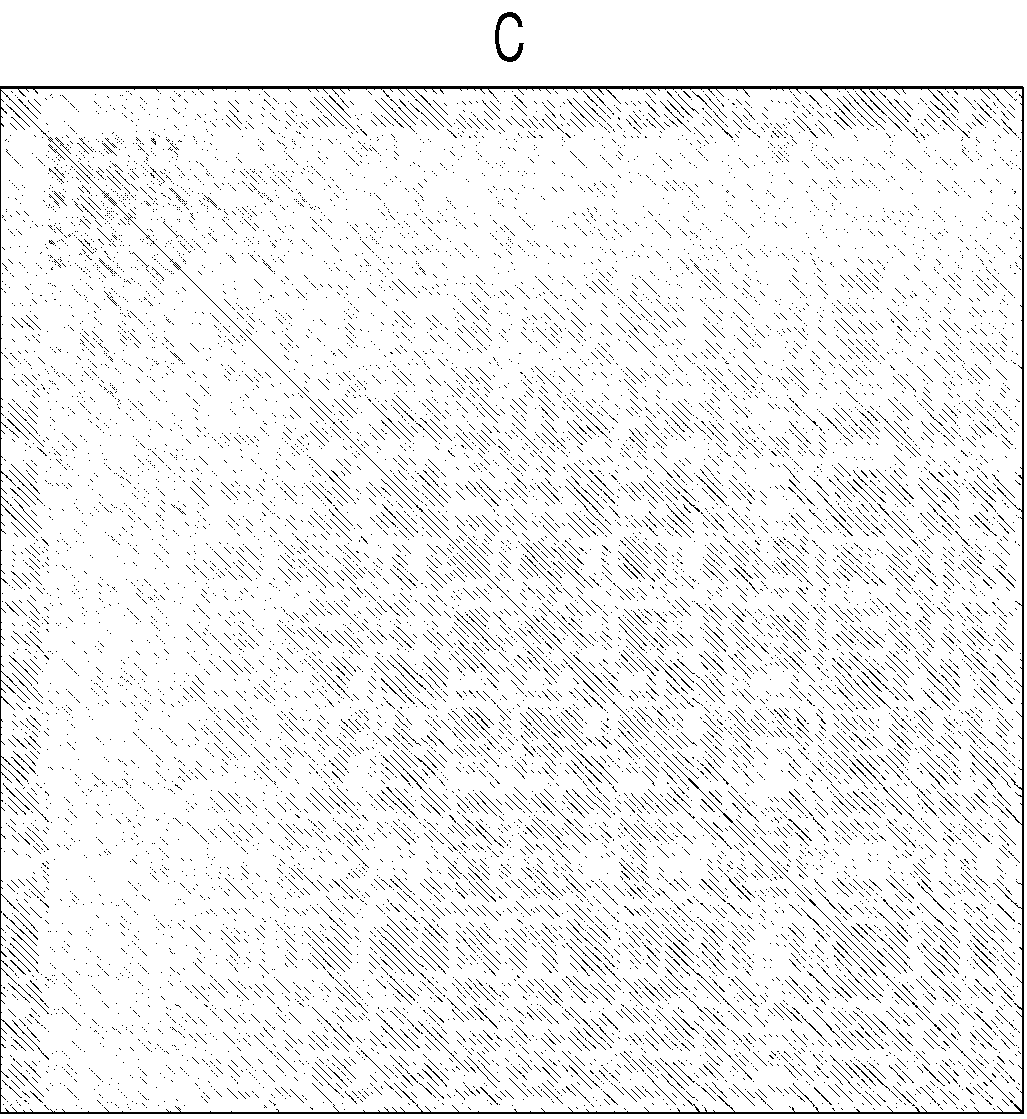}
}
\centerline{
    \includegraphics[width=.3\textwidth,height=.235\textheight]{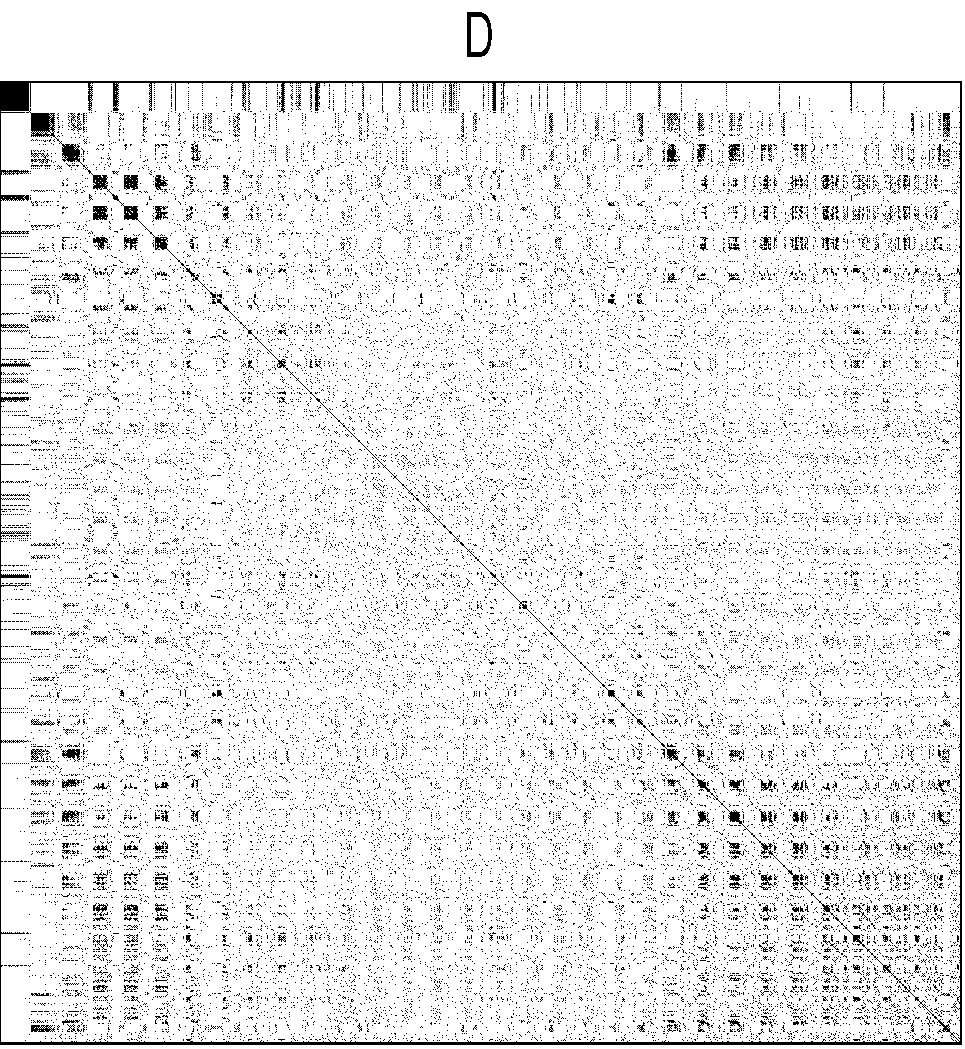}
    \includegraphics[width=.3\textwidth,height=.235\textheight]{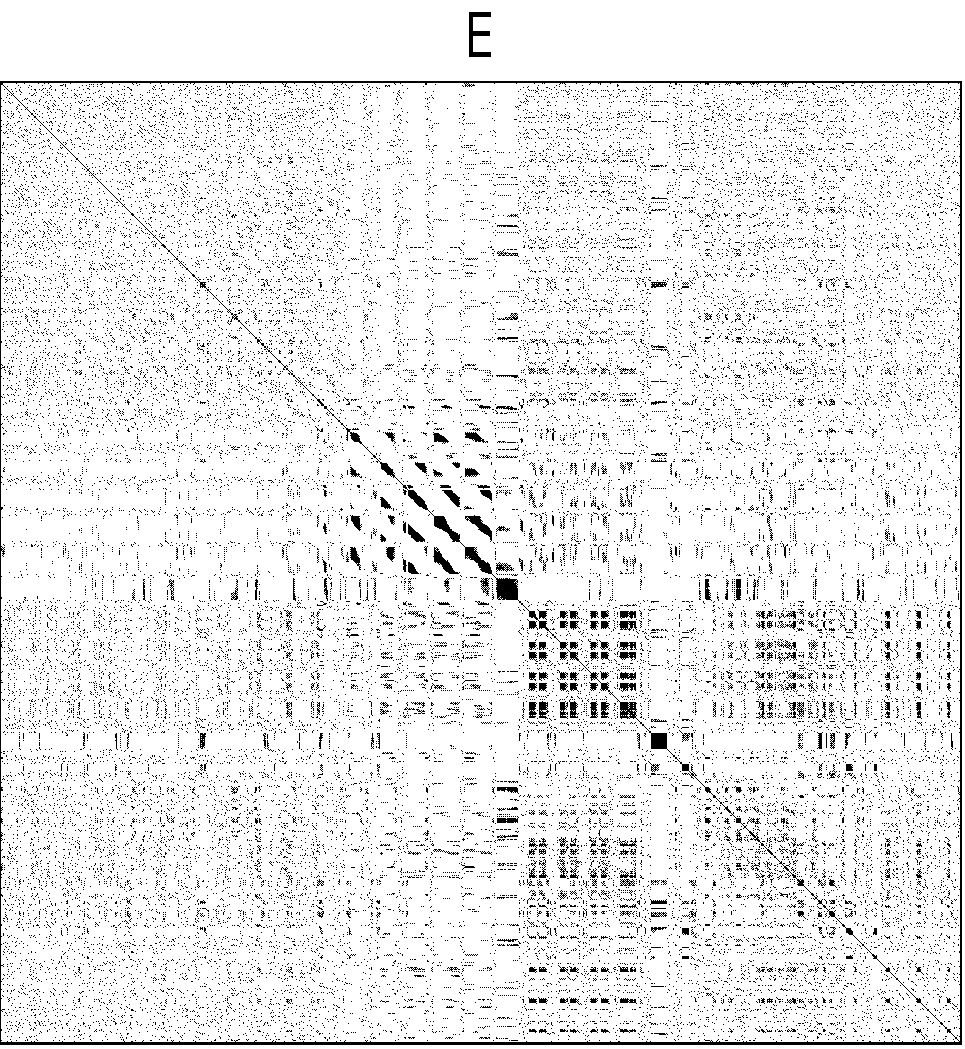}
    \includegraphics[width=.3\textwidth,height=.235\textheight]{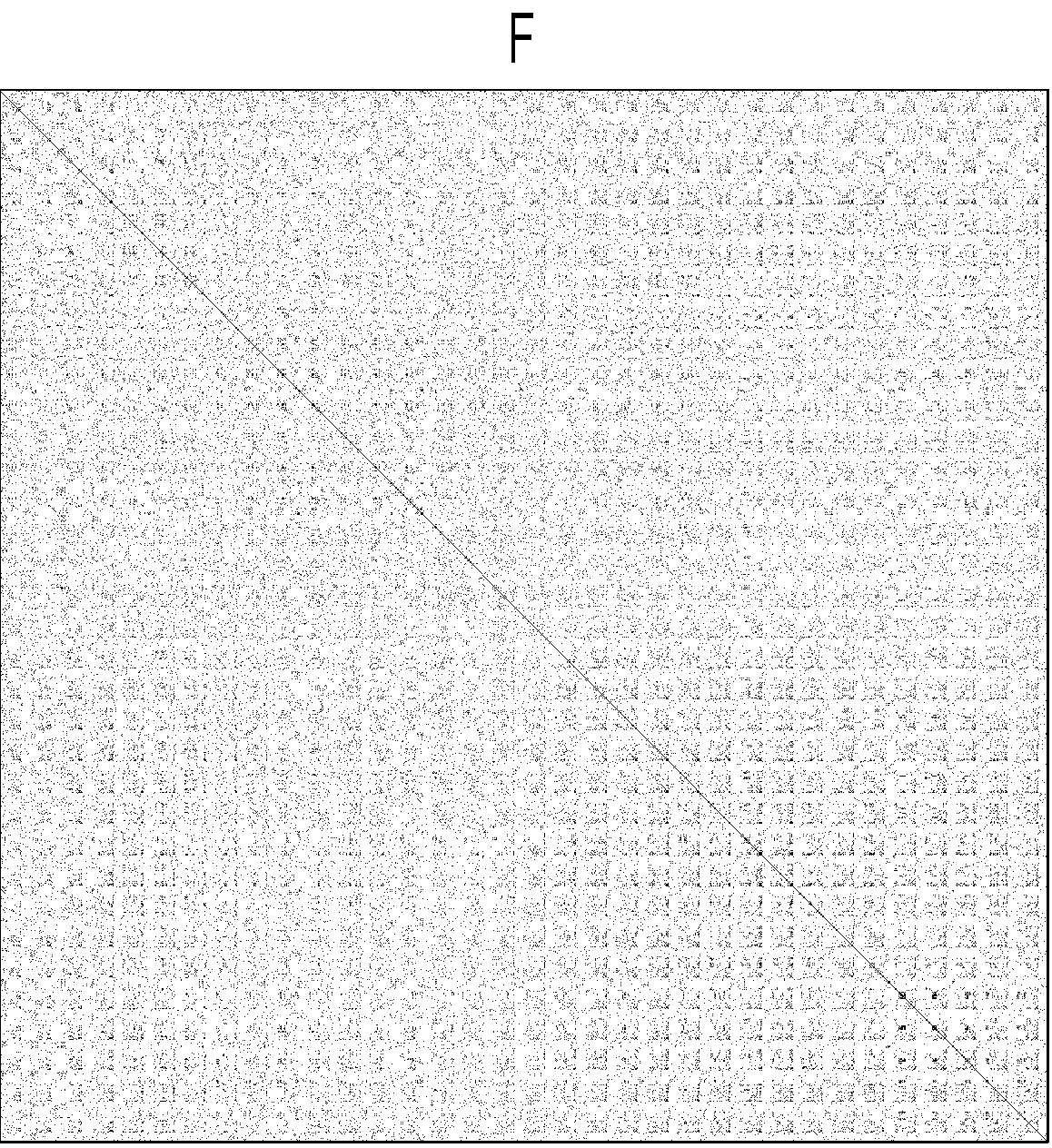}
}
  \caption{The weight matrices $\bo W$ associated with signals A-F are
    displayed as images: $w_{n,m}$ is encoded as a grayscale
    value: from white ($w_{n,m} =0$) to black $(w_{n,m} =1)$. Dark structures
    along the diagonal of the  $\bo W$ matrix associated with the
    time series A-C indicate that patches that are close in time are also
    close in $\real^d$.
    \label{fig:Wssec2}}
\end{figure}
\noindent
\subsection{From the patch-set to the patch-graph: the weight matrix $\bo W$}
\label{ssec:theweightmatrix}
Having gained some understanding about the organization of the
patch-set, we now move to the structure of the patch-graph and its
weight matrix $\bo W$.  Figure \ref{fig:Wssec2} displays the weight
matrices built from the patch-sets that correspond to the time series
A-C (top) and the images D-F (bottom). Note that when processing time
series A-C, the columns (or equivalently, the rows) of $\bo W$ can be
identified with temporally-ordered time-samples.  Therefore, a large
main diagonal in the weight matrix correspond to patches that are
close in time and also close in $\real^d$.  For instance, consider the
time series A and its associated weight matrix.  The dark bands near
the top-left and bottom-right of the diagonal correspond to the
slowly-varying oscillations near the beginning and end of the chirp
(see Figure \ref{fig:s2ts}).  Indeed, large entries in the diagonal of
$\bo W$ is a direct consequence of relatively little variation in the
time series.  On the other hand, the columns of $\bo W$ corresponding
to portions of the time series that exhibit rapid local changes
(center of Figure \ref{fig:Wssec2}-A) tend to lack such prominent
diagonal structures.  For such regions of the matrix $\bo W$, the
entries are no longer concentrated along the diagonal, and are
shattered across all rows and columns (see the center of $\bo W$ in
Figure \ref{fig:Wssec2}-A; the columns correspond to the fastest
oscillations at the center of the chirp).  The large distances between
these patches are also apparent in the lighter pixel intensities,
representing relatively smaller edge-weights. Note that the patches
extracted from the seismic data are very far apart, as indicated by
the much lighter shades of gray.  It is more difficult to relate the
ordering of an image's weight matrix to locations in the image
itself. For the weight matrices associated with images D-F, the
ordering of the columns is equivalent to the order in which the
patches were collected from the image plane: first left-to-right, then
top-to-bottom (similar to a raster scan, or how one would read pages
of a book). Hence, periodically repeating dark blocks in the weight
matrices associated with images D-F are indicative of image patches
that are close in $\real^d$ and close in the image-plane as a result
of relatively little change in the image's local content. For example,
the dark square-like structure that appears near the main diagonal of
$\bo W$ in Figure \ref{fig:Wssec2}-E, and which spans roughly one
fifth of the number of columns, corresponds to the mirror's smooth,
light border in image E.
\subsection{Summary of the experiments and our plan of attack}
The experiments in sections \ref{ssec:imsigexample} and
\ref{ssec:theweightmatrix} highlight the fact that regions of an
image, or of a signal, that contain anomalies (e.g. singularities,
edges, rapid changes in the frequency content, etc.) are scattered all
over the patch-set, making their detection and identification
extremely difficult (see Figures \ref{fig:pcasec2}-A and
\ref{fig:pcasec2}-F). In contrast, patches from smooth regions appear
to cluster along low dimensional curves or surfaces. Because the
anomalous patches are usually the most interesting ones, we need to
find a new parametrization of the patch-set that concentrates the
anomalies and separate them from the smooth baseline part of the
image. The structure of $\bo W$ in the ``rough regions'' suggest that
patches that contain anomalies appear to be very well connected (see
the center of Figure \ref{fig:Wssec2}-D, which corresponds to the boa
on the hat of Lenna). This concept can be quantified by studying how
fast a random walk would reach all patches in these rough regions of
$\bo W$, and suggest that we should consider studying the hitting
times associated with a random walk on the patch-graph. In the next
section we formalize this concept and propose a parametrization of the
patch-set in terms of commute time. A theoretical analysis of this
approach is provided in section \ref{ssec:theory}
\section{Parametrizing the patch-graph}
\label{sec:patchgraphparam}
\subsection{The  fast and slow patches}
We first introduce the concept of {\em fast} and {\em slow} patches.
We have noticed that patches that contain anomalies (discontinuities,
edges, fast changes in frequency, etc.) in the original signal lead to
regions of the matrix $\bo W$ where the nonzero entries are scattered
all around. We call such patches {\em fast patches} because, as we
will see in the following, a random walk will diffuse extremely fast
in such regions of the patch-graph. Conversely smooth regions of the
signal lead to {\em slow patches} that are associated with a small
number of large entries in $\bo W$, which are concentrated
along the diagonal. A random walk initialized in the slow patch region
of the patch-graph will diffuse very slowly.
\subsection{A better metric on the graph: the commute time}
\label{sssec:commutetimes}
As explained previously, we propose to replace the Euclidean distance,
which leads to the scattering of the fast patches seen in Figure
\ref{fig:pcasec2} by a notion of distance that quantifies the speed at
which a random walk diffuses on the patch-graph. We propose to use the
commute time. Parametrizing the graph using its commute time distance
is closely related to parametrizing the graph using its diffusion
distance \cite{Coifman06a,Jones08} (see Section \ref{sss:sdcommute}).
Although the works \cite{Buades05,SingerBoaz,Szlam08} do not
explicitly embed vertices of the patch-graph based on the diffusion
distance, they also study a random walk on the patch-graph, and define
the diffusion distance in terms of this walk.  In these studies, noise
is removed by evolving the diffusion process for a small time. A
detailed comparison of our approach with the seminal work of
\cite{SingerBoaz} is provided in section \ref{ssec:relatedwork}.  We
note that the notion of first-passage time associated with a diffusion
(which is equivalent to the hitting time associated with a random
walk) has been used extensively to characterize the geometry of
complex networks, and random media (e.g. \cite{Benichou10,Condamin07}
and references therein). It is therefore natural to analyze the
patch-set with this distance.
\subsubsection{A random walk on the patch-graph}
In order to define the commute time between two vertices, we first
need to define a random walk on the graph. In our problem, the random
walk does not correspond to a physical process, but will lead to a
notion of global proximity between patches. We consider a first-order
homogeneous Markov process, $Z_k$, defined on the vertices of the
patch-graph, $\Gamma$, and evolving with the transition probability
matrix $\bo{P}$ given by
\begin{equation}
  \bo{P}_{n,m} = \prob(Z_{k+1} = \bo x_m|Z_{k} = \bo x_n)  \triangleq
  \frac{w_{n,m}}{\sum_{l}w_{n,l}} = \frac{\bo{W}_{n,m}}{\bo{D}_{n,n}}.
\end{equation}
Consider a slow patch $\bx_n$ extracted from a regular/smooth part of
the signal. If the random walk starts at $\bx_n$, then it can only
travel along the low-dimensional structure that corresponds to the
temporal neighbors of $\bx_n$ (see e.g. Figure \ref{fig:pcasec2}-A.)
The existence of this narrow bottleneck is also visible in the
$\bo{W}$ matrix (see Figure \ref{fig:Wssec2}-A): a random walk
initialized within the fat diagonal of the upper left corner of
$\bo{W}$ (the low frequency part of the chirp) is trapped in this
region of the matrix, and can only travel along this fat diagonal. As
a result, it will take many steps for the random walk to reach another
slow patch $\bx_m$ if $| n-m |$ is large. This notion can be
quantified by computing the average \textit{hitting-time},
$h(\bx_n,\bx_m)$, which measures the expected minimum number of steps
that it takes for the random walk, started at vertex $\bx_n$, to reach
the vertex $\bx_m$ \cite{Bremaud99}
\begin{equation*}
  h(\bx_n,\bx_m) = \E_{n} \min\{j\geq 0 : Z_j = \bx_m\},
\end{equation*}
where the expectation $\E_{n}$ is computed when the random walk is
initialized at vertex $\bx_n$, i.e. when $Z_0 =\bx_n$.  The
commute time \cite{Bremaud99}: provides a symmetric version of $h$,
and is defined by
\begin{equation}
  \kappa(\bx_n, \bx_m) = h(\bx_n,\bx_m)  + h(\bx_m,\bx_n).
  \label{eqn:commutetimesdefined}
\end{equation}
\subsubsection{Spectral representation of the commute time}
\label{sss:sdcommute}
When the random walk is reversible and the graph is fully connected,
the commute time can be expressed using the eigenvectors
$\phi_1,\ldots,\phi_N$ of the symmetric matrix
\begin{equation*}
  \bo{D}^{-1/2}\bo{W}\bo{D}^{-1/2} = \bo{D}^{1/2}\bo{P}\bo{D}^{-1/2}.
\end{equation*}
The corresponding eigenvalues can be labeled such that
$-1<\lambda_N\leq\ldots\leq\lambda_2<\lambda_1=1$.  Each eigenvector
$\phi_k$ is a vector with $N$ components, one for each vertex of the
graph. Hence, we write
\begin{equation*}
  \phi_k = 
  \begin{bmatrix}
    \phi_k(\bx_1) &\phi_k(\bx_2) &\ldots &\phi_k(\bx_N)
  \end{bmatrix}
  ^T,
\end{equation*}
to emphasize the fact that we consider $\phi_k$ to be a function
sampled on the vertices of $\Gamma$. The commute time can be expressed as
\begin{equation}
  \kappa(\bx_n, \bx_m) = \sum_{k =
    2}^N\frac{1}{1-\lambda_k}\left(\frac{\phi_k(\bx_n)}{\sqrt{\pi_n}}
    - \frac{\phi_k(\bx_m)}{\sqrt{\pi_m}}\right)^2, 
  \label{eqn:commutetime}
\end{equation}
where $\pi_n = \sum_{m = 1}^Nw_{n,m}/\sum_{j,l = 1}^N w_{j,l}$ is the
stationary distribution associated with the transition probability
matrix $\bo{P}$ \cite{Lovasz93,Shen2008886}.  
\subsubsection{The relationship to diffusion maps}
\label{sssec:diffusionconnect} 
The {\em diffusion distance} \cite{Coifman06a} between vertices $\bo
x_m$ and $\bo x_n$, $D_t(\bo x_m,\bo x_n)$, measures the distance 
between the transition probability distributions -- computed at time
$t$ -- of two random walks initialized at $\bx_n$ and $\bx_m$,
$\sum_{l=1}^N |\bo{P}^{(t)}_{n,l} -\bo{P}^{(t)}_{m,l}|^2$. The diffusion distance can
also be decomposed in terms of the eigenvectors $\phi_k$
\cite{Coifman06a},
\begin{equation}
  D^2_t(\bo x_m,\bo x_n)=
  \frac{1}{V}\sum_{k =
    2}^N\lambda_k^{2t}\left(\frac{\phi_k(\bx_m)}{\sqrt{\pi_m}} -
    \frac{\phi_k(\bx_n)}{\sqrt{\pi_n}}\right)^2,
  \label{eqn:diffusiondist}
\end{equation}
where $V=\sum_{m',n'} w_{m',n'}$ is the volume of the graph.  It follows
that the commute time is a scaled sum of the squares of diffusion
distances computed at all times,
\begin{equation}
  \kappa(\bx_m, \bx_n) = V \sum_{t = 0}^\infty D^2_{t/2}(\bo x_m,\bo x_n).
  \label{alltimes}
\end{equation}
The significance of this equation is that the commute time
includes the short term evolution ($t \approx 0$) as well as the
asymptotic regime ($t \rightarrow \infty$) of
the random walk. We will come back to this analysis in section \ref{ssec:pgm}.
\subsection{Parametrizing the patch-graph}
\label{ssec:paramthegraph}
Equation (\ref{eqn:commutetime}) suggests the following
embedding $\Psi$ of the patch-graph $\Gamma$ into $\real^{N-1}$,
\begin{equation}
  \Psi : \bx_n  \longrightarrow\frac{1}{\sqrt{\pi_n}}
  \begin{bmatrix}
    \frac{\phi_2(\bx_n)}{\sqrt{1-\lambda_2}} &
    \frac{\phi_3(\bx_n)}{\sqrt{1-\lambda_3}} & \ldots &
    \frac{\phi_N(\bx_n)}{\sqrt{1-\lambda_N}}
  \end{bmatrix}
  ^T,\qquad n = 1,2,\ldots,N.
  \label{eqn:maptoN1}
\end{equation}
If we agree to measure the distance on the graph $\Gamma$ using the
square root of the commute time, then the mutual Euclidean distance
after embedding is equal to the original distance on the graph,
\begin{equation}
  \|\Psi (\bx_n) - \Psi(\bx_m)\| = \sqrt{\kappa (\bx_n,\bx_m)}.
  \label{isometry}
\end{equation}
The result is a direct consequence of (\ref{eqn:diffusiondist}) and
(\ref{alltimes}).  Similar ideas were first proposed in
\cite{Berard94} to embed manifolds and are the foundation of the
parametrizations given in \cite{Belkin03,Coifman06a}.  In practice, we
need not use all the $N-1$ coordinates in the embedding defined by
(\ref{eqn:maptoN1}). Indeed, since
$\lambda_N\leq\cdots\leq\lambda_2<\lambda_1$, we have that
$\tfrac{1}{\sqrt{1-\lambda_N}} \leq\cdots \leq\tfrac{1}{\sqrt{1-\lambda_3}}
\leq\tfrac{1}{\sqrt{1-\lambda_2}}$, and therefore, if we can accept some approximation error, we can use only the first $d'$ coordinates of $\Psi$.  As we will see in section
\ref{ssec:pgm}, this dimension reduction further improves the
separation between slow patches and fast
patches.  In the remaining of the paper we will work with the
embedding of $\Gamma$ into $\real^{d'}$ defined by
\begin{equation}
  \Phi:  \bx_n \longrightarrow
  \frac{1}{\sqrt{\pi_n}}
  \begin{bmatrix}
    \frac{\phi_2(\bx_n)}{\sqrt{1-\lambda_2}} &
    \ldots &
    \frac{\phi_{d'+1}(\bx_n)}{\sqrt{1-\lambda_{d'+1}}}
  \end{bmatrix}
  ^T.
  \label{eqn:lowdparam}
\end{equation}
We note that we can always choose $d'$ such that the embedding $\Phi$
almost preserves the commute time,
\begin{equation}
  \|\Phi (\bx_n) - \Phi(\bx_m)\|^2 \approx \kappa (\bx_n,\bx_m).
  \label{pseudo-isometry}
\end{equation}
In fact, our experiments indicate that this approximation holds for small
values of $d'$.%
\clearpage
\begin{figure}[H]
  \centerline{
    \includegraphics[width=.3\textwidth,height=.2\textheight]{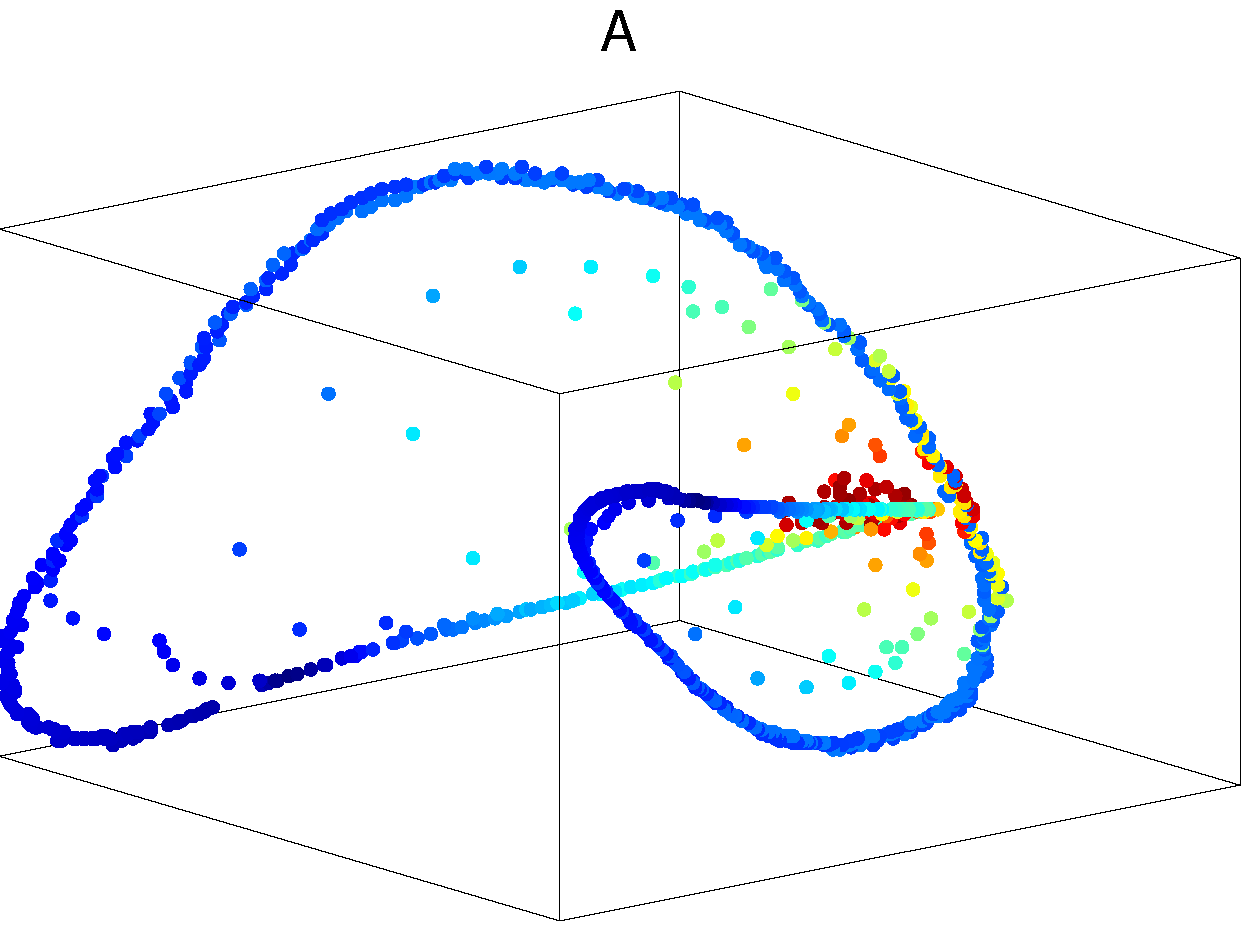} 
    \includegraphics[width=.3\textwidth,height=.2\textheight]{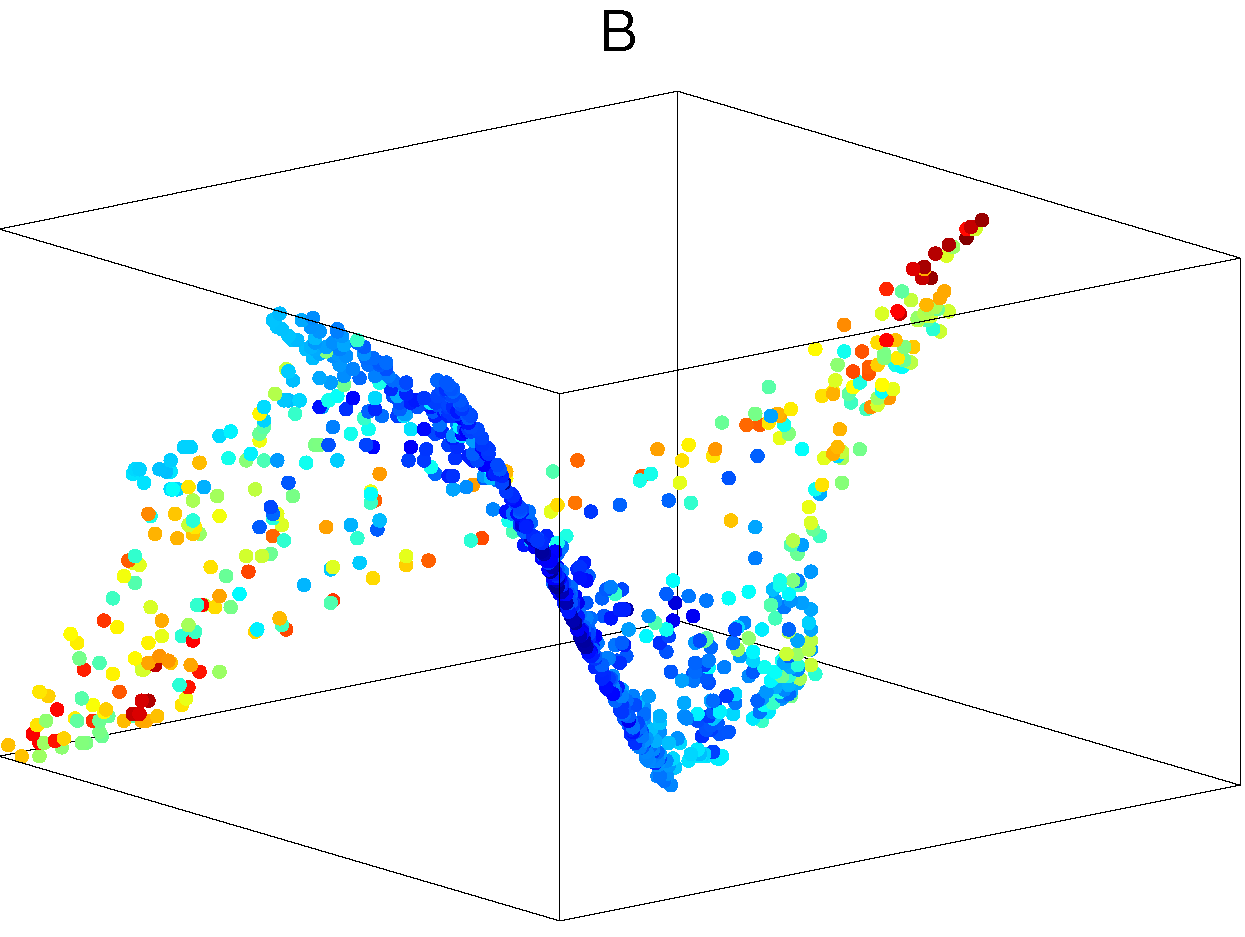} 
    \includegraphics[width=.3\textwidth,height=.2\textheight]{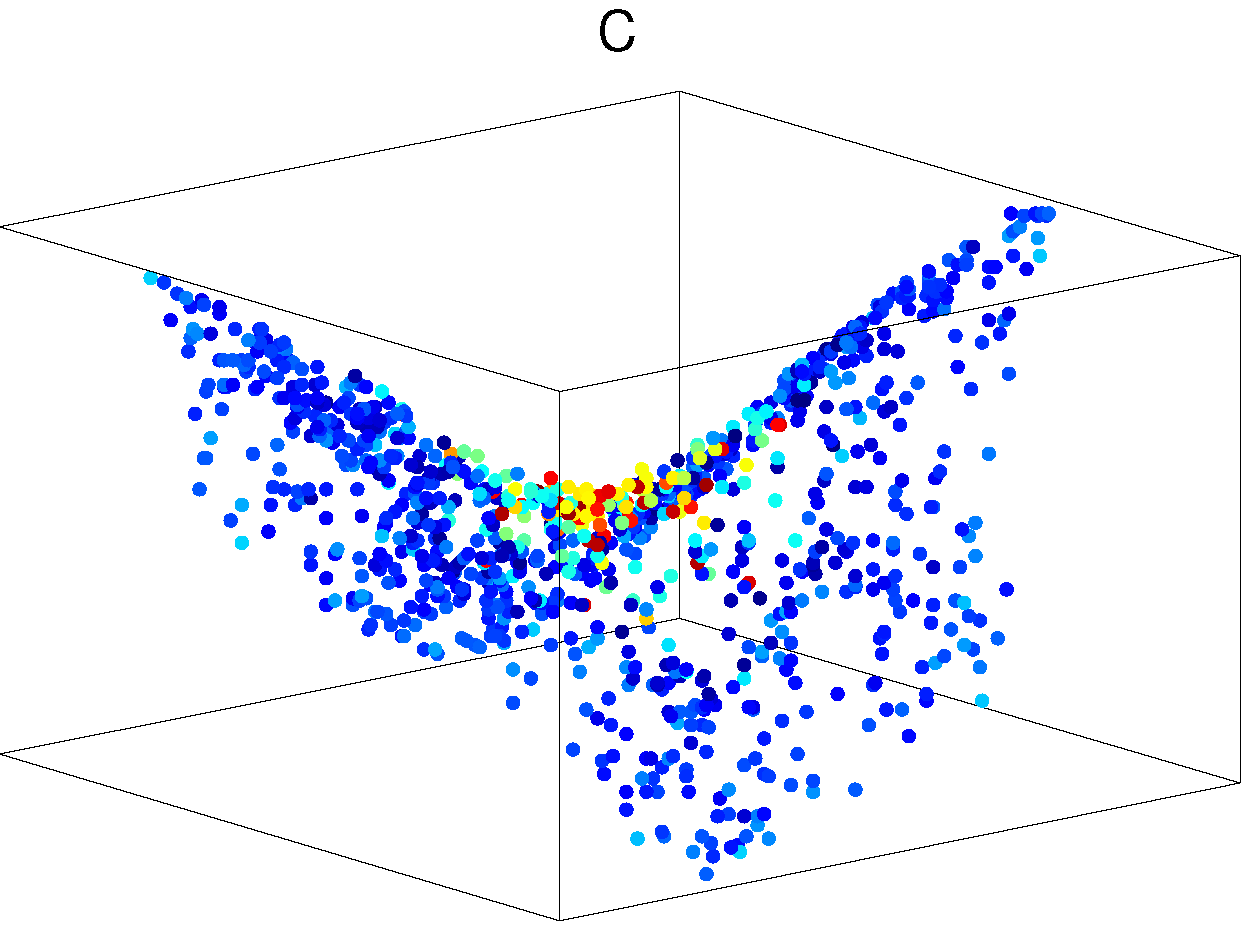} 
}
\centerline{
    \includegraphics[width=.3\textwidth,height=.2\textheight]{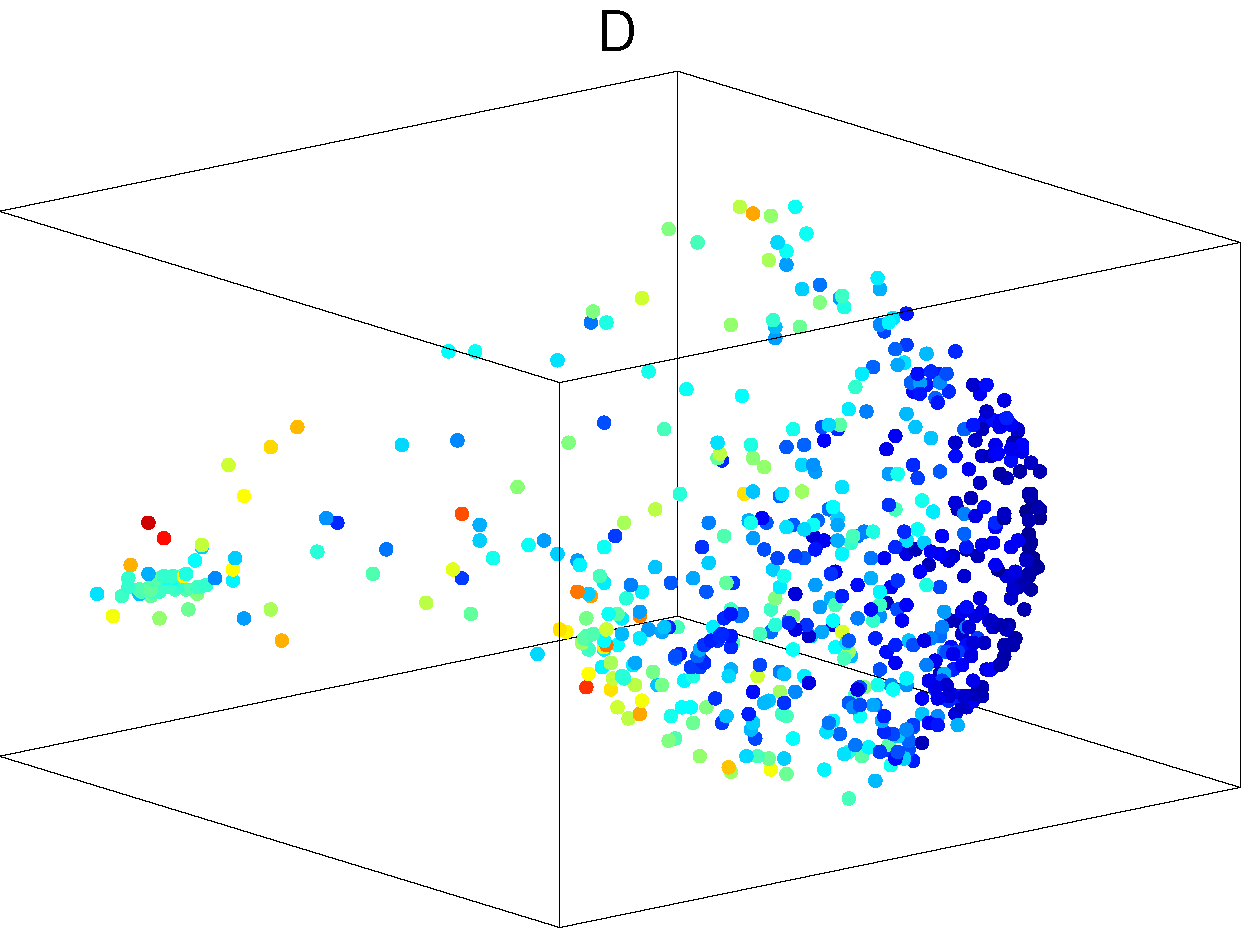}
    \includegraphics[width=.3\textwidth,height=.2\textheight]{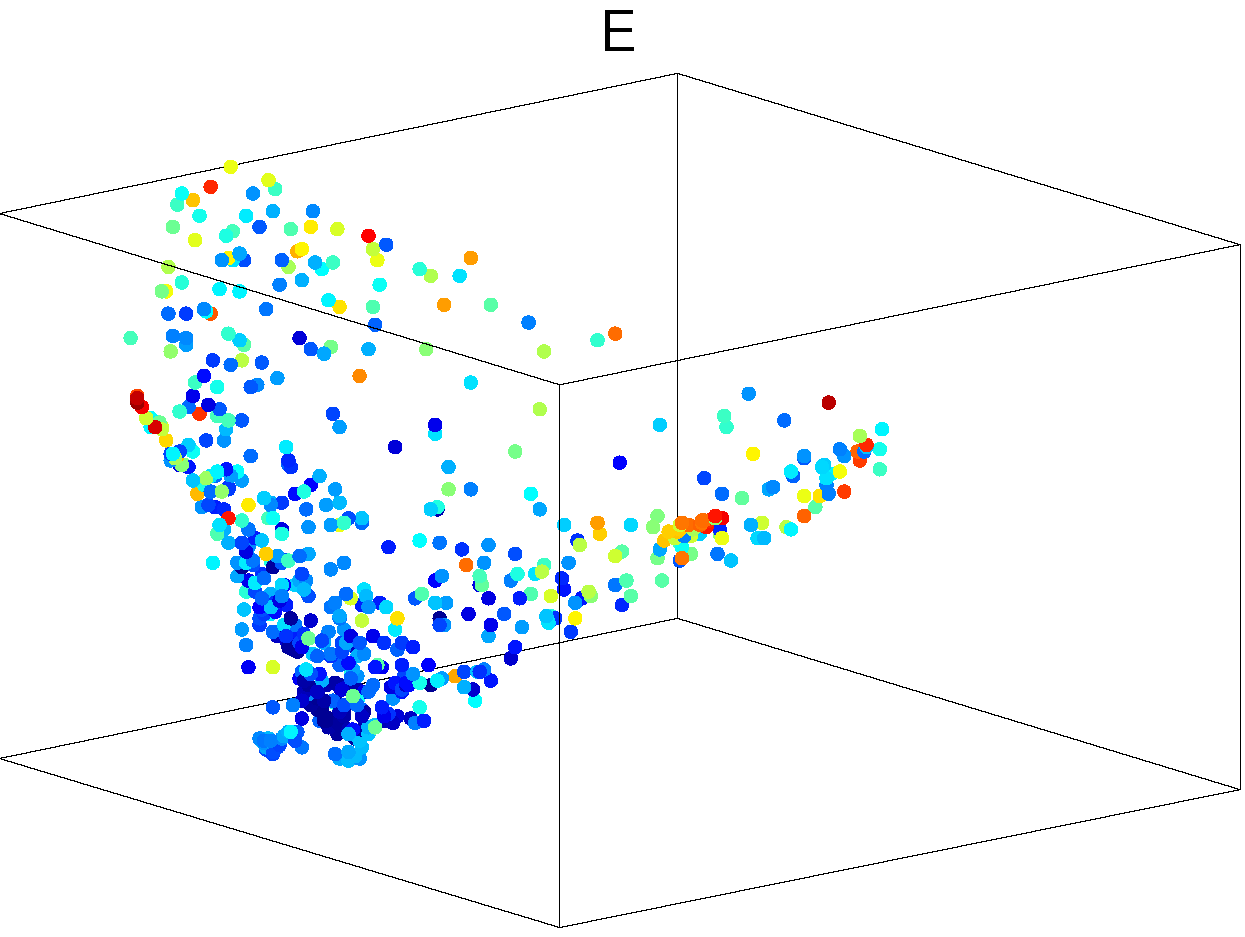}
    \includegraphics[width=.3\textwidth,height=.2\textheight]{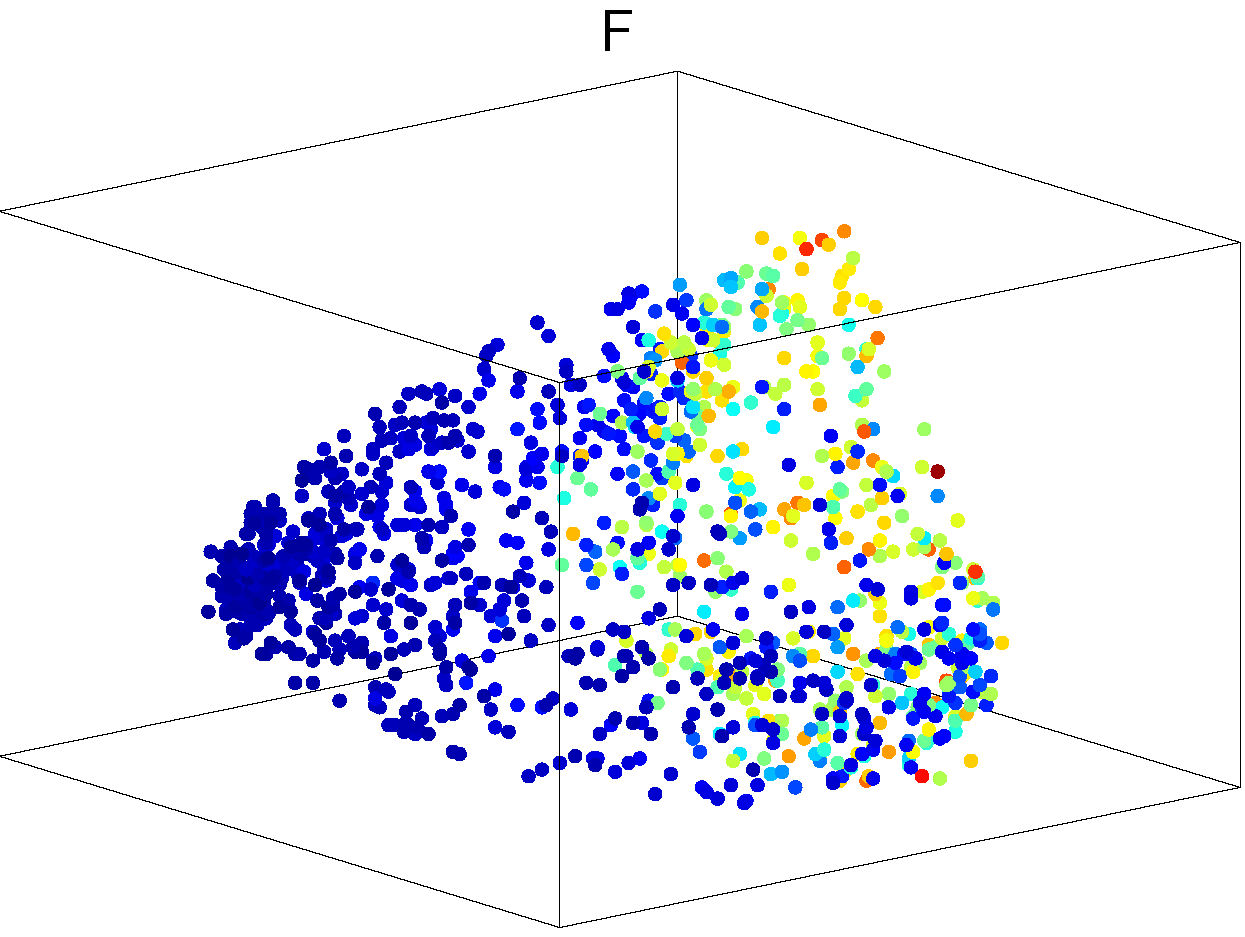}
}
    \caption{Scatter plot of the patch-set shown in Figure
      \ref{fig:pcasec2} after parametrizing using $\Phi$ in 
      (\ref{eqn:lowdparam}), with $d'=3$.  The fast patches (red and orange)
      are now concentrated and have been lumped together. The slow patches
      (blue-green) remain aligned along curves (for time-series) and surfaces (for 
      images). \label{fig:graphparams} } 
\end{figure}
\subsection{Examples (revisited)}
\label{ssec:exampleparams}
Figure \ref{fig:graphparams} displays the embedding of the patch-sets
associated with signals and images A-F using the map $\Phi$
(\ref{eqn:lowdparam}), where $d' = 3$.  The blue curve in Figure
\ref{fig:graphparams}-A corresponds to the slow patches (low
frequencies of the chirp) that are connected according to their
temporal proximity. On the other hand, red and orange patches
extracted from the high frequency part of the chirp are now
concentrated in a relatively small region (compare to Figure
\ref{fig:pcasec2}-A).  Similar features are seen in the parametrizations
of the patch-graphs associated with signals B-F.

\section{A model for  the patch-graph and the analysis of its
  embedding\label{ssec:theory}}
\subsection{Our approach}
The embedding of the patch-graph $\Gamma$ defined by $\Phi$, in
(\ref{eqn:lowdparam}), should lead to a representation of the
patch-set in $\real^{d'}$ where distances correspond to commute times
measured along the graph before embedding. Our goal is to explain the
concentration of the fast patches created by the embedding $\Phi$ (see
e.g. Figure \ref{fig:graphparams}). Our approach is based on a
theoretical analysis of a graph model that epitomizes the
characteristic features observed in patch-graphs composed of a mixture
of fast and slow patches. This model is composed of two subgraphs: a
subgraph of {\em slow patches}, which are extracted from the smooth
regions of the signal, and a subgraph of {\em fast patches}, which are
extracted from the regions of the signal that contain singularities,
changes in frequency, or energetic transients. We confirm our
theoretical analysis with numerical experimentations using synthetic
signals in section \ref{sec:experiments}, and we demonstrate that our
conclusions are in fact applicable to a larger class of
patch-graphs. The graph models are introduced in section
\ref{ssec:modelgraphs}. Our theoretical analysis of the embedding of
the graph models is given in section \ref{ss:ctestimates}. We evaluate
the performance of the embedding $\Phi$ when $d'$ is small in section
\ref{ssec:pgm}. 

\subsection{The prototypical graph models}
\label{ssec:modelgraphs}
We define the graph models in terms of the nonzero entries in the
associated weight matrix $\bo W$. Without loss of generality, we
assume that the number of vertices $N$ is even.

\paragraph{The slow graph model} The large entries in a weight matrix
$\bo W$ of a patch-graph composed only of slow patches will have large
entries when $|n-m|$ is small\footnote{We assume that the rows/columns
  of $\bo W$ are ordered according to increasing index $n$ of the
  sequence $\{x_n\}$. This assumption does not affect the graph's
  parametrization nor our theoretical conclusions, but allows us to
  interpret the structure in $\bo W$.}: temporal/spatial
proximity implies proximity in patch-space (see e.g. Figure
\ref{fig:Wssec2}-A, top corner). We therefore define the {\em slow
  graph model} as follows.\\

\begin{definition}
  The slow graph $\sm S(N,L)$ is a weighted graph composed of $N$  vertices, 

  \noindent $\bx_1,\ldots,\bx_N$. The weight on the edge $\{\bx_n, \bx_m\}$ is defined by
  \begin{equation} w_{n,m} =
    \begin{cases}
      w_{\sm S} & \text{if $|n-m|\leq L$},\\
      0 & \text{otherwise,} \end{cases} \qquad\text{for}\quad1\leq
    n,m\leq N \quad\text{and}\quad2L+1\leq N. \label{eqn:flatgraph}
  \end{equation} 
  \label{slowgraph}
\end{definition}
The weight $w_s$ is a positive real number that models the distance
between two temporally adjacent patches. The parameter $L$
characterizes the thickness of the diagonal in $\bo W$. The slow graph
is fully connected and each vertex has at most $2L$ neighbors, not
including self-connections (see Figure \ref{fig:graphmodels}). Hence,
we require that $2L+1\leq N$. Finally, note that the slow graph is
distinct from a regular ring, since the first and last vertices are
not connected. We do not consider a regular ring since it would imply
that the underlying signal is periodic.

\paragraph{The fast graph model} We now consider the model for a
patch-graph built from a patch-set comprising only fast patches. As
demonstrated in section \ref{ssec:firstlookpatchgraph}, most of the
entries in $\bo W$ have similar sizes, and appear to be scattered
throughout the matrix: temporal/spatial proximity does not correlate
with proximity in patch space. In fact, fast patches are all far away
from one another. We therefore define the {\em fast graph
  model} as follows. \\

\begin{definition}
  \label{fastgraph-def}
  The fast graph $\sm F(N,p)$ is a random weighted graph composed of $N$
  vertices, $\bx_1,\ldots,\bx_N$. The weight on the edge $\{\bx_n,
  \bx_m\}$ is defined by
  \begin{eqnarray*}
    w_{n,m} &=& w_{m,n} =
    \begin{cases}
      w_{\sm F} & \text{with probability $p$}, \\
      0 & \text{with probability $1-p$} 
    \end{cases}
    \quad \text{if $1\leq n < m\leq N$},\\ 
    \text{and }\quad \quad \quad&\\
    w_{n,m} &=& 1 \quad \text{if}  \quad n = m.
  \end{eqnarray*}
\end{definition}
The weight $w_F$ is a positive real number that models the distance
between two fast patches. The fast graph model is equivalent to a
weighted version of the Erd\"os-Renyi graph model \cite{erdos1960erg},
except that $\sm F(N,p)$ contains self-connections and has edge
weights possibly less than one. The parameter $p$ controls the density
of the edges; $p=1$ corresponds to a fully connected graph (clique).
\begin{figure}[H]
  \centerline{
    \includegraphics[width=.55\textwidth]{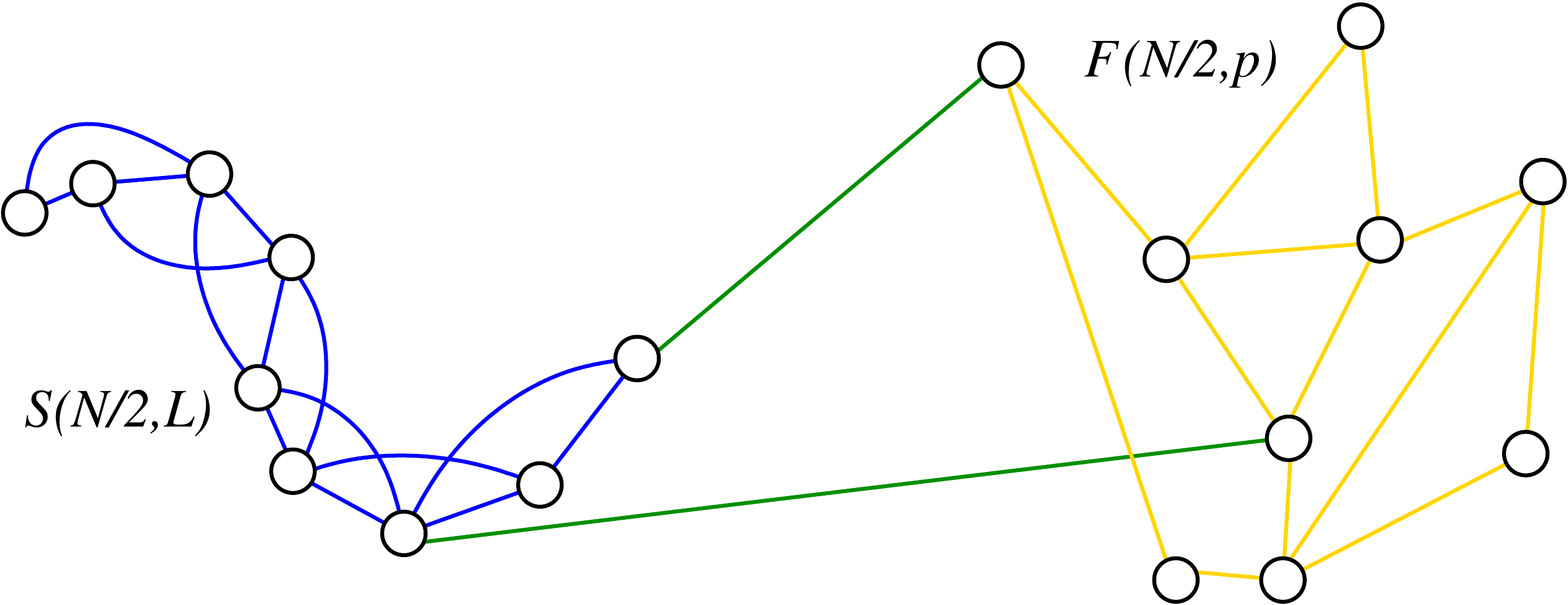} 
  }
  \caption{The fused graph model $\Gamma^\ast(N)$ is composed of a
    slow graph $\sm{S}(N/2,L)$ (blue) and a fast graph $\sm F(N/2,p)$
    (orange), connected by random edges (green).\label{fig:graphmodels}}
\end{figure}
\begin{figure}[H]
\centerline{
    \includegraphics[width=.3\textwidth,height=.3\textwidth]{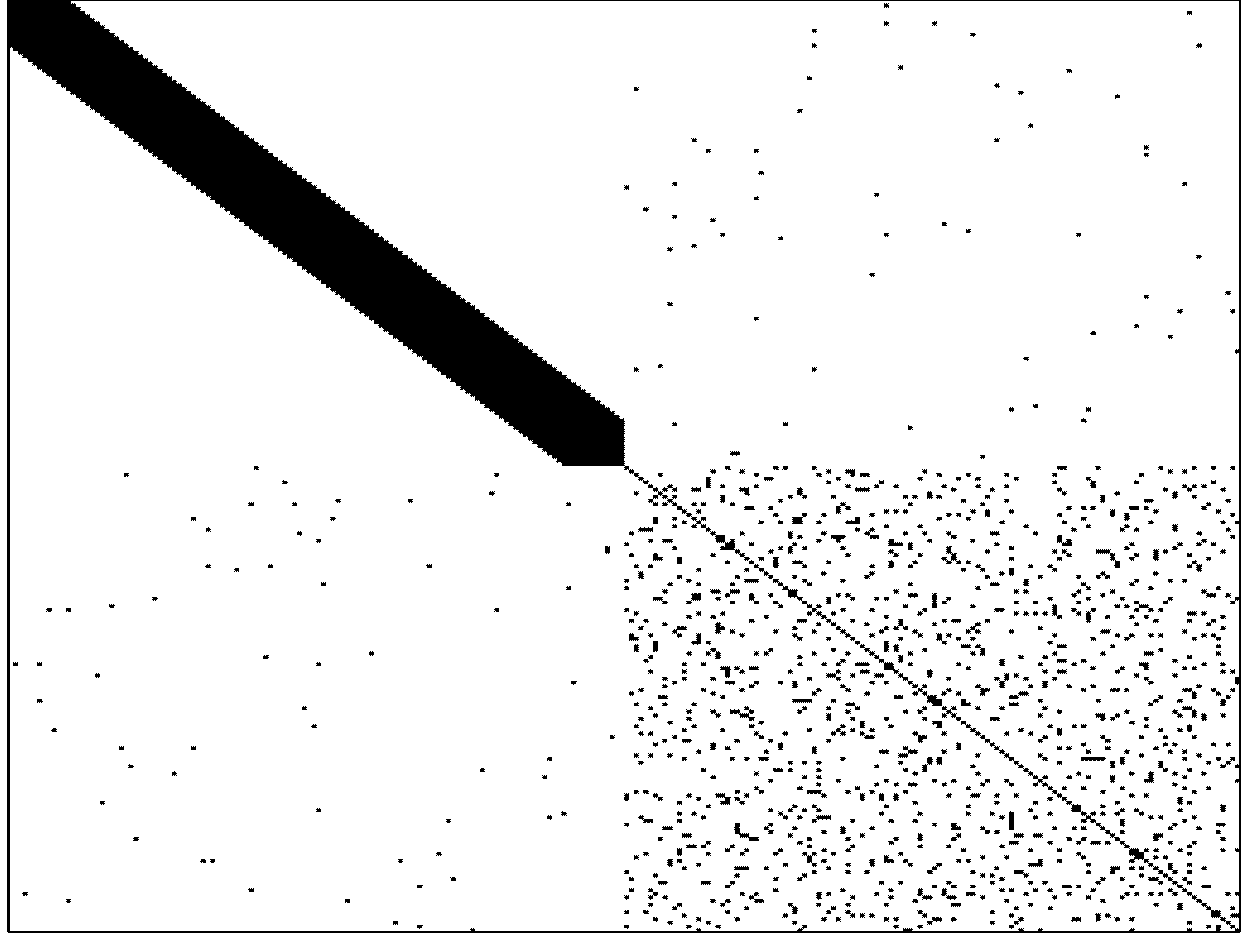}
}
  \caption{The weight matrix $\bo{W}$ of the fused graph model
    $\Gamma^\ast(256)$ is displayed as an image: $w_{n,m}$ is encoded
    as a grayscale value: from white ($w_{n,m} =0$) to black $(w_{n,m}
    =1)$. The entries of $\bo W$ associated with the slow graph appear
    in the upper-left quadrant of $\bo{W}$. Entries associated with
    the fast graph appear in the lower right quadrant. Random edges between
    the fast graph and slow graph appear in the upper right and lower
    left quadrants. \label{fig:randGWmatrix}}
\end{figure}
\paragraph{The fused graph model}  The {\em fused graph
  model} exemplifies the patch-set associated with a signal, or
an image, which exhibits regions of fast and slow changes. The fused
graph combines a slow and a fast subgraph of equal size (see Figure
\ref{fig:graphmodels}). \\

\begin{definition}
  The fused graph $\Gamma^\ast(N)$ is a weighted graph composed of a
  slow subgraph $\sm S(N/2,L)$ and a fast subgraph $\sm F(N/2,p)$. In
  addition, edges between $\sm S(N/2,L)$ and $\sm F(N/2,p)$ are
  created randomly and independently with probability $q$ and assigned
  the edge weight $w_c > 0$.
  \label{fused}
\end{definition}

Edges between $\sm S(N/2,L)$ and $\sm F(N/2,p)$ ensure that
$\Gamma^\ast(N)$ is connected (a requirement for the validity of the
parametrization (\ref{eqn:maptoN1})). These edges allow us to model
patches that are extracted from regions of the image that combine
edges/transients and smooth intensity. If $q$ is so small that no
edges are created between the two subgraphs, then an edge is placed at
random between the two subgraphs to ensure that the final fused graph
is connected.

The true patch-graph is always constructed using a $\nu$ nearest
neighbor rule (see section~\ref{sec:patchgraph}): each patch is
connected to $\nu$ other patches. In order to mimic a true
patch-graph, we adjust the thickness $L$ of the slow subgraph to the
density of the edge connection, $p$, in the fast subgraph, so that on
average, each vertex in the fused graph is connected to $2L$ vertices.
We know that the number of edges between distinct vertices in $\sm
F(N,p)$ is a binomial random variable with expectation
$\tfrac{N(N-1)}{2}p$. Since the total number of edges between distinct
vertices of $\sm S(N,L)$ is equal to\footnote{This is equivalent to
  the number of entries along the first $L$ upper diagonals of the matrix $\bo W$.}
\begin{equation}
  \sum_{j = 1}^L(N-j) = NL-\frac{L(L+1)}{2},
  \label{eqn:numoedgesGNP}
\end{equation}
we choose 
\begin{equation}
  p = \frac{2L}{N-1} - \frac{L(L+1)}{N(N-1)}.
  \label{eqn:pdefined}
\end{equation}
This choice of $p$ guarantees that the expected number of edges in
$\sm F(N,p)$ is equal to the number of edges in $\sm
S(N,L)$. Furthermore, provided that $L = \sm O(\ln(N))$, a short
computation shows that, for large values of $N$, this choice of $p$ 
also ensures that the expected degree of a vertex in $\sm F(N,p)$ is
equal to the average degree of a vertex in $\sm S(N,L)$.
Figure \ref{fig:randGWmatrix} shows the nonzero entries in the
weight matrix associated with one realization of the fused graph model
using parameters $N = 256$, $L = \ceil{2\ln N} = 12$ and $q =
\tfrac{1}{N}$.  Vertices $\bx_n$ with $n \leq 128$ are only connected
to other vertices $\bx_m$ if $|n -m| \leq L$. This connectivity mimics
the spatial (temporal) connectivity present in the smooth parts of an
image (signal).
\subsection{The main result}
\label{ss:ctestimates}
Our goal is to understand the effect of the embedding $\Phi$ defined
by (\ref{eqn:lowdparam}) on the fused graph. It turns out that
studying the embedding of each individual subgraph (slow and fast)
separately is much more tractable than considering the entire fused
graph. To complement our theoretical study of the fast and the slow
subgraphs, we provide numerical evidence in sections \ref{ssec:pgm}
that indicates that our understanding of the embedding of the
subgraphs can be used to analyze the embedding of the fused graph. In
section \ref{sec:experiments}, we confirm that our theoretical
analysis can be applied to true patch-graphs. Instead of studying
$\Phi$ directly, we take advantage of the fact that the embedding
$\Phi$ almost preserves the commute time (see
(\ref{pseudo-isometry})).  We can therefore understand the effect of
the embedding on the distribution of mutual distances $\|\Phi (\bx_n)
- \Phi(\bx_m)\|$ within a subgraph by studying the distribution of the
commute times $\kappa(\bx_n,\bx_m)$ on that subgraph. While it would
appear that it is a straightforward affair to compute the commute time
on the slow graph, the computation becomes rapidly intractable. For
this reason we provide lower and upper bounds for the average commute
time on the slow and fast subgraphs, respectively. This is sufficient
for our needs, since the two bounds rapidly separate even for low
values of $N$. To estimate these bounds, we rely on the connection
between commute times on a graph and effective resistance on the
corresponding electrical network
\cite{Chandra89theelectrical,2000math......1057D}.  Specifically, we
map a graph to an electrical circuit as follows: each edge with weight
$w_{n,m}$ becomes a resistor with resistance $1/w_{n,m}$. The vertices
of the graph are the connections in the circuit.  Given two vertices,
$\bx_n$ and $\bx_m$ in the circuit, one can compute the effective
resistance between these nodes, $R_{n,m}$. The key result
\cite{Chandra89theelectrical} is that $\kappa(\bx_n,\bx_m) = V
R_{n,m}$, where $V$ is the volume of the graph.

Before stating the main Lemma, let us take a moment to compute some
rough estimates of the commute times on the slow and fast graphs. To
get some quick answers, we consider the simplest versions of the two
graph models. When $L =1$, the slow graph $\sm S(N,1)$ is a {\em path}
with self-connections. On a path of $N$ vertices {\em without}
self-connections, the commute time between vertex $\bx_n$ and $\bx_m$
is equal to $2(N-1)|m-n|$. Therefore, the average commute time
(computed over all pairs of vertices) on a path of length $N$ is $\sm
O(N^2)$. While it would make sense that adding edges to a path should
decrease the commute time, this is usually not true
\cite{Lovasz93}. Nevertheless, the presence of edges that allow the
random walk to move forward by a distance $L$ at each time step lead
us to conjecture that the average commute time on $\sm S(N,L)$ should
be of the order $\frac{1}{L}\sm O(N^2)$.  In fact, as we will see in
Lemma \ref{prop:prop2}, the average commute time of the slow graph is
of the order $\frac{1}{L^2}\sm O(N^2)$. With regard to the fast graph,
we can analyze the case where the density of edges $p=1$. In this
case, the fast graph $\sm F(N,1)$ is a {\em complete graph}, or {\em
  clique}, and every vertex is connected to every other vertex.  In a
complete graph, the average commute time is $\sm O(N)$. Since the fast
graph can be regarded as a complete graph whose edges have been
removed with probability $1-p$, we expect the commute time to be
slightly larger than $\sm O(N)$, since removing edges restricts the
random walker's options to get from one vertex to another. Again, in
agreement with our intuition, Lemma \ref{prop:prop2} asserts that in
the fast graph, the commute time is of the order of $[L\ln(N)/\ln(L)]
\sm O(N)$.

We are now ready to state the main lemma. Our results will be stated
in terms of the ``average behavior'' of the commute time on each
graph, a concept that we need to define properly. In the case of the
slow graph, which is deterministic, we consider the average commute
time computed over all pairs of vertices.\\

\begin{definition}
  Let $\kappa_{\sm{S}}$ be the average commute time between vertices in the
  slow graph $\sm S(N,L)$ 
  \begin{equation} 
    \kappa_{\sm{S}} \triangleq  
    \frac{2}{N(N-1)}\sum_{1\leq m<n\leq N}\kappa(\bo{x}_n,\bo{x}_{m}).
  \end{equation}
\end{definition}

In the case of the fast graph, the ``average behavior'' of the commute
time needs to be defined more carefully. Indeed, each fast graph is a
realization of a stochastic process, and therefore we need to consider
the {\em expectation} of the commute time. More precisely, given a
realization, $\sm F$, of a fast graph, we compute the expected commute
time $\E_{\bx_n,\bx_m}\left[\kappa| {\sm F}\right]$ as the expectation
of $\kappa(\bx_m,\bx_n)$ over all possible random assignment of the
vertices $\bx_n$ and $\bx_m$. We then need to consider how
$\E_{\bx_n,\bx_m}\left[\kappa| {\sm F}\right]$ varies as a function of
$\cal F$. Therefore, we compute a second expectation over all possible
random graphs $\sm F$.

\begin{definition}
  The expected commute time $\kappa_{\sm{F}}$ on a fast graph $\sm F$
  generated according to (\ref{fastgraph-def}) is defined by
  \begin{equation}
    \kappa_{\sm{F}} \triangleq \E_{\sm{F}}
    \left[\E_{\bx_n,\bx_m}\left[\kappa| {\sm F}\right]\right].
  \end{equation}
  where the inner expectation is computed over all random assignments of
  the vertices $\bx_n,\bx_m$ given a realization $\sm F$ of a fast graph
  geometry, and the outer expectation is computed over all possible
  realizations $\sm F$ of the fast graph.
\end{definition}\\

\begin{lemma}
  We have
  \begin{equation} 
    \left(N(2L+1)-L(L+1)\right)\frac{2 \left(N+1\right)}{3L^2(L+1)}
    \leq 
    \kappa_{\sm{S}}.
    \label{eqn:kappafc}
  \end{equation}
  We also have
  \begin{equation}
    \kappa_{\sm{F}}\leq
    \left(N(2L+1)-L(L+1)\right)\left(\frac{\ln N}{\ln
        \left(2L-\frac{L(L+1)}{N}+1\right)}+\frac{1}{2}\right),
    \label{eqn:kappadc}
  \end{equation}
  provided that, for all assignments of the vertices $\bx_m$ and $\bx_n$, and for all fast
  graphs $\sm F$, the covariance $\cov (M,R_{m,n})$ between the number of edges, $M$, and
  the effective resistance, $R_{m,n}$, of the associated electrical circuit is
  nonpositive.
  \label{prop:prop2}
\end{lemma}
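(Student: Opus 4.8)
\emph{Overall strategy.} The plan is to turn both inequalities into statements about effective resistance via the identity $\kappa(\bx_n,\bx_m)=V\,R_{n,m}$ recalled just above, and then to estimate the resistances by elementary electrical arguments. The first step is to record the volumes. On the slow graph each ordered pair $(n,m)$ with $|n-m|\le L$ carries weight $w_{\sm S}$, and counting these pairs gives $V_{\sm S}=w_{\sm S}\bigl(N(2L+1)-L(L+1)\bigr)$. On the fast graph, writing $M$ for the number of distinct-vertex edges and keeping the $N$ unit self-loops, the volume is $V=2M+N$ (with the normalization $w_{\sm F}=1$ matching the unit self-loops); because the choice (\ref{eqn:pdefined}) of $p$ forces $\E[M]=NL-\tfrac{L(L+1)}{2}$, we get $\E[V]=N(2L+1)-L(L+1)$. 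Thus the common prefactor in (\ref{eqn:kappafc}) and (\ref{eqn:kappadc}) is exactly the (expected) volume, and it remains only to bound the average resistance from below on the slow graph and from above on the fast graph.

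\emph{Slow graph.} Here I would bound $R_{n,m}$ from below with the Nash--Williams inequality: for any family of edge-disjoint cutsets separating $\bx_n$ from $\bx_m$, $R_{n,m}\ge\sum_i 1/C_i$, where $C_i$ is the total conductance crossing the $i$-th cut. Take $m<n$ and let $\Pi_k$ be the cut separating $\{1,\dots,k\}$ from $\{k+1,\dots,N\}$. An edge of length $g\le L$ crosses exactly the $g$ consecutive cuts $\Pi_i,\dots,\Pi_{i+g-1}$, so any two cuts whose positions differ by at least $L$ are edge-disjoint; selecting the cuts at positions $m,m+L,m+2L,\dots$ below $n$ therefore yields $\lceil (n-m)/L\rceil\ge (n-m)/L$ edge-disjoint cutsets. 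Each $\Pi_k$ contains at most $\tfrac{L(L+1)}{2}$ edges (the interior value, which also bounds the boundary cuts), so $C_k\le w_{\sm S}\tfrac{L(L+1)}{2}$ and
\begin{equation*}
  R_{n,m}\ \ge\ \frac{n-m}{L}\cdot\frac{2}{w_{\sm S}L(L+1)}\ =\ \frac{2(n-m)}{w_{\sm S}L^2(L+1)}.
\end{equation*}
Multiplying by $V_{\sm S}$ cancels $w_{\sm S}$, and averaging over all pairs with $\tfrac{2}{N(N-1)}\sum_{m<n}(n-m)=\tfrac{N+1}{3}$ reproduces (\ref{eqn:kappafc}) exactly.

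\emph{Fast graph.} Fix a realization $\sm F$; then $V$ is deterministic, so $\E_{\bx_n,\bx_m}[\kappa\mid\sm F]=V\,\bar R(\sm F)$ with $\bar R(\sm F)=\E_{\bx_n,\bx_m}[R_{n,m}\mid\sm F]$, and hence $\kappa_{\sm F}=\E_{\sm F}[V\,\bar R]=\E[V]\,\E[\bar R]+\cov(V,\bar R)$. Since $V=2M+N$, the hypothesis $\cov(M,R_{m,n})\le 0$ --- which survives averaging over pairs to give $\cov(M,\bar R)\le 0$ --- yields $\cov(V,\bar R)\le 0$, and thus the decoupling $\kappa_{\sm F}\le\E[V]\,\E[\bar R]$. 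To bound $\E[\bar R]$ I would use Rayleigh monotonicity, $R_{n,m}\le d(\bx_n,\bx_m)$ (the graph distance on a unit-weighted graph), and control the expected distance between two random vertices of $\sm F(N,p)$: a breadth-first exploration multiplies the reachable set by about the mean degree $(N-1)p+1=2L-\tfrac{L(L+1)}{N}+1$ per step, so the whole graph is typically covered in $\ln N/\ln\!\bigl(2L-\tfrac{L(L+1)}{N}+1\bigr)$ steps. Carrying this through in expectation (the residual $\tfrac12$ absorbing the direct-edge and rounding corrections) gives $\E[\bar R]\le \ln N/\ln\!\bigl(2L-\tfrac{L(L+1)}{N}+1\bigr)+\tfrac12$, and multiplying by $\E[V]=N(2L+1)-L(L+1)$ yields (\ref{eqn:kappadc}).

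\emph{Main obstacle.} The delicate part is the fast-graph resistance estimate. The covariance hypothesis is exactly what lets me replace $\E[V R]$ by $\E[V]\,\E[R]$; without it $V$ and $R_{n,m}$ are coupled through the same random edges and the product will not factor. Granting it, the genuine difficulty is that $\E[\bar R]$ is an average over \emph{all} graphs, including atypical sparse realizations whose diameter --- and therefore the shortest-path bound on $R$ --- far exceeds $\ln N/\ln\bar d$; upgrading the branching heuristic from a high-probability statement to a bound on the expectation is the technical heart of (\ref{eqn:kappadc}), and is presumably where the explicit constant $\tfrac12$ is fixed.
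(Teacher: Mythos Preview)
Your proposal follows the same route as the paper. The slow-graph argument is identical in substance: Nash--Williams with a family of vertex-cuts spaced $L$ apart, each carrying total conductance at most $w_{\sm S}L(L+1)/2$, yielding $R_{n,m}\ge 2(n-m)/(w_{\sm S}L^2(L+1))$, and then the pairwise average $\tfrac{2}{N(N-1)}\sum_{m<n}(n-m)=(N+1)/3$. The fast-graph argument also matches: decouple $\E[VR]$ via the covariance hypothesis, compute $\E[V]=N(2L+1)-L(L+1)$, and bound $R$ above by the geodesic distance.

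The one point where your proposal diverges is the last step, and here you have over-estimated the difficulty. You propose to control the expected geodesic distance by a branching argument and flag the passage from ``typical diameter'' to ``expected distance'' as the main obstacle. The paper bypasses this entirely by quoting a closed-form result of Fronczak, Fronczak, and Ho\l{}yst for the average path length in an Erd\H{o}s--R\'enyi graph, namely
\[
\E[\delta]\;=\;\frac{\ln N-\gamma_e}{\ln\bigl((N-1)p+1\bigr)}+\frac{1}{2},
\]
with $\gamma_e$ Euler's constant; dropping $-\gamma_e$ and substituting $(N-1)p=2L-\tfrac{L(L+1)}{N}$ from (\ref{eqn:pdefined}) gives (\ref{eqn:kappadc}) directly. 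So the $\tfrac{1}{2}$ is not an ad hoc rounding correction but part of the cited formula, and no new argument about atypical sparse realizations is needed.
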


\begin{proof}
  The proofs are given in appendix \ref{sec:ctproofs}. 
\end{proof}\\

Because $L$ needs to grow logarithmically with $N$ (to ensure that
$\sm F(N,p)$ is connected with high probability; see appendix
\ref{sssec:connectedR}), we choose $L = c\ln N$ for some $c>1$, and the
upper bound on $\kappa_{\sm{F}}$ is negligible relative to the
lower bound on $\kappa_{\sm{S}}$, when $N$ is large.\\

\begin{corollary}
  Assume that $L = c\ln N$ for some constant $c>1$.  It follows that,
  as $N\rightarrow \infty$, the lower bound on $\kappa_{\sm{S}}$ grows
  like $\left(\tfrac{N}{\ln N}\right)^2$, and the upper bound on
  $\kappa_{\sm{F}}$ grows like $\tfrac{N(\ln N)^2}{\ln\ln N}$.
  Furthermore, the lower bound on $\kappa_{\sm{S}}$ grows faster than
  the upper bound on $\kappa_{\sm{F}}$, and so with a probability that
  approaches one as $N\rightarrow\infty$,
  \begin{equation*}
    \lim_{N\rightarrow\infty}\frac{\kappa_\sm{F}}{\kappa_\sm{S}} = 0.
  \end{equation*}
  \label{prop:prop21} 
\end{corollary}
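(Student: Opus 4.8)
The plan is to read this corollary as a routine asymptotic reduction of the two inequalities already established in Lemma \ref{prop:prop2}, which I may assume together with the hypotheses under which they hold. I would substitute $L = c\ln N$ into each bound, extract its leading-order growth, and then squeeze the nonnegative ratio $\kappa_{\sm F}/\kappa_{\sm S}$ between the bound on the numerator and the bound on the denominator.

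First I would analyze the common prefactor $N(2L+1)-L(L+1)$ appearing in both (\ref{eqn:kappafc}) and (\ref{eqn:kappadc}). With $L = c\ln N$ this is $2NL + N - L(L+1)$, in which the term $2NL = 2cN\ln N$ dominates both $N$ and $L(L+1) = \sm O((\ln N)^2)$; hence the prefactor grows like $2cN\ln N$. For the lower bound on $\kappa_{\sm S}$, the remaining factor $\tfrac{2(N+1)}{3L^2(L+1)}$ grows like $\tfrac{2N}{3c^3(\ln N)^3}$, so the product grows like $\tfrac{4}{3c^2}\,\tfrac{N^2}{(\ln N)^2}$, i.e. like $\left(N/\ln N\right)^2$, as claimed.

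Next I would treat the upper bound on $\kappa_{\sm F}$ in (\ref{eqn:kappadc}). The key step is the logarithmic denominator: since $\tfrac{L(L+1)}{N} = \sm O\!\left(\tfrac{(\ln N)^2}{N}\right)\to 0$, the argument $2L - \tfrac{L(L+1)}{N}+1$ behaves like $2c\ln N$, so $\ln\!\left(2L - \tfrac{L(L+1)}{N}+1\right) = \ln\ln N + \sm O(1)$ grows like $\ln\ln N$. Thus $\tfrac{\ln N}{\ln(\cdots)} + \tfrac12$ grows like $\tfrac{\ln N}{\ln\ln N}$, and multiplying by the prefactor $2cN\ln N$ shows the upper bound grows like $\tfrac{N(\ln N)^2}{\ln\ln N}$. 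To finish, since $\kappa_{\sm F}$ is at most its upper bound and $\kappa_{\sm S}$ is at least its lower bound,
\[
  0 \le \frac{\kappa_{\sm F}}{\kappa_{\sm S}} \le \frac{\text{upper bound on }\kappa_{\sm F}}{\text{lower bound on }\kappa_{\sm S}},
\]
and the right-hand side grows like $\tfrac{N(\ln N)^2/\ln\ln N}{N^2/(\ln N)^2} = \tfrac{(\ln N)^4}{N\ln\ln N}\to 0$, forcing $\kappa_{\sm F}/\kappa_{\sm S}\to 0$. The ``probability approaching one'' clause is not produced by the asymptotics but inherited from the hypotheses of Lemma \ref{prop:prop2}: the upper bound presupposes that the fast graph is connected, so that the spectral form (\ref{eqn:commutetime}) of the commute time is valid. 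For $L = c\ln N$ with $c>1$ the density $p \approx 2L/N$ lies above the Erd\"os--R\'enyi connectivity threshold, so $\sm F$ is connected with probability tending to one (appendix \ref{sssec:connectedR}); I would simply invoke this.

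The only points requiring care — and hence the ``main obstacle,'' such as it is — are bookkeeping ones: verifying that $2NL$ genuinely dominates the other terms of the common prefactor, and that $L(L+1)/N\to 0$ so that the reduction $\ln(2L - L(L+1)/N + 1)\sim\ln\ln N$ is legitimate. There is no analytic difficulty beyond this, and the substantive probabilistic content (connectivity of the random fast graph) has already been quarantined inside Lemma \ref{prop:prop2} and its supporting appendix.
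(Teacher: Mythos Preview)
Your proposal is correct and follows exactly the approach the paper uses: bound the ratio $\kappa_{\sm F}/\kappa_{\sm S}$ above by the ratio of the upper bound in (\ref{eqn:kappadc}) to the lower bound in (\ref{eqn:kappafc}), invoke connectivity of $\sm F(N,p)$ from appendix \ref{sssec:connectedR} for the probabilistic caveat, and then perform the limit calculation. In fact the paper's own proof merely asserts that ``the ratio of bounds goes to zero, which follows from a simple, but lengthy, limit calculation,'' so your write-up actually supplies the asymptotic bookkeeping the paper leaves to the reader.
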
 

\begin{proof} 
  Notice that $\kappa_\sm{S}$ is bounded away from zero. Because the
  choice of $L$  guarantees that the fast graph is connected with a
  probability approaching one, $\kappa_\sm{F}$ is finite with
  probability approaching one. Therefore the ratio
  $\kappa_\sm{F}/\kappa_\sm{S}$ is bounded below by zero and from
  above by a ratio of the bounds from Lemma \ref{prop:prop2}. The
  ratio of bounds goes to zero, which follows from a simple, but
  lengthy, limit calculation.
\end{proof}\\

We can translate the corollary in terms of the mutual distance between
vertices of the subgraphs after the embedding $\Phi$: 
$\Phi({\sm F}(N,p))$ will be more concentrated than $\Phi({\sm S}(N,L))$.

\subsection{Spectral decomposition of commute times on the graph
  models\label{ssec:pgm}} 
The results of section \ref{ss:ctestimates} apply to the exact commute
times on the graph models. However, as mentioned in section
\ref{ssec:paramthegraph}, it is more practical to use a truncated
version of the spectral expansion of the commute time, defined by
Equation (\ref{eqn:commutetime}). We also noticed that the commute
time encompasses the short term evolution ($t \approx 0$) as well as
the asymptotic regime ($t \rightarrow \infty$) of the behavior of the
random walk. Neglecting eigenvalues $\phi_k$ for large $k$ emphasizes
the long term behavior of the random walk, and we expect that it
should further increase the difference between the slow and fast
graphs. In this section, we confirm experimentally that approximating
the commute times by truncating the expansion (\ref{eqn:commutetime})
actually emphasizes the separation between the fast subgraph and the
slow subgraph in the fused graph model. In all the numerical
experiments in this section, unless otherwise stated, we fix $N =
1024$, $L = \ceil{2\ln(N)}$, $p$ is chosen according to
(\ref{eqn:pdefined}), $q = 1/N$, and $w_{\sm S} = w_{\sm F} = w_c
=1$. In all experiments, we compute the eigenvalues $\{\lambda_k\}$ of
the matrix $\bo D^{-1/2}\bo W\bo{D}^{-1/2}$ associated with the fast
graph, the slow graph, and the fused graph.
\paragraph{Slow and fast subgraphs: two different dynamics revealed by
  the spectral decomposition} We first provide a back-of-the-envelope
computation of the spectrum of the slow and fast graphs. As we have
noticed before, the slow graph model is a ``fat'' path. We know that
the spectrum of a path without self-connections \cite{chung97} is
given by
\begin{equation*}
  \cos\left [\pi (k-1) / (N-1) \right], \quad k =  1,2,\ldots,N.
\end{equation*}
We expect therefore that the eigenvalues associated with the slow
graph will decay slowly away from one for small $k$. Figure
\ref{fig:modelspectra} (inset) displays the eigenvalues associated
with the slow graph model. As expected, the spectrum is flat around
$k=0$ and exhibits the slowest decay of all the graph models.
We use the similarity between the fast graph model and the
Erd\"os-Renyi graph to predict the spectrum of the fast graph.  Except
for $\lambda_1=1$, all the other eigenvalues of an Erd\"os-Renyi graph
asymptotically follow the Wigner semicircle distribution
\cite{Chung27052003}. Our numerical experiments confirm this
prediction: as shown in Figure \ref{fig:modelspectra}-right, the
eigenvalues of the fast graph appear to be distributed along a
semicircle. 

The decay of the spectrum has a direct influence on the dynamics of
the random walk. Specifically, the spectral gap controls the
{\em mixing rate}, which measures the expected number of time-steps
that are necessary to reduce the distance between the probability
distribution after $t$ steps ${\bo P}^{(t)}_{n,m}$ and the stationary
distribution $\pi_m$ by a certain factor \cite{Vempala05}. This
concept is justified by the fact that the convergence of ${\bo
  P}^{(t)}_{n,m}$ is exponential \cite{durrettgd}, and is given by
\begin{equation}
  \max_{n,m}\left|\frac{\bo
      P^{(t)}_{n,m}}{\pi_m}-1\right|\leq\frac{\lambda_{max}^{t}}{\pi_{min}},
  \quad t =1,\ldots
  \label{eqn:relativepwdistance}
\end{equation}
where $\lambda_{max} = \max\{\lambda_2,|\lambda_{N}|\}$ (which is
related to the spectral gap), and $\pi_{min}$ is the smallest entry of
the stationary distribution.  Since $\lambda_2$ is much larger in the
slow graph than in the fast graph, we expect that convergence to the
associated stationary distribution will take longer on the slow graph
than on the fast graph.
\paragraph{The dynamic of the fused graph is enslaved by the slow subgraph}
We now consider a random walk on the fused graph. If this random walk
begins at $\bx_n$ in the fast subgraph of the fused graph,
then after a small number of steps, $t_0$, the probability of finding
the random walker at any other vertex $\bx_m$ in the fast subgraph is
close to the stationary distribution, $\bo P^{t_0}_{n,m} \approx
\pi_m$. On the other hand, during the same amount of steps, a random
walk initialized in the slow subgraph will only explore a small
section of the slow subgraph, and consequently, the transition
probabilities will still be similar to its initial values $\bo
P^{(t_0)}_{n,m} \approx \bo P_{n,m}$. As a result, the restriction
imposed by the geometry of the slow subgraph is expected to decrease
the convergence rate of the transition probabilities on the fused
graph. We confirm this analysis with experimental results. Figure
\ref{fig:modelspectra} (inset) shows that for $k<23$ the eigenvalues
associated with the fused graph and the eigenvalues associated with
the slow graph exhibit slow decay away from one, thereby increasing
the convergence rate given in (\ref{eqn:relativepwdistance}). For $25
\leq k \leq 400$, the eigenvalues of the fused graph decay at a rate
similar to that of the fast graph. Finally, for $k \ge 400$ the
eigenvalues of the fused graph join those of the slow graph (see  also
the histogram in Figure \ref{fig:modelspectra}-right). We
have observed experimentally that these transitions in the behavior of
the spectrum of the fused graph are not affected by varying the parameters $N$, $L$, and
$q$. We%
\begin{figure}[H]
\centerline{
    \includegraphics[width=.35\textwidth]{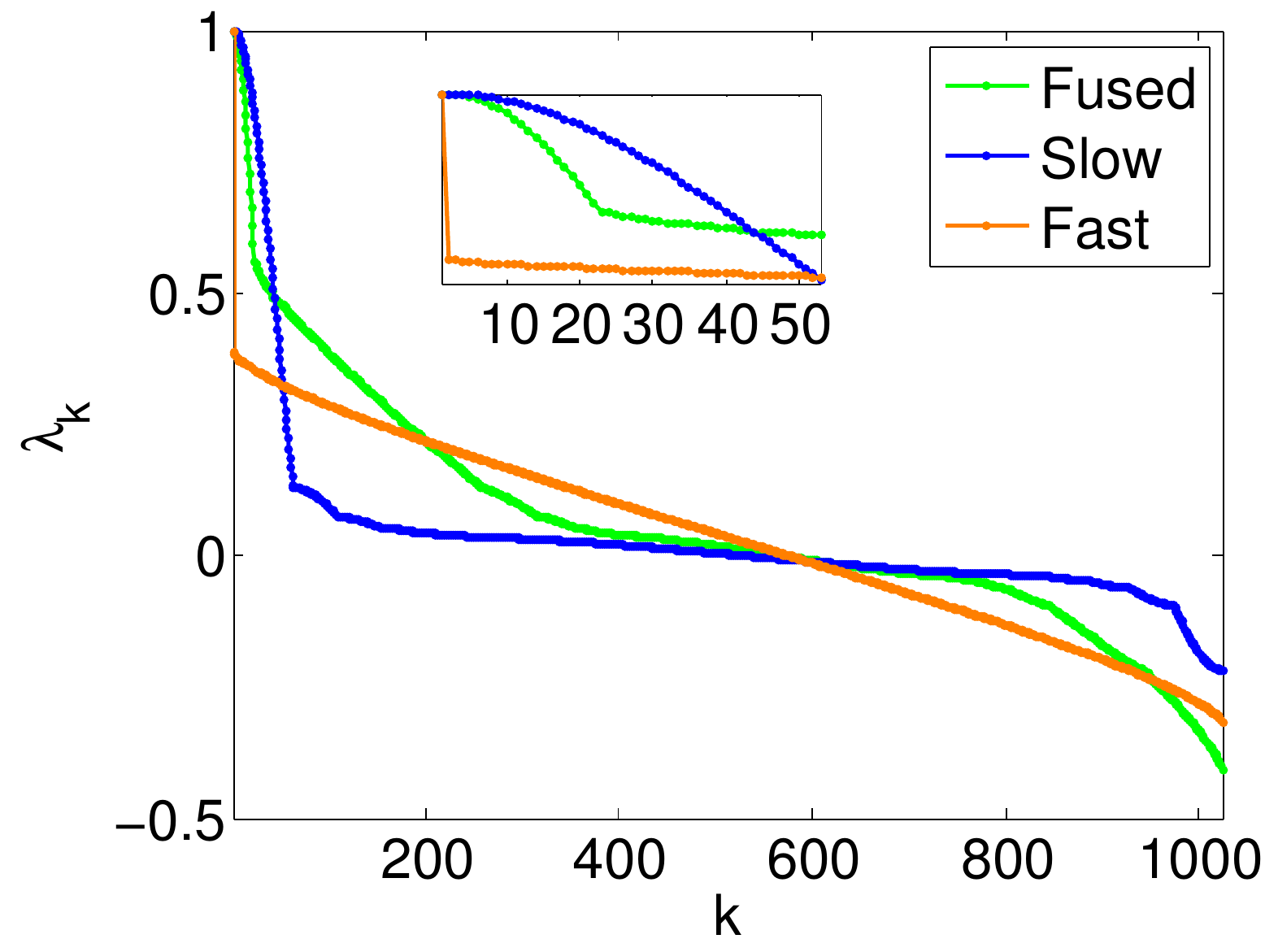} 
    \includegraphics[width=.35\textwidth]{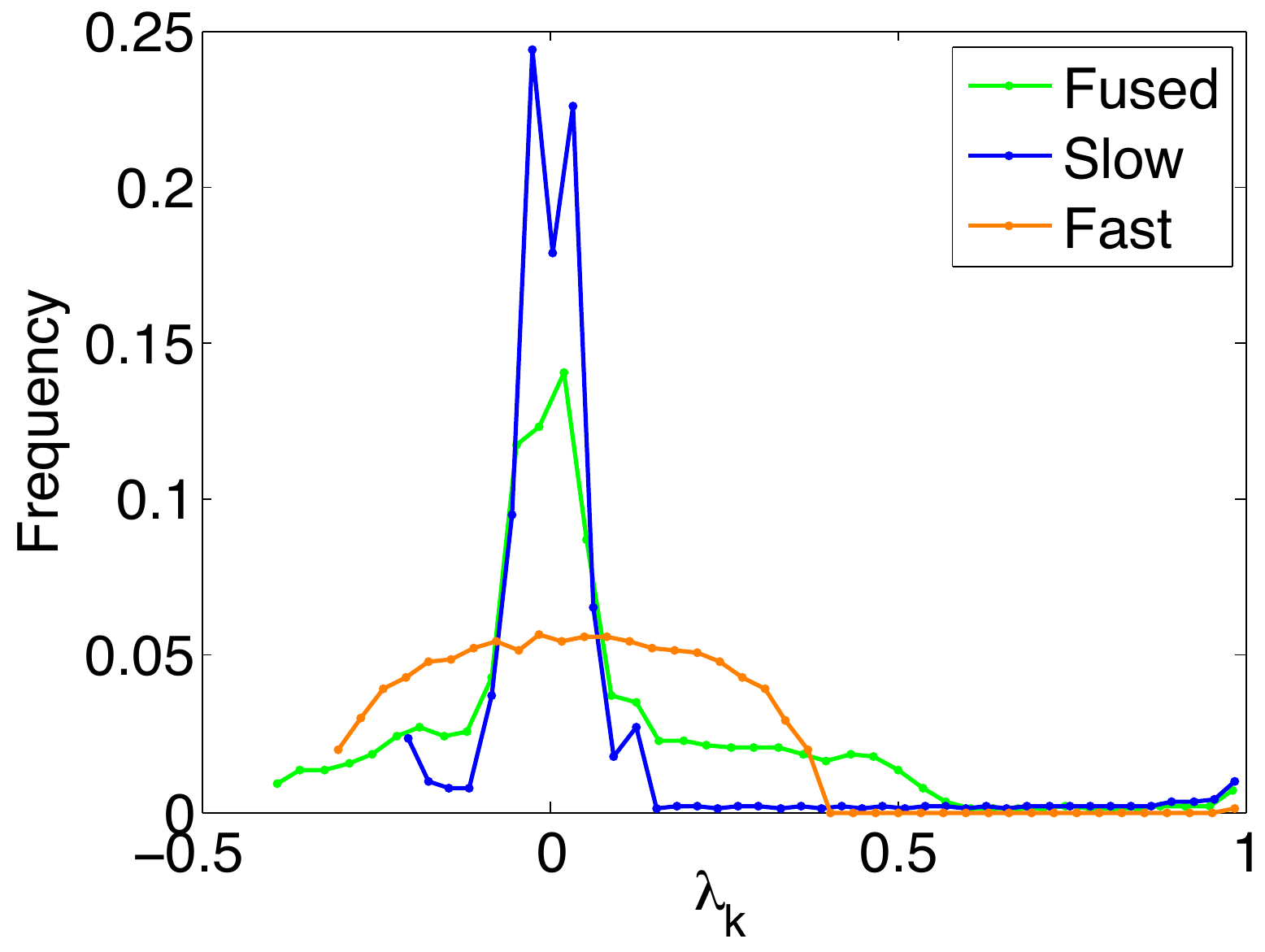} 
}
  \caption{The eigenvalues $\lambda_k$ of the matrix $\bo D^{-1/2}\bo
    W\bo D^{-1/2}$ associated with the fused (green), slow (blue), and
    fast (orange) graphs. Left: $\lambda_k$ as a function
    of $k$; right: histogram of the $\lambda_k$.\label{fig:modelspectra}}
\end{figure}
\begin{figure}[H]
\centerline{
    \includegraphics[width=.325\textwidth]{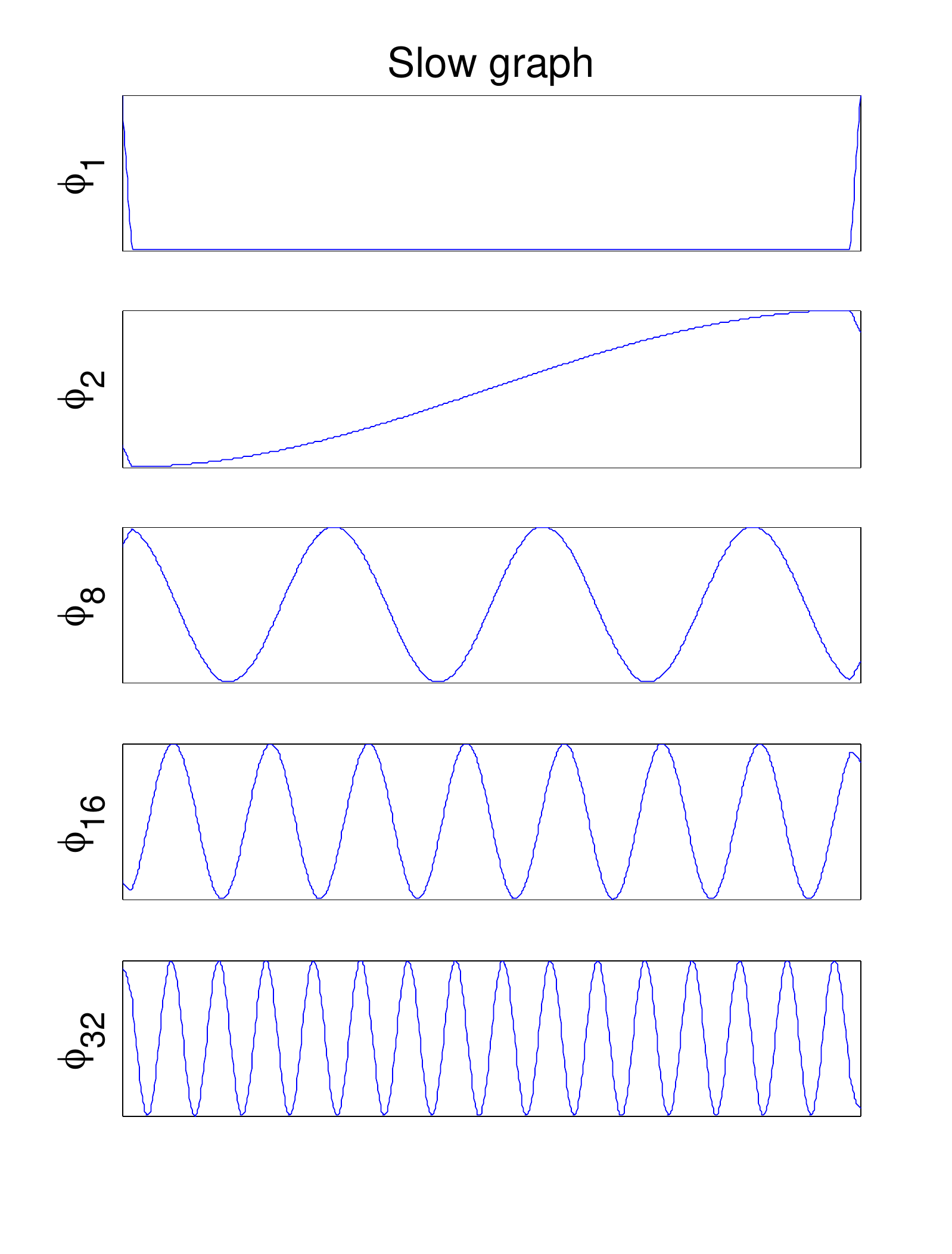} 
    \includegraphics[width=.325\textwidth]{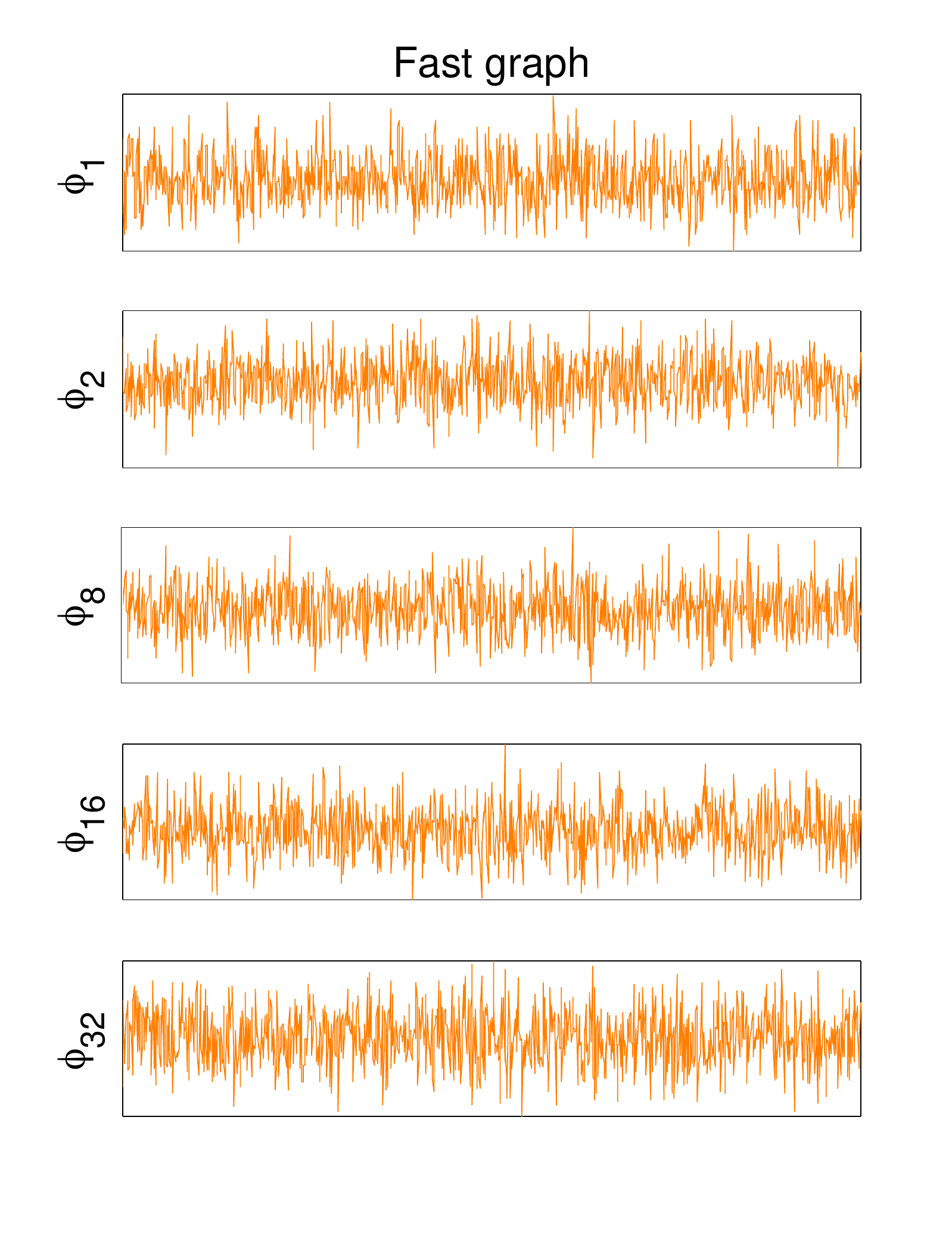} 
    \includegraphics[width=.325\textwidth]{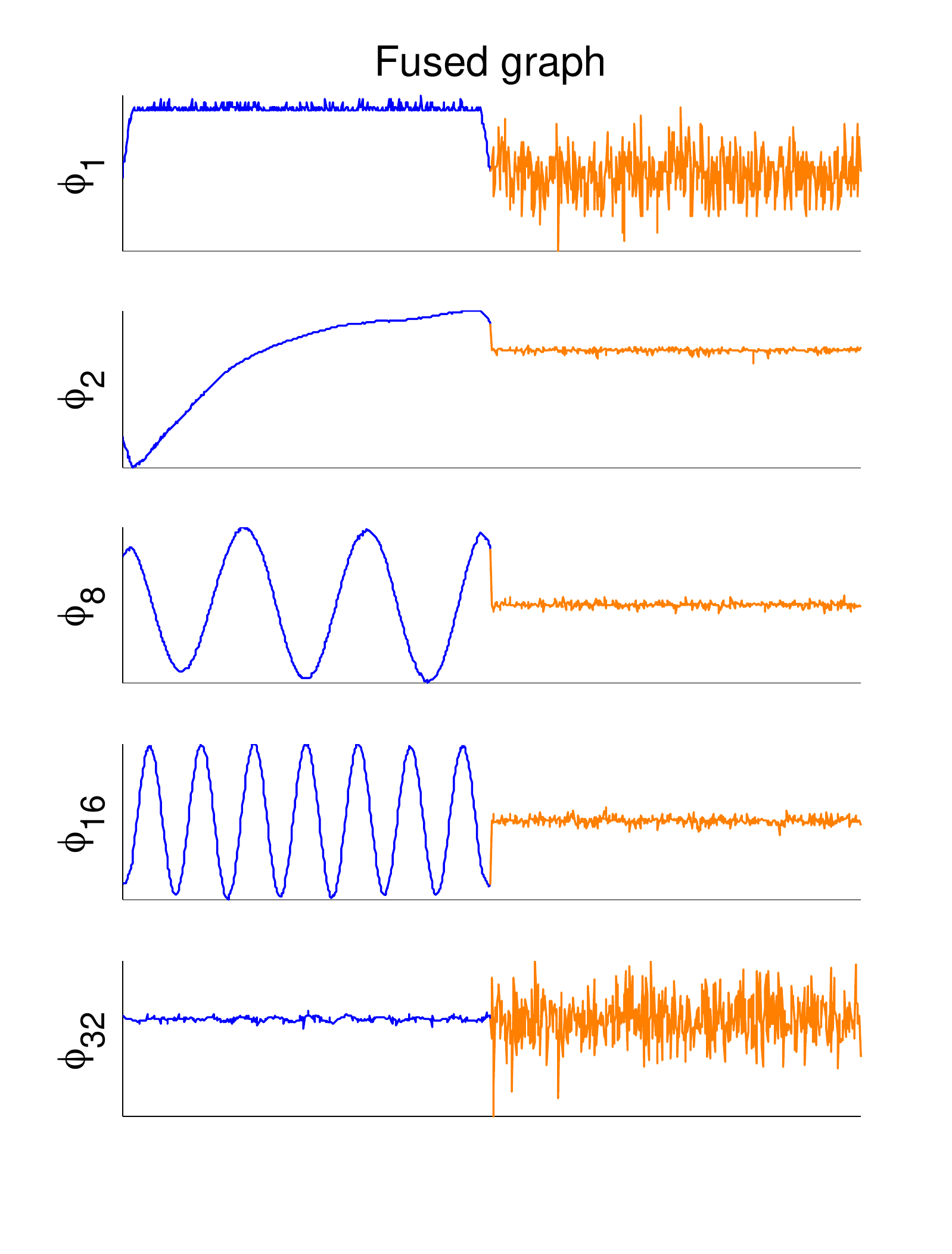} 
}
  \caption{The eigenvectors
    $\{\phi_1,\phi_2,\phi_8,\phi_{16},\phi_{32}\}$ associated with the
    slow (left), fast (center), and fused (right) graphs.
    Right: the large amplitude of the eigenvectors $\phi_k$ on the first half
    of vertices (blue) belonging to the slow subgraph leads to a larger separation between the fast and slow subgraphs
    when truncating the commute time expansion.
    \label{fig:modelvectors}}
\end{figure}
\noindent conclude that the slow subgraph has the largest influence
on the first few (small $k$) eigenvalues $\lambda_k$ of the fused graph. 

\paragraph{The eigenvectors of the fused graph and their impact
  on the commute time}
The transition exhibited in the spectrum of the fused graph can also
be detected in the corresponding eigenvectors $\phi_k$.  Figure
\ref{fig:modelvectors} shows the eigenvectors
$\{\phi_1,\phi_2,\phi_8,\phi_{16},\phi_{32}\}$ corresponding to the
three graph models.  The first eigenvector $\phi_1$ has entries equal
to the square root of the stationary distribution, $\phi_1(\bx_n) =
\sqrt{\pi_n}$, and is not used in the expansion
of the commute time (\ref{eqn:commutetime}). As expected, the random
walk spends most of its time inside the slow subgraph of the
fused graph, as indicated by the larger values of $\phi_1$ for the
first (blue) $N/2$ vertices (see Figure
\ref{fig:modelvectors}-right). The eigenvectors
$\{\phi_2,\phi_8,\phi_{16}\}$ of the fused graph exhibit large
amplitude oscillations over the vertices belonging to the slow
subgraph (first half -- shown in blue -- of the plots in Figure
\ref{fig:modelvectors}-right), which resemble those found in the
eigenvectors associated with the slow graph (Figure
\ref{fig:modelvectors}-left). As $k$ increases, the eigenvectors
$\phi_k$ of the fused graph become more and more similar to the eigenvectors of the
fast graph.

The impact of the eigenvectors $\phi_k$ on the commute time on the
fused graph can be analyzed by estimating the size of the terms
\begin{equation}
  \frac{1}{1-\lambda_k} \left(\frac{\phi_k(\bx_n)}{\sqrt{\pi_n}} -
    \frac{\phi_k(\bx_m)}{\sqrt{\pi_m}}\right)^2
  \label{single}
\end{equation}
in the spectral expansion (\ref{eqn:commutetime}) of the commute time
$\kappa$.  We claim that $\kappa (\bx_n,\bx_m)$ will be small if both
vertices $\bx_n$ and $\bx_m$ are in the fast subgraph, and that
$\kappa$ will be large if either vertex is in the slow subgraph.  

We can first estimate the size of $\phi_k(\bx_n)/\sqrt{\pi_n} -
\phi_k(\bx_m)/\sqrt{\pi_m}$. We observe that the eigenvectors $\phi_k$
for small values of $k$ have large amplitude oscillations on vertices
belonging to the slow subgraph, but are relatively constant on the
fast subgraph (see Figure \ref{fig:modelvectors}-right). Therefore,
for small values of $k$, each term (\ref{single}) will be small when
$\bx_n$ and $\bx_m$ both belong to the fast subgraph (we also have
$\pi_n \approx \pi_m$ when two vertices belong to the same subgraph).
Conversely, these terms will be large when either $\bx_n$ or $\bx_m$
belongs to the slow subgraph. While this analysis of the size of the
terms (\ref{single}) only holds for small values of $k$, it turns out
that these are the terms that have the largest influence in the
expansion of the commute time (\ref{eqn:commutetime}). Indeed, the
spectrum of the fused graph decays slowly, and therefore the first few
coefficients $(1-\lambda_k)^{-1}$ in the commute time expansion
(\ref{eqn:commutetime}) are much larger than the remainders, and
therefore the terms (\ref{single}) for small values of $k$ will
provide the largest contribution in the expansion of the commute time.

We conclude that $\kappa (\bx_n,\bx_m)$ is small when $\bx_n$ and
$\bx_m$ belong to the fast subgraph, and $\kappa (\bx_n,\bx_m)$ is
large when either vertex is in the slow subgraph.  Furthermore, we
expect that this difference will be further magnified if we replace
the exact expansion of $\kappa$ in (\ref{eqn:commutetime}) by an
approximation that only includes the first few values of $k$.
\paragraph{The truncated spectral expansion of the commute time
  increases the contrast between the slow and fast subgraphs} We finally
come to the heart of the section: the numerical computation of the
average approximate commute time defined by
\begin{equation}
  \kappa^\prime = \frac{2}{N(N-1)} \sum_{n< m} \|\Phi (\bx_n) - \Phi(\bx_m)\|^2.
  \label{approx-commute}
\end{equation}
Because of (\ref{isometry}), we expect that $\kappa^\prime$ will be
close to the true commute time $\kappa$. We compute $\kappa^\prime$
for the three graphs: slow, fast and fused.  We
generated 25 realizations of the fast and fused graphs, and we
estimated the expected commute time with the sample
mean, given by $\kappa^\prime$ in (\ref{approx-commute}).

Figure \ref{fig:kapparatios}-A displays $\kappa^\prime_{\sm
  F}/\kappa^\prime_{\sm S}$ as a function of the number of terms $d'$
used in the embedding (\ref{eqn:lowdparam}), for several values of the
number of vertices $N$, for the slow and fast graphs. Our theoretical
analysis of $\kappa_{\sm F}/\kappa_{\sm S}$, performed in Corollary
\ref{prop:prop21}, is only valid for large values of $N$.
Nevertheless, our numerical simulations indicate that for very low
values of $N$, $\kappa_{\sm F}$ is already smaller than $\kappa_{\sm
  F}$, since all ratios are below one (see Figure
\ref{fig:kapparatios}-A). Furthermore, we see that this ratio is even
smaller for smaller values of $d'$. We observe similar results when
the commute times $\kappa^\prime_{\sm S}$ and $\kappa^\prime_{\sm F}$
are computed within the slow the fast subgraphs of the fused graph
(see%
\begin{figure}[H]
  \centerline{
    \includegraphics[width=.37\textwidth]{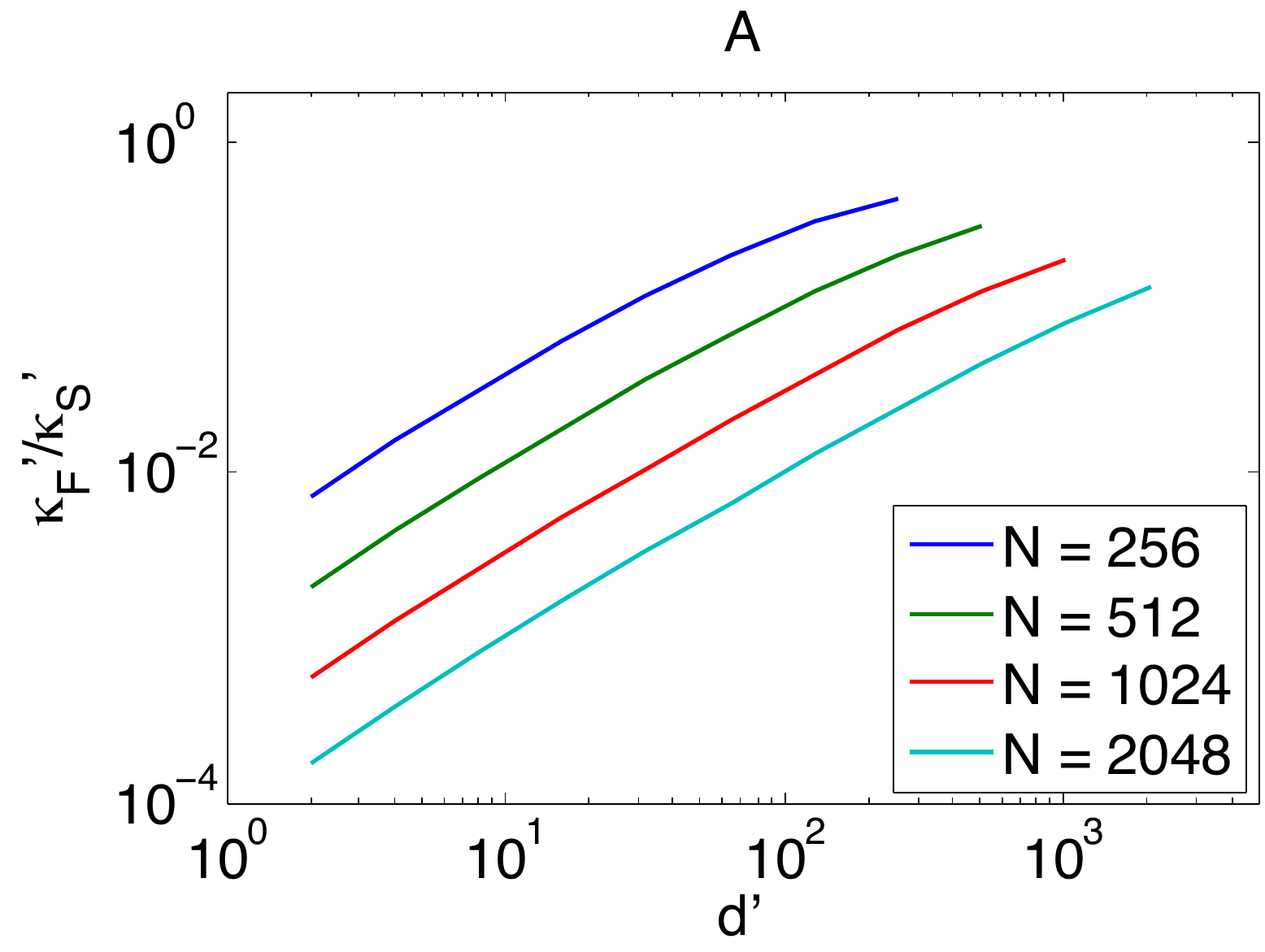} 
    \includegraphics[width=.37\textwidth]{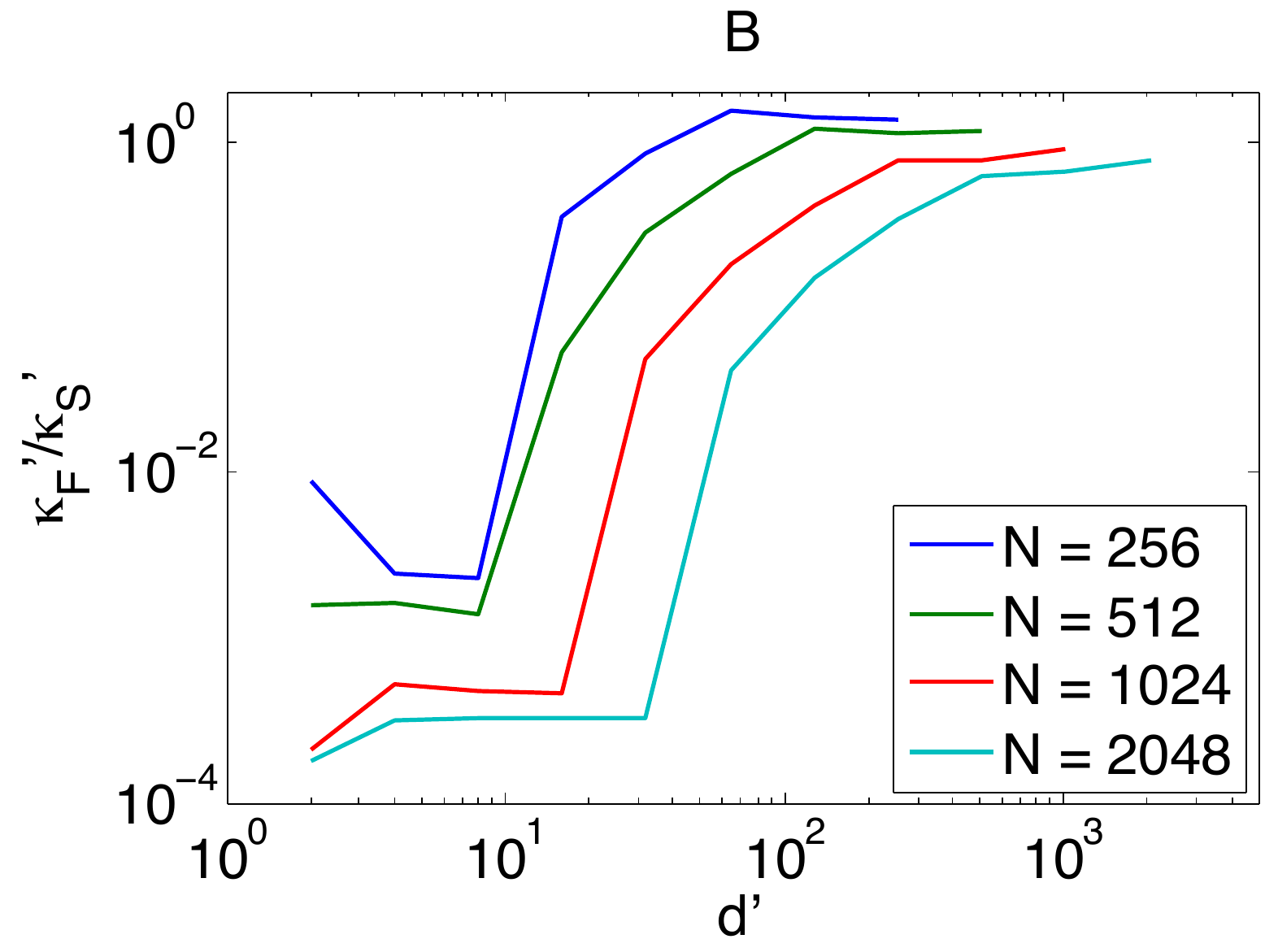} 
}
  \caption{$\kappa^\prime_{\sm F}/\kappa^\prime_{\sm S}$ as a function of the
    dimension $d'$ of the embedding $\Phi$, for several values of the
    number of vertices $N$. Left: slow  $\sm S$ and fast $\sm F$ graphs separately; right: slow and fast
    subgraphs in the fused graph $\Gamma^\ast$. 
    \label{fig:kapparatios}}
\end{figure}
\begin{figure}[H]
  \centerline{
    \includegraphics[width=.37\textwidth]{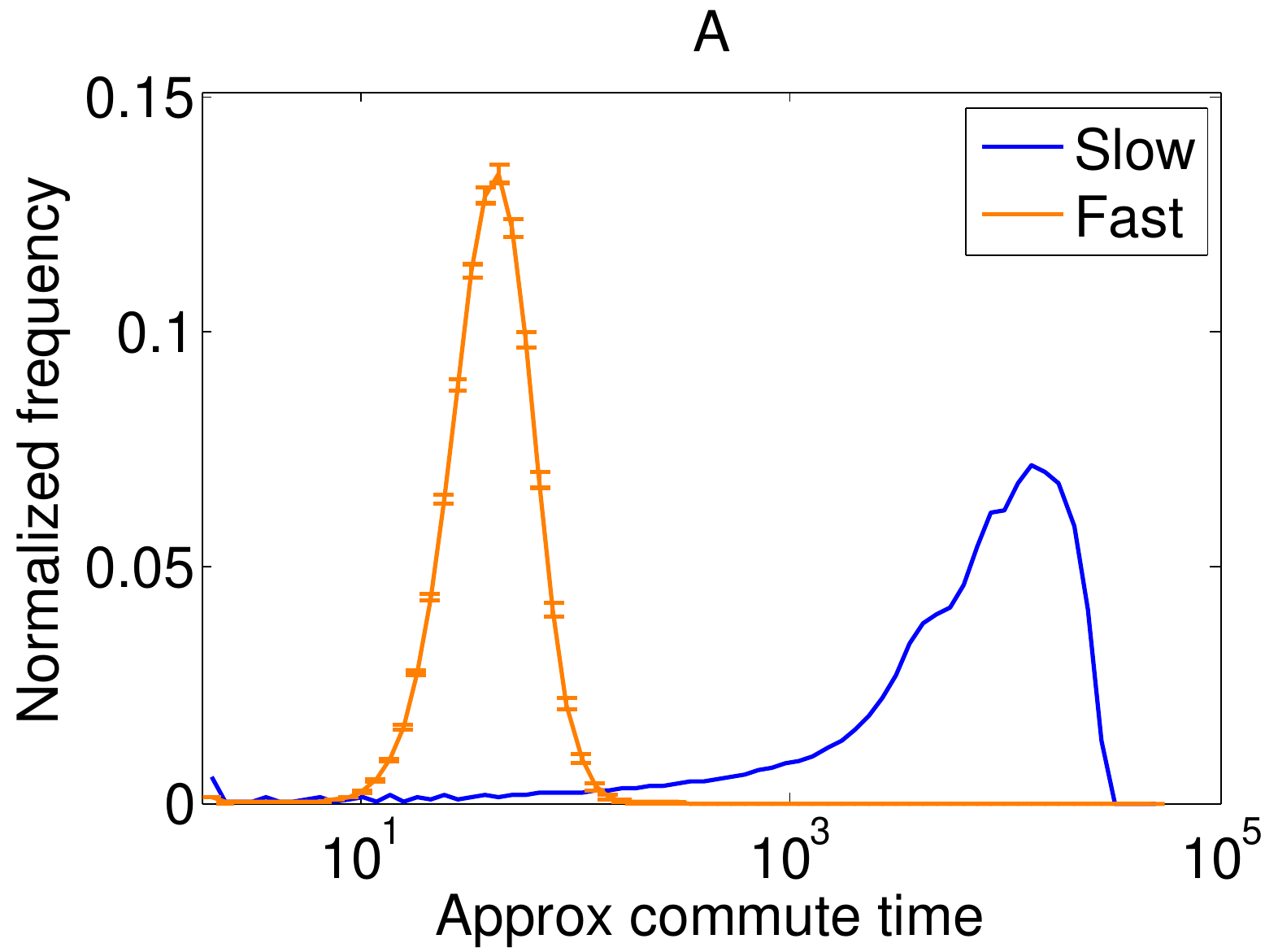} 
    \includegraphics[width=.37\textwidth]{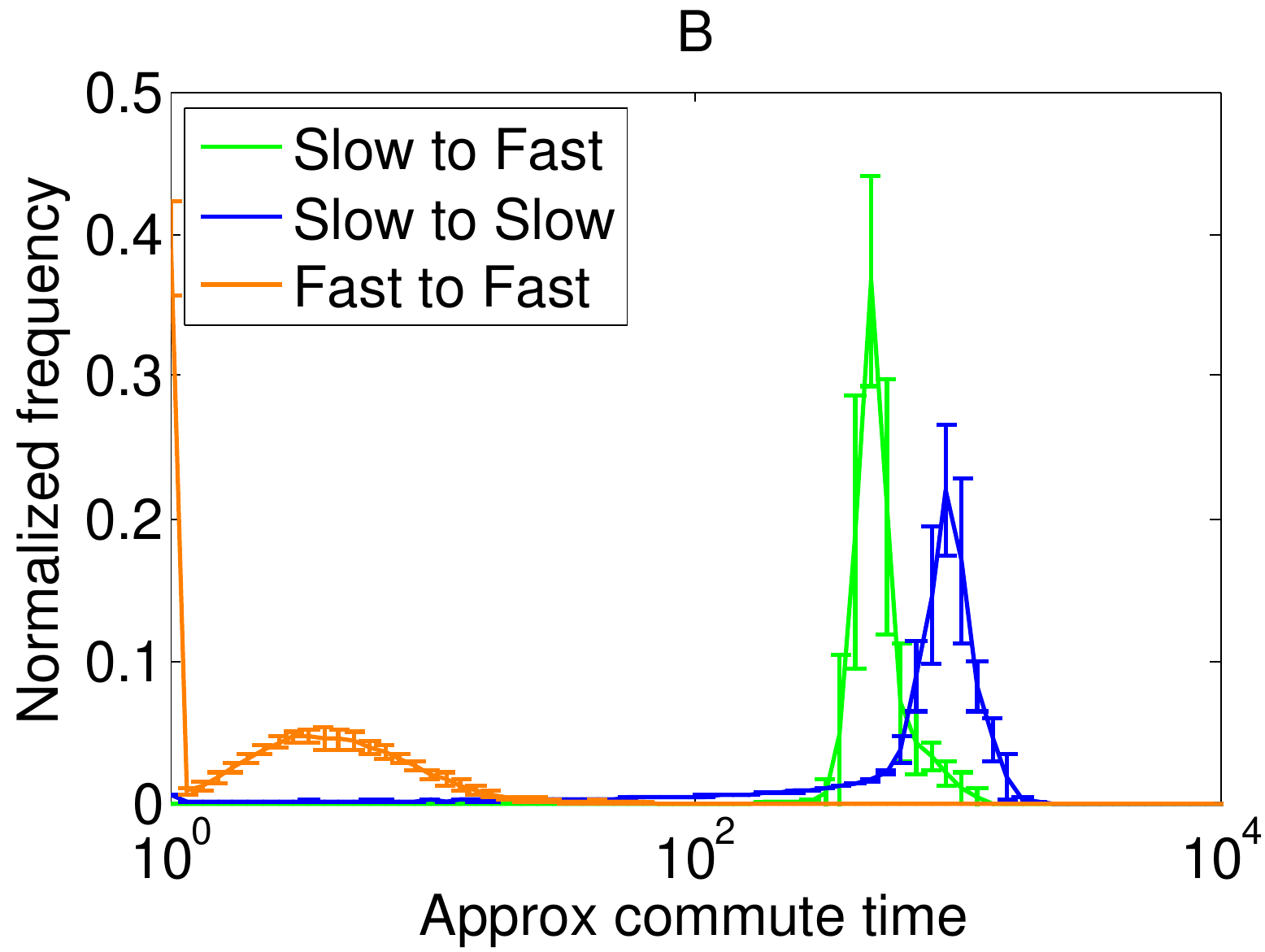} 
}
  \caption{Histogram of  $\kappa^\prime$. Left:
    slow graph $\sm S$ and fast graph $\sm F$. Right: $\kappa^\prime$
    for the three types of transition between the subgraphs of the  
    fused graph $\Gamma^\ast$. Note the logarithmic scale on the
    horizontal axes.
    \label{fig:kappadistributions}}
\end{figure}
\begin{figure}[H]
  \centerline{
    \includegraphics[width=.37\textwidth]{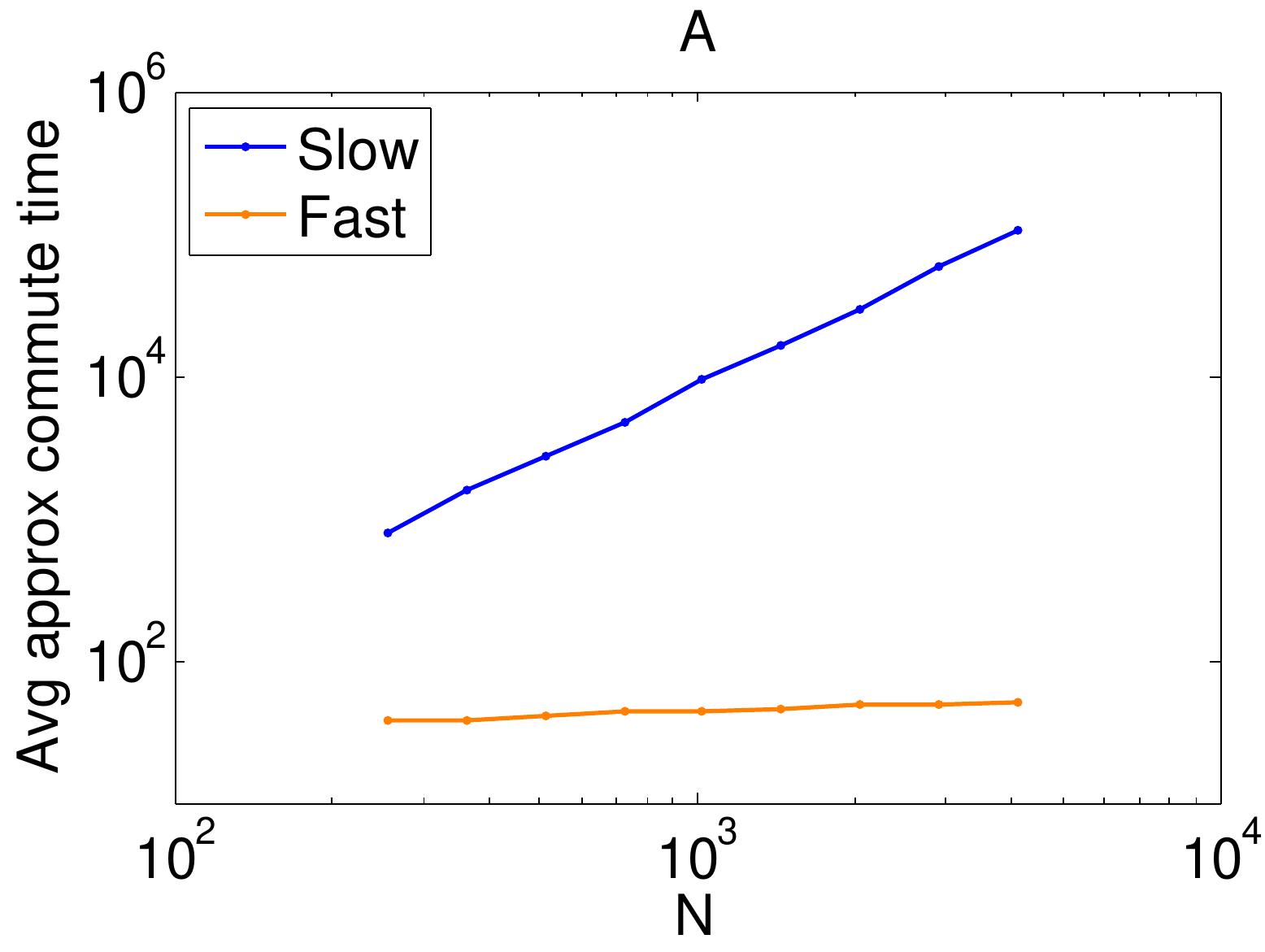} 
    \includegraphics[width=.37\textwidth]{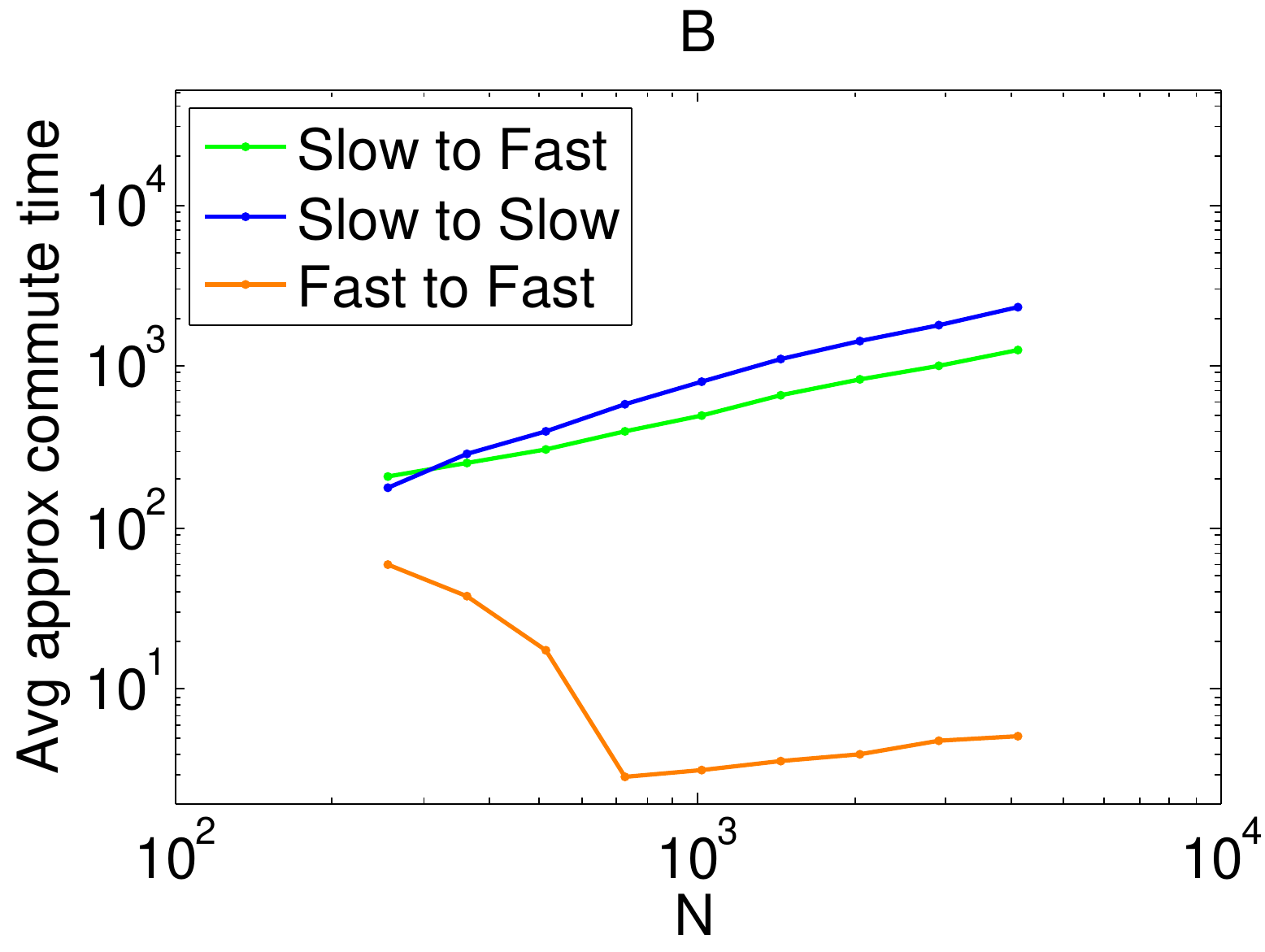} 
}
  \caption{ $\kappa^\prime$ as function of $N$. Left: slow graph $\sm
    S$ and fast graph $\sm F$. Right: $\kappa^\prime$
    for the three types of transition between the subgraphs of the  
    fused graph $\Gamma^\ast$.
    \label{fig:kappaboundsexperimental}}
\end{figure}
\noindent Figure \ref{fig:kapparatios}-B). These results confirm that the
embedding $\Phi$ will further concentrate the vertices of the fast
graph if $d'$ is chosen to be much smaller than $N$. We have observed
experimentally that choosing $d' \approx \ln(N)$ leads to the smallest
ratio of averages, not only on the graph models, but also on the
general patch-graphs studied in section \ref{sec:experiments}.

The enslaving of the fused graph by the slow graph is clearly shown in
Figure \ref{fig:kappadistributions}-B, where the normalized
histogram of $\kappa^\prime$ is shown for the three types of
transition between the subgraphs of the fused graph $\Gamma^\ast$:
slow $\rightarrow$ slow, fast $\rightarrow$ fast, and slow
$\rightarrow$ fast. The histogram of the slow $\rightarrow$ fast
transition is very similar to the histogram of the slow $\rightarrow$
slow transition, clearly indicating that once the random walk is
trapped in the slow subgraph, the presence of the fast subgraph does
not help the random walk escape from the slow graph. We also notice
that the average of $\kappa^\prime$ for the fast $\rightarrow$ fast
transition is roughly two orders of magnitude smaller than the average
of $\kappa^\prime$ for the slow $\rightarrow$ slow, or slow
$\rightarrow$ fast transitions. In addition, the variance of each
distribution is small enough to limit the overlap between the
distributions.

Figure \ref{fig:kappaboundsexperimental} displays $\kappa^\prime$ as a
function of the number of vertices $N$, where $d' = \ln N$. Again,
this result confirms that the asymptotic analysis of the ratio
$\kappa_{\sm F}/\kappa_{\sm S}$, performed in Corollary
\ref{prop:prop21}, actually holds for very small values of
$N$. Indeed, whether the slow and fast graphs are considered
separately (Figure \ref{fig:kappaboundsexperimental}-A), or are the
components of the fused graph (Figure
\ref{fig:kappaboundsexperimental}-B), the ratio $\kappa_{\sm
  F}/\kappa_{\sm S} \rightarrow 0$ (note the logarithmic scale). It is
important to bear in mind that when analyzing images, $N$ is typically
of the order of $10^6$ and therefore our theoretical analysis will
hold without any difficulty. Lastly, we again note in Figure
\ref{fig:kappaboundsexperimental}-B that the transitions slow
$\rightarrow$ fast in the fused graph have the same dynamics as the
transition slow $\rightarrow$ slow.

\subsection{Summary of the experiments}
\label{ss:theoryconclusion}
We have confirmed experimentally that embedding the fused graph using
$\Phi$ shrinks the mutual distance between vertices of the fast
subgraph, effectively concentrating these vertices closer to one
another. As a result, the embedding helps divide the fused graph into
the slow and the fast subgraphs by concentrating the vertices of the
fast subgraph away from the vertices of the slow subgraph.  Our
analysis of the embedding is based on the fact that $\Phi$
approximately preserves the commute time measured on the fused
graph. Furthermore, we have demonstrated that a truncated version of
the commute time, $\kappa^\prime$, is even more conducive to
identifying vertices of the fast subgraph of the fused graph.

The implication of these results is that the embedding of the true
patch-graph $\Gamma$ using $\Phi$ will concentrate the ``anomalous''
patches, which contain rapid changes in the signal, away from the
baseline patches. This concentration of the fast anomalous patches
happens for values of the embedding dimension $d'$ that are of the
order of $\ln (N)$: this choice of $d'$ results in a low-dimensional
embedding of the patch-graph. Because the fast patches are more
clustered after embedding, their detection -- for the purpose of
detection of anomalies, classification, or segmentation -- will become
much easier.  Finally, we note that our theoretical analysis can be
extended to a more general context where patches are replaced by a
vector of local features extracted from elements of a large
dataset. The only requirement is that the graph of features exhibit a
geometry similar to the fused graph $\Gamma^\ast$.
\section{Numerical experiments with synthetic signals}
\label{sec:experiments}
In this section we validate our theoretical results using synthetic
signals. Each signal is the realization of a stochastic process with a
prescribed autocorrelation function. We study two types of stochastic
processes: one that generates signals that transition from low to high
local frequency, and a second one that yields signals with varying local smoothness. We argue that
these signals embody the types of local changes that are of
fundamental importance in many areas of image processing. For both
classes of signals, we embed the patch-sets using $\Phi$ in
(\ref{eqn:lowdparam}). We study the property of the embedding by
quantifying the average commute time $\kappa^\prime$, defined in
(\ref{approx-commute}) between fast%
\begin{figure}[H]
  \centerline{
    \includegraphics[width=.6\textwidth]{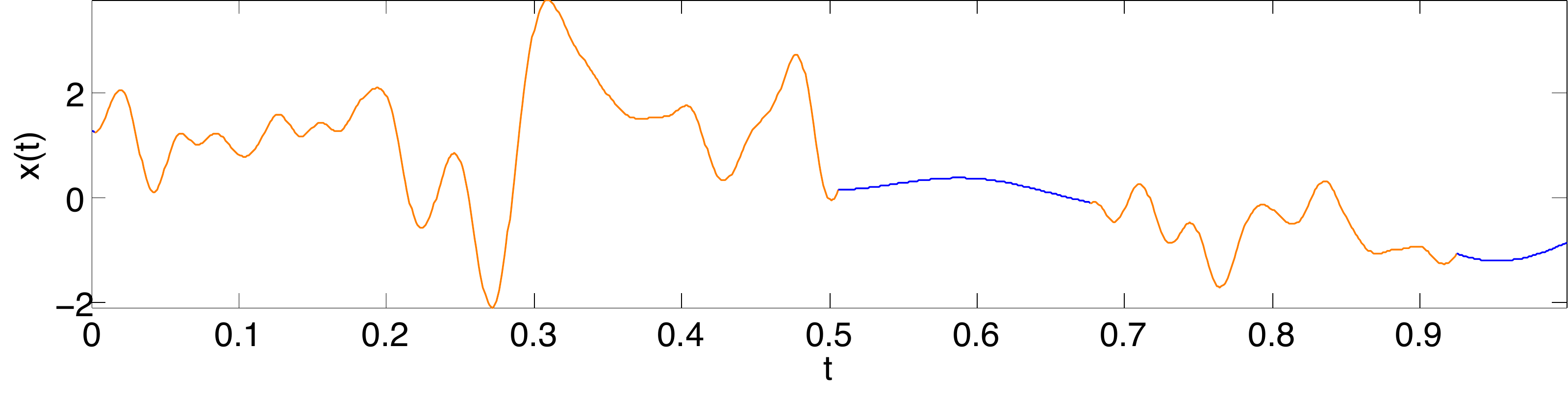} 
  }
  \caption{A realization of the time-frequency model. The low
    frequency portion ($\beta_{\sm S} = 8$) is shown in blue; the high
    frequency portion ($\beta_{\sm F} =256$) is shown in orange. There
    are four subintervals ($\mu =3$).
    \label{fig:autocorrelations1}}
\end{figure}
\begin{figure}[H]
  \centerline{
    \includegraphics[width=.6\textwidth]{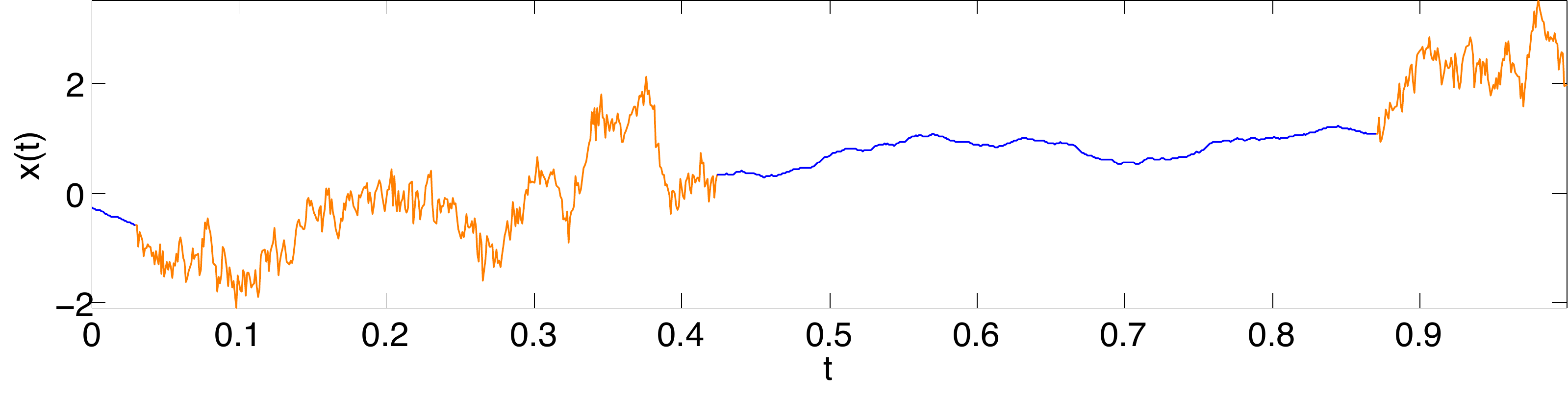} 
  }
  \caption{A realization of the local regularity model. The smooth
    portion ($H_\sm S = 0.9$) is shown in blue; the irregular portion 
    ($H_{\sm F}=0.3$) is shown in orange. There are four subintervals ($\mu =3$).
    \label{fig:autocorrelations11}}
\end{figure}
\noindent and slow patches, and we compare
the numerical results with the theoretical predictions given in
section \ref{ssec:theory}.
\subsection{The signals} \label{ssec:syntheticsignalmodels} We
consider two types of models: a time-frequency signal model and a
local regularity signal model. Each model is characterized by an
autocorrelation function.  The autocorrelation function can be
modified using a parameter that controls the local frequency, or the
local regularity of the signal. We partition the interval $[0,1]$ into
subintervals over which the autocorrelation parameter is kept
constant. The  parameter alternates between two different
values creating subintervals of alternating local frequency, or
alternating local regularity. The number of alternations is chosen
randomly according to a homogeneous Poisson process with intensity
$\mu$: there are on the average $\mu +1$ subintervals. A simpler
version of this model has been used in \cite{Cohen:1997} to mimic the
presence of edges in images. Unlike the model used in
\cite{Cohen:1997}, we adjust the signal defined on each subinterval so
that the result is continuous on $[0,1]$.  In all experiments that we
report here we use $\mu = 3$. The autocorrelation function associated
with the time-frequency signal model is given by
\begin{equation}
  \E(x(t)\overline{x(t+\tau)}= 2\left(\frac{1+\cos \left(2\pi
        \tau\right)}{2}\right)^{\beta}-1, 
  \label{eqn:autocorrelation}
\end{equation}
where $\tau\in [0,1)$, $\beta\ge 0$.  As the autocorrelation parameter
$\beta$ increases, the range of frequencies present in the signal also
increases. Figure \ref{fig:autocorrelations1} displays a realization
of this model where the signal's covariance parameter alternates four
times between $\beta_{\sm S} = 8$, and $\beta_{\sm F} = 256$.  See
appendix \ref{ssec:generatingzk} for more on generating a signal from
the time-frequency signal model.  The autocorrelation function associated with the
local regularity signal model is equal to that of fractional Brownian
motion, given by
\begin{equation}
  \E(x(\tau_1)\overline{x(\tau_2)}) =
  \frac{1}{2}\big(|\tau_1|^{2H}+|\tau_2|^{2H}-|\tau_2-\tau_1|^{2H}\big), 
\end{equation}
where $H$ is the Hurst parameter. As $H$ decreases, the local
regularity decreases.  A realization of this model is shown in Figure
\ref{fig:autocorrelations11} where the signal's covariance parameter alternates four times
between $H_\sm{S} = 0.9$ and $H_\sm{F}=0.3$.  We use the method
described in \cite{abry96} to generate the fractional Brownian motion.
\subsection{Embedding the patch-graph}
For each realization of a specific signal model, we construct a
patch-set of $N = 1024$ maximally overlapping patches. The patch size
is given by $d =32$ for the time-frequency model, and $d=16$ for the
local regularity model. We compute the embedding $\Phi$
(\ref{eqn:lowdparam}) and keep $d'$ eigenvectors $\phi_k$.  Figure
\ref{fig:autocorrelations2} shows the patch-set associated with the
realization of the time-frequency signal displayed in Figure
\ref{fig:autocorrelations1} before (left) and after (right)
embedding. The scatterplot before embedding is computed using the
first three principal components. Figure \ref{fig:autocorrelations22}
shows the patch-set associated with the realization of the local
regularity signal displayed in Figure \ref{fig:autocorrelations11}
before (left) and after (right) embedding. The fast patches of the
time-frequency signal are the orange patches extracted from the high
frequency segments. The slow patches are the blue patches extracted
from the low frequency sections. Similarly, the fast patches of the
local regularity signal are the orange patches extracted from the
irregular segments, and the slow patches are the blue patches
extracted from the smooth sections. For both signals, the fast patches
are scattered across the space before embedding. After embedding, the
fast patches are aligned along smooth curves. This visual impression
is confirmed by computing the mutual distance between patches after
embedding, $\|\Phi(\bo x_n)-\Phi(\bo x_m)\|$. In principle, we should
report the value of the Lipschitz ratio%
\begin{figure}[H]
\centerline{
    \includegraphics[width=.3\textwidth]{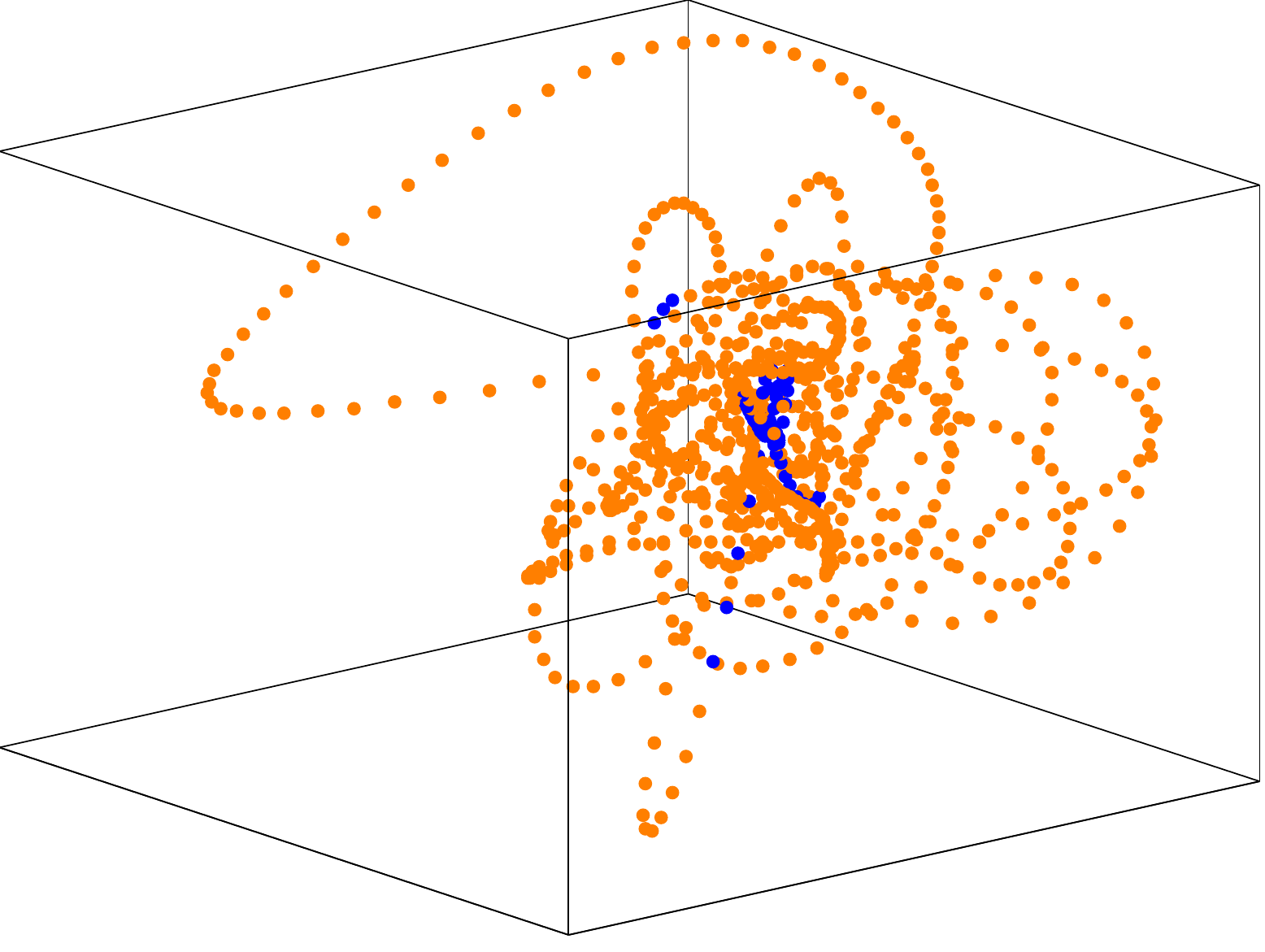} 
    \includegraphics[width=.3\textwidth]{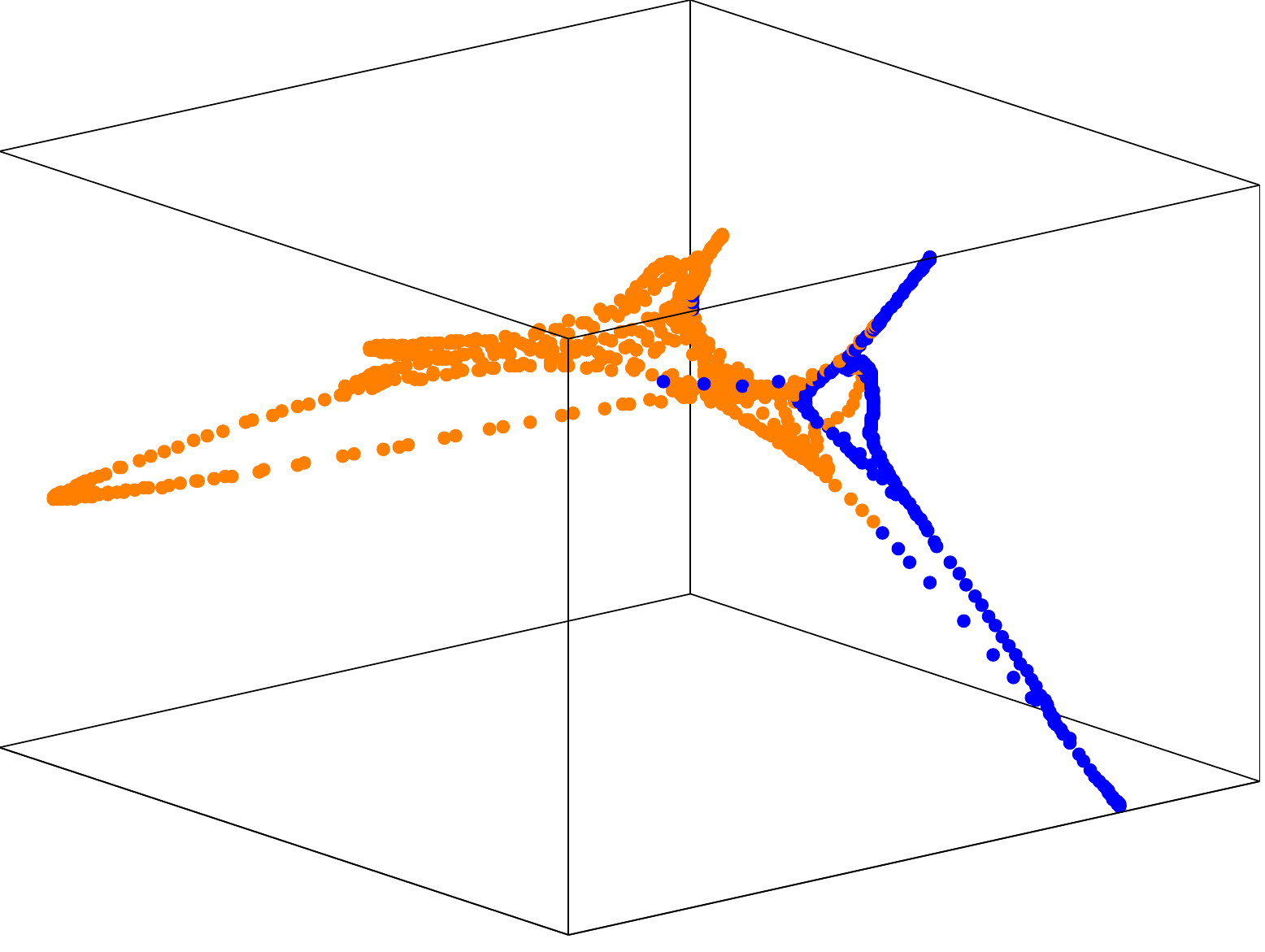} 
}
  \caption{Patch-set of the time-frequency signal (see Figure
    \ref{fig:autocorrelations1}) before (left) and after (right)
    embedding.  The color-code matches the color used in the plot of the
    signal: blue = low frequency, orange = high
    frequency. \label{fig:autocorrelations2} }
\end{figure}
\begin{figure}[H]
  \centerline{
    \includegraphics[width=.3\textwidth]{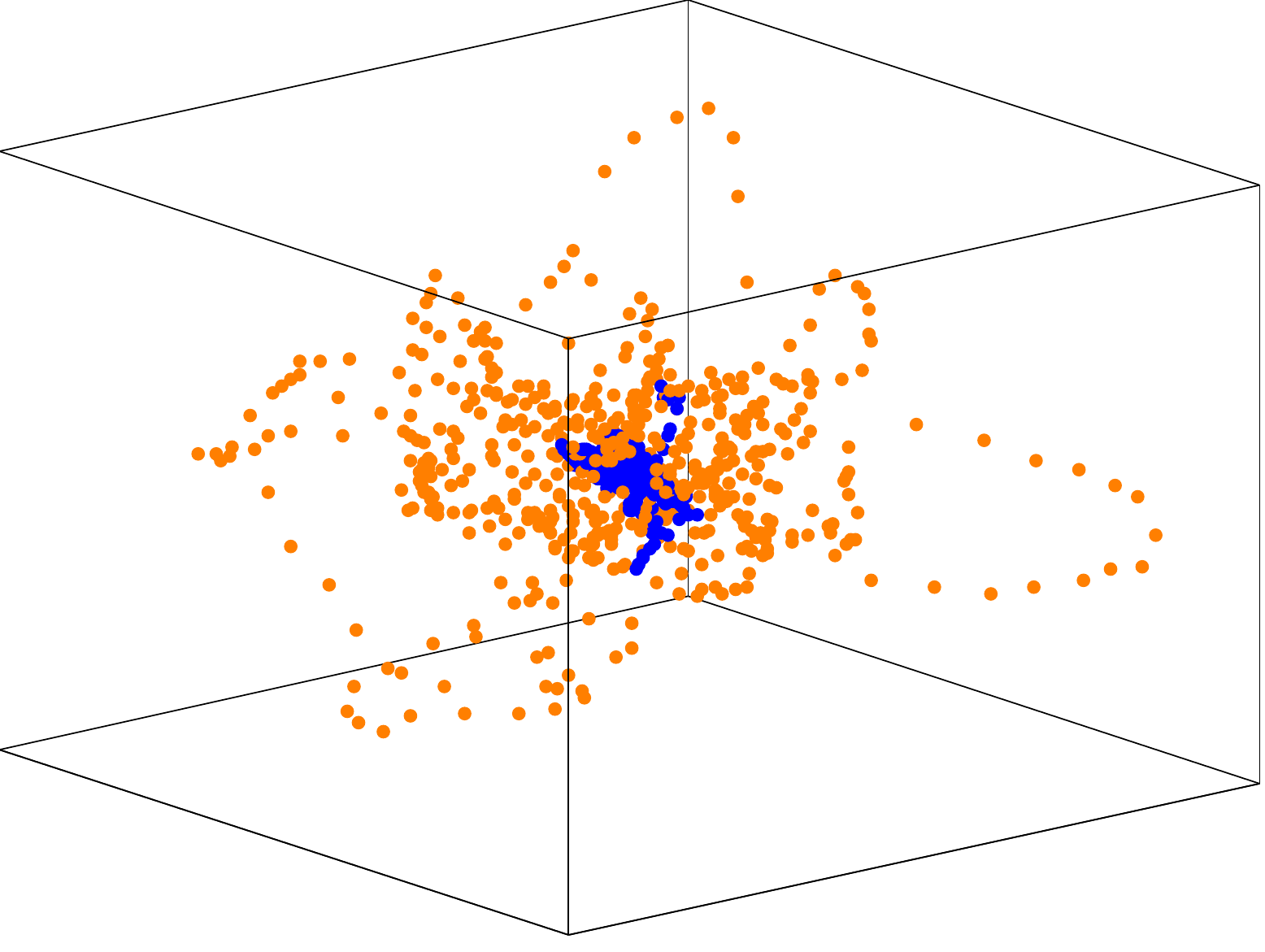} 
    \includegraphics[width=.3\textwidth]{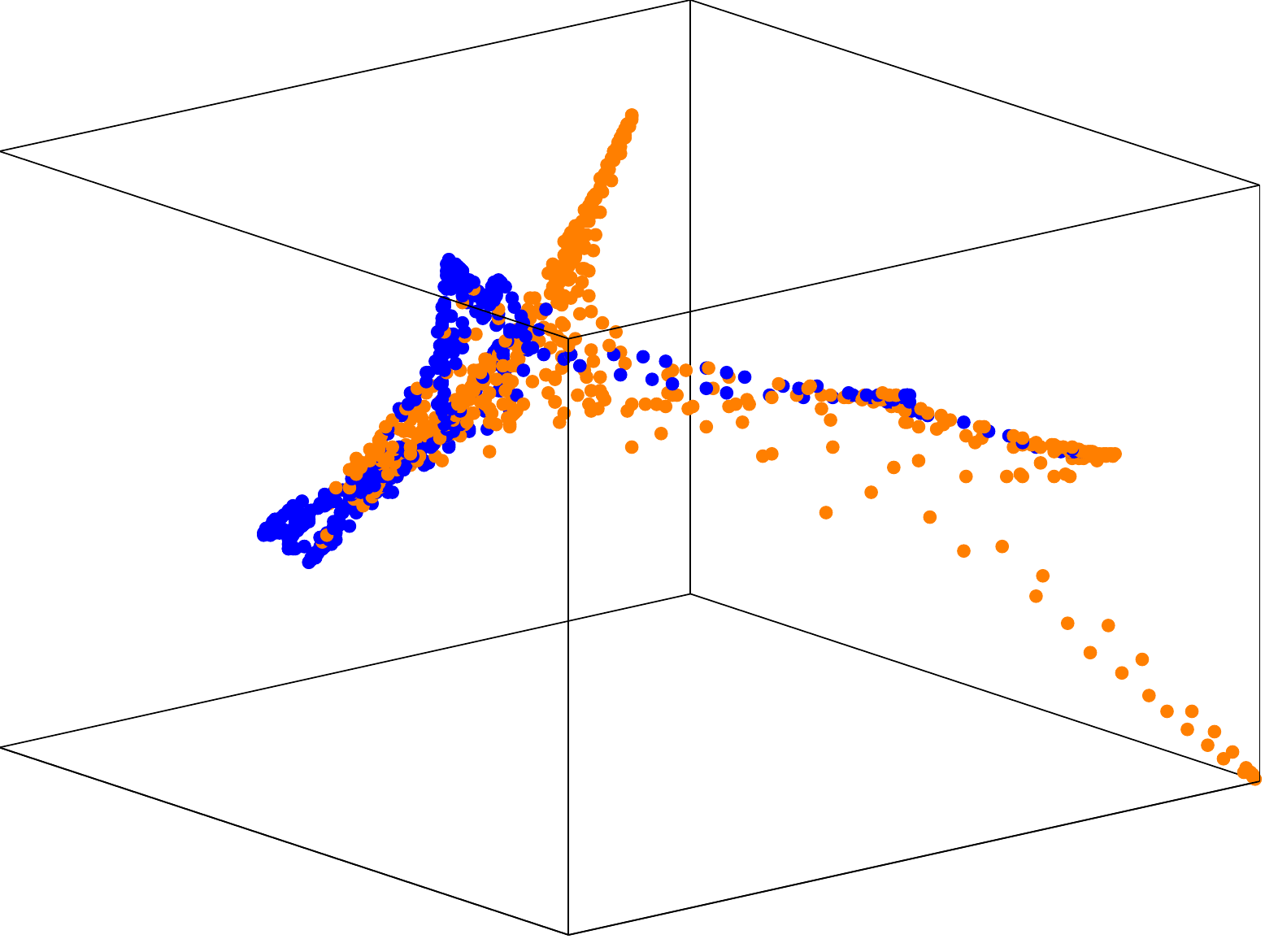} 
}
  \caption{Patch-set of the local regularity signal (see Figure
    \ref{fig:autocorrelations11}) before (left) and after (right)
    embedding.  The color-code matches the color used in the plot of the
    signal: blue = smooth, orange = irregular. \label{fig:autocorrelations22} }
\end{figure}
\noindent 

\begin{figure}[H]
  \centerline{
    \includegraphics[width=.4\textwidth]{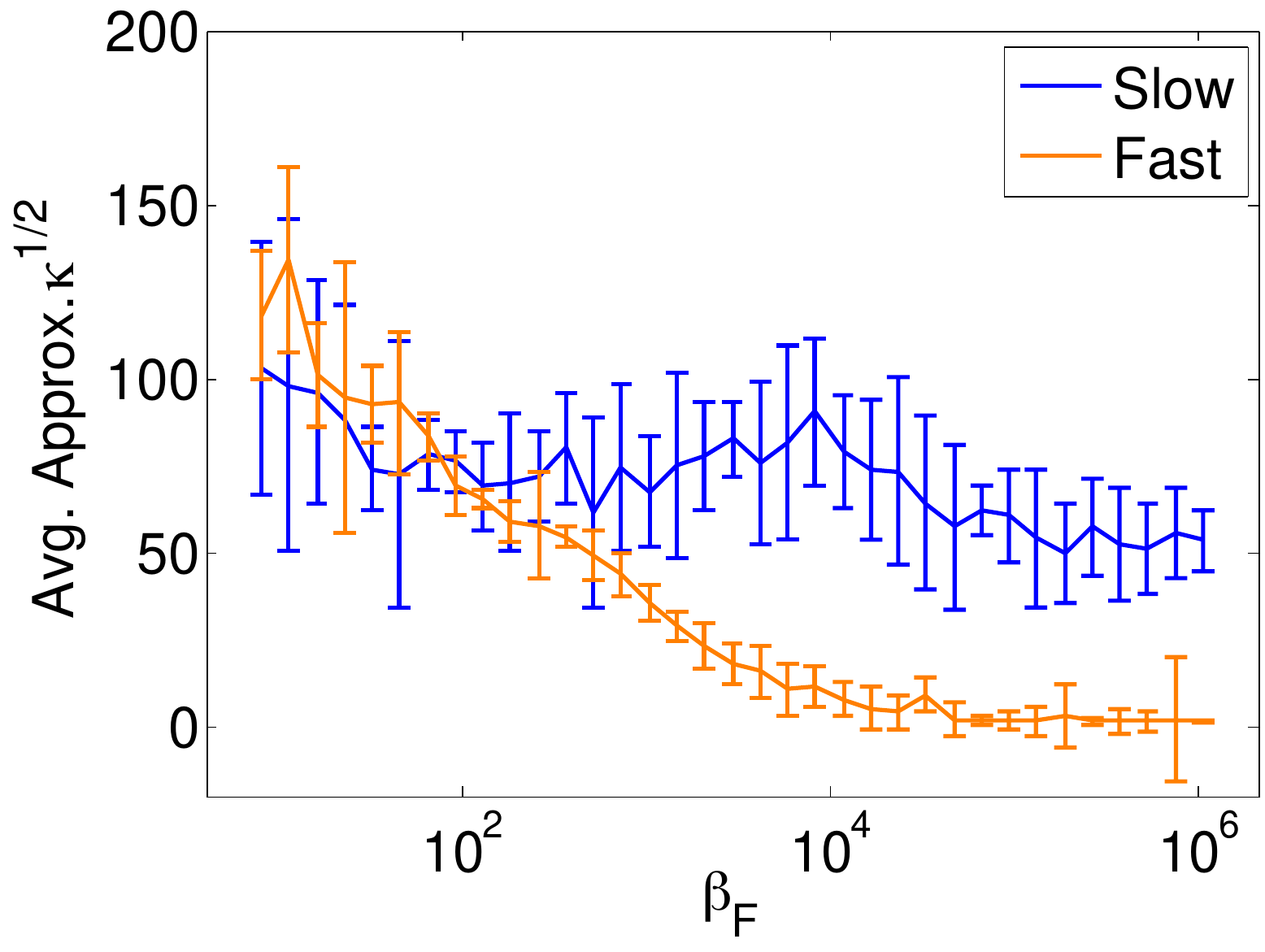} 
    \includegraphics[width=.4\textwidth]{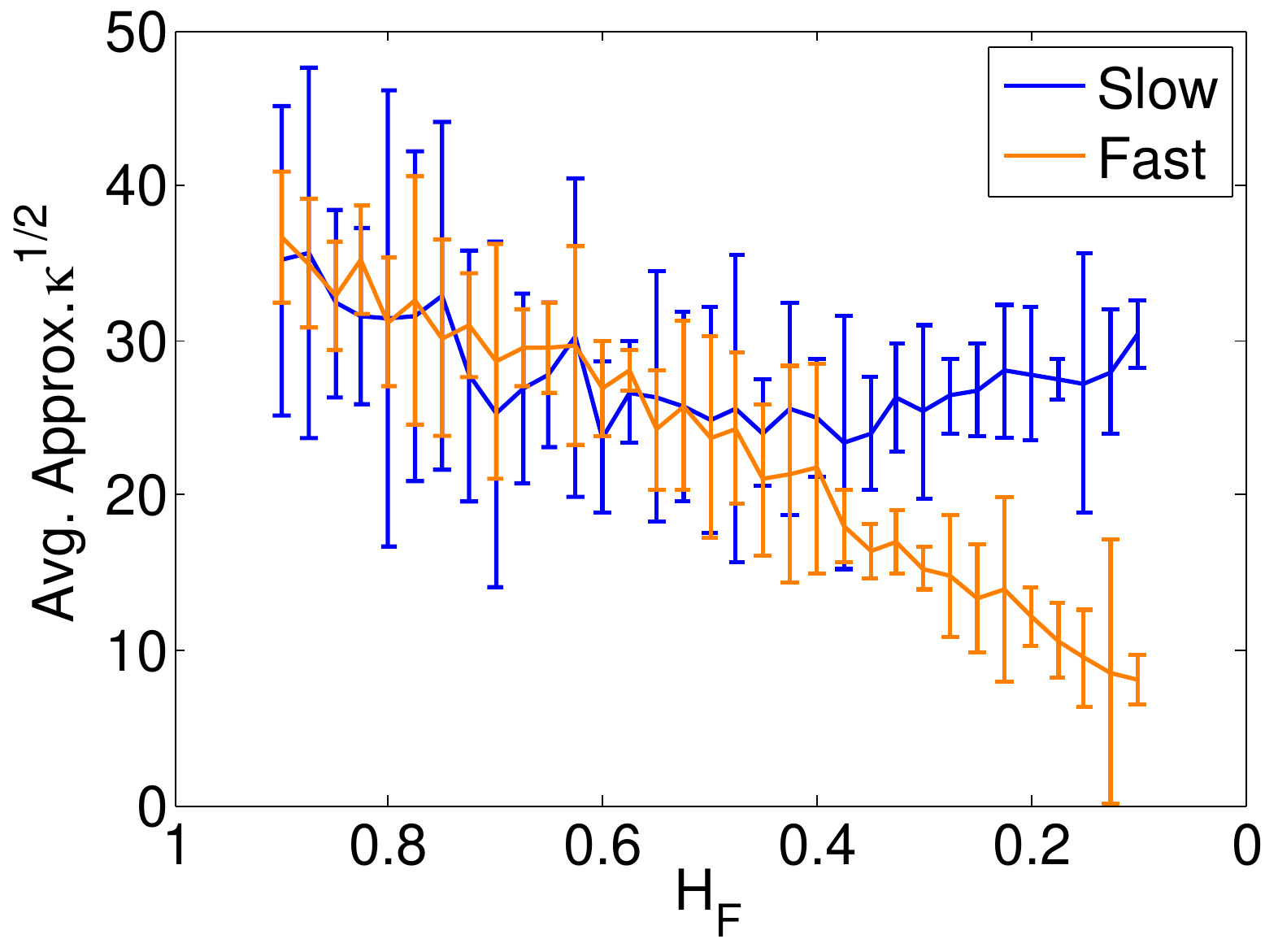} 
}
  \caption{$\sqrt{\kappa^\prime}$ for slow (blue) and fast patches (orange)
    for the time-frequency model (left) and the local regularity model
    (right) as a function of the ``roughness'' of the fast
    patches. The slow patches were generated using $\beta_{\sm S} = 8$
    (left) and $H_{\sm S} = 0.9$ (right).
    \label{fig:autocorrelations3}}
\end{figure}
\noindent $\|\Phi(\bo x_n)-\Phi(\bo
x_m)\|/\|\bx_n-\bx_m\|$ to quantify the contraction experienced
through the mapping $\Phi$.  However, we have noticed that because the
mutual distances $\|\bx_n-\bx_m\|$ between fast patches is always
large (as explained in section \ref{ssec:firstlookpatchgraph}), the
Lipschitz ratio ends up being always small for fast patches.
Therefore studying the size of the Lipschitz ratio associated with $\Phi$
does not reveal whether the map concentrates the fast patches or not,
but only indicates that the sampling of the fast patches (in the
patch-set) is coarse. For this reason we prefer to study how
$\|\Phi(\bo x_n)-\Phi(\bo x_m)\|$ varies for pairs of slow and fast
patches.  Based on our theoretical analysis, we expect that after the
embedding the mutual distance between fast patches will becomes much
shorter than the mutual distance between slow patches.

We point out that the eigenvectors $\phi_k$ used in the embedding
$\Phi$ (\ref{eqn:lowdparam}) are designed to have, on average, small
gradients (as measured along edges of the graph). Indeed, these
eigenvectors are also the eigenvectors of the graph Laplacian
\cite{chung97}, and therefore minimize a Rayleigh ratio that
quantifies the average norm of the gradient of $\phi_k$. Thus, if
we further restricted our computation of the commute times inside each
subset of fast and slow patches to only those patches that were
connected by an edge in the graph, we would expect to see smaller
values and little dependence on whether or not the patch was fast or
slow. However, since our theoretical analysis of section
\ref{ssec:theory} is based on the average commute time between all
vertices belonging to the fast or slow graph models, we choose to
compute the commute times between all patches, not just between
patches that are connected with an edge.

For each signal model, we compute the square root of the average approximate
commute time 
\begin{equation}
  \sqrt{\kappa^\prime} = \sqrt{\frac{2}{N(N-1)} \sum_{n< m} \|\Phi (\bx_n) - \Phi(\bx_m)\|^2}.
  \label{approx-commute2}
\end{equation}
for pair of patches $\bx_n,\bx_m$ that are either both fast,
or both slow patches. We study how $\kappa^\prime$ varies as a
function of the autocorrelation parameter that controls how irregular
the fast patches are. $\kappa^\prime$ was computed using ten
realizations of each signal model. The slow patches were generated
using $\beta_{\sm S} = 8$ and $H_{\sm S} = 0.9$. As before, we used $N
= 1024$ and $d = 32$ for the time-frequency model and $d = 16$ for the
local regularity model. We observed that the overall shapes of the
curves in (\ref{fig:autocorrelations3}) is invariant under variation
of the parameters (as along as the ratio of the patch length to the
average subinterval length remains less than 10\%). The dimension $d'$
of the embedding used to compute $\kappa^\prime$ was chosen so that
$(1-\lambda_{k})^{-1}<0.1(1-\lambda_{2})^{-1}$, for all
$k>d'+1$. Figure \ref{fig:autocorrelations3} shows
$\sqrt{\kappa^\prime}$ as a function of the frequency parameter
(left), and smoothness parameter (right). We note that as the signal
exhibits more rapid, local changes (increasing $\beta_{\sm F}$, or
decreasing $H_\sm{F}$), the associated fast patches are increasingly
concentrated (smaller $\|\Phi (\bx_n) - \Phi(\bx_m)\|$) through the
parametrization.  These experiments confirm that the theoretical
analysis can be applied to the true patch-set constructed from
realistic signals.%
\section{Discussion}
\label{sec:discuss}
Using realistic graph models, probabilistic arguments, and the
connection between the commute time of random walks on graphs and the
embedding (\ref{eqn:lowdparam}), we provided a theoretical explanation
for the success of the methods that analyze and process images based
on graphs of patches. Our results establish that the embedding of the
patch-graph of an image based on the commute time between vertices of
the graph reveals the presence of patches containing rapid changes in
the underlying signal or image by concentrating these patches close to
one another while leaving the patches extracted from the slowly
changing portions of the signal organized along low-dimensional
structures.
\subsection{Parameter selection}
\label{ssec:paramselect}
\subsubsection{Choosing the patch size}
In this work we are interested in the \textit{local} behavior of the
image, and therefore $d$ should remain of the order of what we
consider to be the local scale. We also note that as $d$ becomes
large, the number of available patches ($N/d$) becomes smaller, making
the estimation of the geometry of the patch-set more difficult, since
patches now live in high-dimension. Another consequence of the ``curse
of dimensionality'' is that the distance between patches becomes less
informative for large values of $d$. If the original signal is
oversampled with respect to the true physical processes at stake, then
one can coarsen the sampling of the patch-set in the image 
domain. In practice, it would be more advisable to coarsen the
underlying continuous patch-set, which is a nontrivial question.

\subsubsection{Choosing edge weights}
In general, two principles guide the choice of edge weights in the
patch-graph. On the one hand, patches that are very close should be
connected with a large weight (short distance), while patches that are
faraway should have a very small weight along their mutual edge. This
principle is equivalent to the idea of only trusting local distances
in $\real^d$. Such a requirement is intuitively reasonable if we
assume that the patch-set represents a discretization of a nonlinear
manifold in $\real^d$. In this situation, we know that when the points
on the manifold are very close to another, the \textit{geodesic
  distance} is well approximated by the Euclidean
distance. Conversely, because of the presence of curvature, the
Euclidean distance is a poor approximation to the geodesic distance on
the manifold when points are far apart. Because the only information
available to us is the Euclidean distance between patches, we should
not trust large Euclidean distances.

On the other hand, as observed in Section
\ref{ssec:firstlookpatchgraph}, the fast patches, which contain rapid
changes, are all very far apart (large $\rho^2(\bx_n,\bx_m)$). Therefore
the probability that the random walk escapes the fast patch $\bx_n$
and jumps to a different patch $\bx_m$, which is given by 
\begin{equation*}
\frac{w_{n,m}}{\sum_l w_{n,l}} = \frac{e^{\displaystyle -\rho^2(\bx_n,\bx_m)/\sigma^2}}{\sum_l w_{n,l}},
\end{equation*}
is always much smaller than the probability of staying at $\bx_n$,
which is given by
\begin{equation*}
\frac{1}{\sum_l w_{n,l}}.
\end{equation*}
In order to avoid that the random walk be trapped at each node $\bx_n$, we
``saturate'' the distance function by choosing $\sigma$ to be very
large. In this case, for all the nearest neighbors $\bx_m$ of $\bx_n$,
we have $w_{n,m} \approx 1$, and the transition probability is the
same for all the neighbors, $\bo P_{n,m} \approx 1/\nu$.  This choice of
$\sigma$ promotes a very fast diffusion of the random walk locally.
We note that choosing a large $\sigma$ may be avoided if
self-connections are not enforced (i.e. $w_{n,n} = 0$). However,
self-connections are a necessary technical requirement to prove that
the Markov process is aperiodic, which is required to prove the
equality (\ref{eqn:diffusiondist}) \cite{Coifman06a}.

We note that choosing $\sigma$ to be very large does not entirely
obliterate the information provided by the mutual distance between
patches, measured when patches are projected on the sphere with
$\rho(\bx_n,\bx_m)$, (\ref{rho}). Indeed, $\rho(\bx_n,\bx_m)$ is used
to select the nearest neighbors of each patch, and therefore allows us
to define a notion of a local neighborhood around each patch. Choosing
$\sigma$ to be very large forces a very fast diffusion within this
neighborhood, irrespective of the actual distances
$\rho(\bx_n,\bx_m)$. Alternatively, we could consider choosing
$\sigma$ to vary adaptively from one neighborhood to another. The
parameter $\sigma$ could be small when patches are extremely close to
one another, while $\sigma$ could be large when the patches are at a
large mutual distance of one another.  This notion is the foundation
of the self-tuning weight matrix, which adjusts its weights based on a
point's local neighborhood \cite{zelnik04tuning}.

\subsection{Extensions and generalizations}
\label{ssec:extension}
In general, the patch-set of an image consists of more than two
homogeneous subsets. For example, one could partition an image
patch-set into uniform patches, edge patches, and texture patches. Our
experience \cite{Taylor11} with a generalization of the time-frequency
signal model (section \ref{sec:experiments}) indicates that we can
still separate the patches when the signal is composed of up to four
different local behaviors (specified by four different values of the
parameter in the autocorrelation function). Another extension of this
work involves the embedding of a patch-set constructed from a library
of images. Recent studies \cite{Lee03} indicate that high-contrast
patches extracted from optical images organize themselves around
2-dimensional smooth sub-manifold (\cite{Carlsson08}). This idea has
also been exploited to construct dictionaries that lead to very sparse
representations of images (e.g. \cite{elad10}, and references
therein). Finally, we note that our results about the embedding of the
slow (\ref{slowgraph}), fast (\ref{fastgraph-def}), and the fused
graph (\ref{fused}) are very general and can be applied to datasets
\cite{cazals10} where the corresponding graph exhibits a similar
structure. For instance, one could imagine using this idea to study
social networks, where the concept of cliques would correspond to fast
subgraphs.

\subsection{Related work}
\label{ssec:relatedwork}

The concept of patches has proven extremely useful in many areas of
image analysis: texture analysis/synthesis \cite{CGF:CGF1407}, image
completion \cite{Mobahi:2009,zhou2009}, super-resolution
\cite{4694003}, and denoising
\cite{bougleux09,Buades05,Gilboa08,Katkovnik10,peyre08,SingerBoaz,Szlam08,zhou2009}. While
these references do not explicitly construct a patch-graph, these
works all compute distances between patches, and use the nearest
neighbors of each patch to analyze and process patches. Recent works
on the analysis of time-series also use patches and construct networks
of patches \cite{borges07,lac08,PhysRevLett.96.238701}. All these references provide experimental
evidence for the success of working on image (or signal) patches.

In this work, we provide a theoretical justification for this
experimental success. We study the effect of the embedding $\Phi$
(\ref{eqn:lowdparam}) on the organization of the patch-set. Our
analysis assumes that there exists a natural partition of the patch-set
into two classes: patches extracted from the smooth baseline and
patches that contain sudden local changes of the image intensity or
signal value. It is interesting to compare and contrast our work to
the work of Singer, Shkolnisky, and Nadler \cite{SingerBoaz} who
provide a different theoretical explanation for the success of
patch-based denoising algorithms. The authors in \cite{SingerBoaz}
treat the matrix $\bo P$ as a filter, which acts on an $N$-dimensional
column-vector-representation of the signal or the image. Each
multiplication of the probability distribution by $\bo P$ is
interpreted as the evolution of the diffusion process on the
patch-graph over a time-step of duration $\sigma$. The results in
\cite{SingerBoaz} rely on the convergence of a properly normalized
version of $\bo P$ toward the backward Fokker-Planck operator. The
authors can compute the eigenfunctions of the operator when the signal
is either a one-dimensional constant function perturbed by Gaussian
noise, or a one-dimensional step function also contaminated by
Gaussian noise.

In contrast, our analysis is based on the analysis of the commute time
on graphs that epitomize the patch-graph constructed from two classes
of patches. In addition, we need not assume that the image is piece-wise
constant. In fact, our experiments demonstrate that our analysis can
be applied to detect many different types of anomalies: changes in the
local frequency content, changes in local regularity,
etc. Furthermore, our theoretical analysis holds for finite values of
the number of patches $N$. It is interesting to note that Singer et
al. study the \textit{mean first-passage time} between patches
extracted from the noisy step function. The mean first-passage time is
derived from the hitting time, which is used to define the commute
time. The authors in \cite{SingerBoaz} use an energy argument to
explain the existence of a large mean first-passage time between
patches extracted from either side of the step function's
discontinuity. They argue that a high density of
patches is associated with a lower potential energy, and consequently
it will take longer for a random process to exit the well with such a
low potential. Finally, our results are not limited to patches of size
$d = 1$, as are the results in \cite{SingerBoaz}.

The energy argument in \cite{SingerBoaz} adds an interesting
interpretation to our analysis. Following this perspective, the
slow patches can be interpreted as points sampled from a probability
density function $P$ defined on $\real^d$ with a support that is
defined along a low-dimensional manifold. This localization leads to a
potential $U = -\log P$ with a deep and narrow well, from which the random
walk cannot escape. This argument agrees with our findings that the
average commute time between slow patches is very large, and thus, the
random walk spends considerably more time in the slow subgraph
before being able to reach a patch that is temporally faraway.

From a more general perspective, this work presents an investigation
into the diffusion process on the graphs models presented in Section
\ref{ssec:modelgraphs}. Our work is thus related to a large body of
work on the analysis of complex and random networks using
first-passage time (e.g. \cite{Condamin07} and references
therein). This area if usually motivated by physical problems such as
transport in disordered media, neuron firing, or energy flow on
power-grids instead of applications in signal processing.
\subsection{Open questions}
\label{ssec:unresolved}
While we obtained estimates for the average commute time on the fast
and slow graph models considered separately, it would be desirable to
obtain similar estimates on the fused graph. At the moment, our
analysis of the fused graph relies on numerical simulations.  We are
also aware of a small discrepancy in the upper bound on
$\kappa_{\sm{F}}$: this bound is increasing with $L$. In fact we
expect that the commute time on $\sm F(N,p)$ should decrease as $p$,
and therefore $L$, increases. The reason for this apparent
inconsistency is that the proof of (\ref{eqn:kappadc}) relies on a
loose upper bound for the effective resistance between two vertices,
which is provided by the geodesic distance on the graph
\cite{Chandra89theelectrical}. This is not a tight inequality on a
graph such as $\sm F(N,p)$. A more effective inequality, which could
improve the upper bound (\ref{eqn:kappadc}), relies on the computation of
the distribution of the number of paths $s$ of length at most $l$
between the two vertices. We could then use the fact that the commute
time is bounded from above by a constant times the ratio
$l/s$ \cite{Chandra89theelectrical}, which would decrease the
upper bound in (\ref{eqn:kappadc}).
\appendix
\section{The connectedness of the fast graph}
\label{sssec:connectedR}
It is necessary that the fast graph $\sm F(N,p)$ be connected to be
able to apply the spectral decomposition of the commute time.  To
ensure that the probability of $\sm F(N,p)$ being disconnected will
vanish as $N$ gets large, we must choose $Np>\log N$
\cite{durrettgd}. Since $p$ is defined as a function of $L$ in
(\ref{eqn:pdefined}), any requirement on $p$ ultimately constrains
$L$. First, because the maximum degree of a vertex in $\sm S(N,L)$ is
$2L+1$, according to (\ref{eqn:flatgraph}), we require

\begin{equation*}
  2L+1\leq N.
\end{equation*}
Manipulation of this inequality leads to 
\begin{equation*}
  \frac{L+1}{N}\leq \frac{1}{2}+\frac{1}{2N}.
\end{equation*}
We assume that $N\geq 2$, so that
\begin{equation*}
  \frac{L+1}{N}\leq \frac{3}{4}.
\end{equation*}
It follows that
\begin{equation*}
  \left(2-\frac{L+1}{N}\right)\geq \frac{5}{4}>1.
\end{equation*}
Therefore, rewriting (\ref{eqn:pdefined}) and using the last inequality we have 
\begin{equation*}
  p = \frac{L}{N-1}\left(2-\frac{L+1}{N}\right) >
  \frac{L}{N}\left(2-\frac{L+1}{N}\right) >\frac{L}{N}.
\end{equation*}
Therefore, choosing $L = c\log N$ for some $c>1$ ensures that $Np>\log
N$, and consequently, the probability of $\sm F(N,p)$ being
disconnected approaches zero as $N$ approaches infinity.
\section{Bounding the commute times in the graph models}
\label{sec:ctproofs}
\subsection{Proof of the lower bound on the average commute time in the slow graph}
\label{sssec:slowgraphproof}
In order to compute a lower bound on the average commute time, we
consider a fixed pair of vertices in the slow graph, $\bx_{n_0}$ and
$\bx_{m_0}$, and compute a lower bound on the commute time $\kappa
(\bx_{n_0}, \bx_{m_0})$. We can then compute the average of this lower
bound over all the pairs of vertices. To obtain the lower bound on
$\kappa (\bx_{n_0},\bx_{m_0})$ we use a standard tool to obtain lower
bounds on commute time: the Nash-Williams inequality
\cite{lyons11}. The Nash-Williams inequality is usually formulated in
terms of electrical networks. We prefer to present an equivalent
formulation that is directly adapted to our
problem.  We first introduce the concept of {\em edge-cutset}.\\

\begin{definition}
Let $V_1$ and $V_2$ be two disjoint sets of vertices. A set of edges
$E$ is an edge-cutset separating $V_1$ and $V_2$ if every path that
connects a vertex in $V_1$ with a vertex in $V_2$ includes an edge in $E$.
\end{definition}

Given a weighted graph, which may contain loops, we define a random
walk with the probability transition matrix $\bo P_{n,m} =
\bo W_{n,m}/\bo D_{n,n}$. Let $\bx_{m_0}$ and $\bx_{n_0}$ be two vertices. The commute
time between vertices $\bx_{m_0}$ and $\bx_{n_0}$, $\kappa(\bx_{m_0},\bx_{n_0})$ satisfies the
following lower bound.\\

\begin{lemma}[Nash-Williams]
If $\bx_{m_0}$ and $\bx_{n_0}$ are distinct vertices in a graph that are separated by
disjoint edge-cutsets $E_k, k =1,\ldots$, then
\begin{equation}
V \;\sum_k \left [\sum_{\{\bx_{n},\bx_{m}\} \in E_k} w_{n,m} \right]^{-1}
\leq \kappa(\bx_{m_0},\bx_{n_0}) \qquad \text{where $\{\bx_{m},\bx_{n}\}$ is an edge in the
  cutset $E_k$},
\end{equation}
\end{lemma}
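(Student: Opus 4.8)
The plan is to translate everything into the language of effective resistance and then run the classical flow-based argument for Nash--Williams. Since the excerpt already records the identity $\kappa(\bx_{m_0},\bx_{n_0}) = V R_{m_0,n_0}$ relating commute time to the effective resistance $R_{m_0,n_0}$ of the electrical network in which each edge $\{\bx_n,\bx_m\}$ carries resistance $1/w_{n,m}$, the stated bound is equivalent to
\[
  R_{m_0,n_0} \;\geq\; \sum_k \left[\sum_{\{\bx_n,\bx_m\}\in E_k} w_{n,m}\right]^{-1},
\]
and I would establish this latter inequality directly. Indexing edges by $e$ and writing $r_e = 1/w_e$ for the edge resistances, the main tool is Thomson's principle, which expresses $R_{m_0,n_0}$ as the minimum of the dissipated energy $\sm{E}(\theta) = \sum_e r_e\,\theta(e)^2$ taken over all unit flows $\theta$ from $\bx_{m_0}$ to $\bx_{n_0}$; in particular the actual unit current flow $\theta^\ast$ attains $\sm{E}(\theta^\ast) = R_{m_0,n_0}$.

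First I would record the defining property of an edge-cutset: because $E_k$ separates $\bx_{m_0}$ from $\bx_{n_0}$, every unit flow $\theta$ must push a net amount of at least one unit across $E_k$, so that $\sum_{e\in E_k}|\theta(e)| \geq 1$. Next, on each individual cutset I would apply the Cauchy--Schwarz inequality in the form $\left(\sum_{e\in E_k}|\theta(e)|\right)^{2} \leq \left(\sum_{e\in E_k} r_e\,\theta(e)^2\right)\left(\sum_{e\in E_k} 1/r_e\right)$. Combining this with the flow bound and using $1/r_e = w_e$ yields the per-cutset energy estimate $\sum_{e\in E_k} r_e\,\theta(e)^2 \geq \left(\sum_{e\in E_k} w_e\right)^{-1}$.

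Finally, because the cutsets $E_k$ are pairwise disjoint, summing these contributions over $k$ never double-counts an edge, so that
\[
  \sm{E}(\theta) \;\geq\; \sum_k \sum_{e\in E_k} r_e\,\theta(e)^2 \;\geq\; \sum_k \left(\sum_{e\in E_k} w_e\right)^{-1}.
\]
Since this holds for every unit flow $\theta$, it holds in particular for $\theta^\ast$, giving $R_{m_0,n_0} = \sm{E}(\theta^\ast) \geq \sum_k (\sum_{e\in E_k} w_e)^{-1}$; multiplying through by $V$ and using $\kappa = V R_{m_0,n_0}$ recovers the claimed bound. I expect the only genuinely delicate step to be the flow-across-cutset inequality $\sum_{e\in E_k}|\theta(e)|\geq 1$: making it rigorous requires partitioning the vertices into the side containing $\bx_{m_0}$ and its complement, then summing the Kirchhoff conservation constraints over the source side so that the net flux crossing $E_k$ equals the one unit injected at $\bx_{m_0}$. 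The Cauchy--Schwarz step and the disjointness bookkeeping are routine by comparison.
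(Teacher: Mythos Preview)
Your argument is correct and is the standard proof of the Nash--Williams inequality via Thomson's principle and Cauchy--Schwarz. Note, however, that the paper does not actually prove this lemma: it is quoted as a known tool with a citation to Lyons--Peres, and the appendix immediately proceeds to \emph{apply} it by constructing explicit cutsets in the slow graph. So there is no paper-side proof to compare against; what you have written is essentially the argument one finds in the cited reference, and the flow-across-cutset step you flag as delicate is handled exactly as you describe (sum Kirchhoff's law over the source side of the partition).
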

and where the volume of the graph is defined by $V=\sum_{i=1}^N
\sum_{j=1}^N w_{i,j}$.  

We now exhibit a sequence of edge-cutsets in the slow graph. We refer
to Figure \ref{fig:matrixtile} for the construction of the cutsets. 
We define the first cutset $E_1$. If $m_0 < L$, then $E_1$ needs a little more
attention and is defined as the set of $L$ edges $\{\bx_{i},\bx_{j}\}$, where $i$ and $j$ are defined by
\begin{equation}
\begin{cases}
i = 1, \ldots, m_0,\\
j = m_0 + 1, \ldots, L +i.
\end{cases}
\end{equation}
The edge-cutset $E_1$ is shown in the Figure \ref{fig:matrixtile} for
$m_0 =1$ (left) and $m_0=2$ (center), for $L=3$. The removal of this
set of edges prevents $\bx_{m_0}$ from being connected to
$\bx_{n_0}$. Indeed, the self loop on the diagonal (green entry) does
not allow the random walk to move toward $\bx_{n_0}$. This can be also
be visualized in Figure \ref{fig:bluegraph}, where $E_1$ is the
leftmost set of edges that connect $\bx_{m_0}$ to that part of the
graph that is connected to $\bx_{n_0}$. The sum of edge weights in
$E_1$ is at most $L(L+1)w_{\sm{S}}/2$.  If $m_0 \ge L$, then $E_1$, is
defined as the other generic edge-cutsets.

We now define the generic edge-cutsets $E_k$ as the set of $L(L+1)/2$
edges $\{\bx_{i},\bx_{j}\}$ such that
\begin{equation}
\begin{cases}
i = m_0 + 1 + (k-2)L, \ldots, m_0 + (k-1)L,\\
j = m_0 + 1 + (k-1)L, \ldots, L + i.
\end{cases}
\end{equation}
As seen in Figure \ref{fig:matrixtile}-right for $k=3$, setting the
entries of $E_3$ to zero disconnects the upper and lower part of the
submatrix $\bo W(m_0:n_0,m_0:n_0)$, thereby isolating $\bx_{m_0}$ and
$\bx_{n_0}$. Alternatively, we also see in Figure \ref{fig:bluegraph}
that any path from $\bx_{m_0}$ to $\bx_{n_0}$ needs to go through
$E_3$.  Each edge-cutset $E_k, k \ge 2$ is a triangle with a height of
size $L$. Therefore, after creating $E_1$, we can fit
$\floor{\frac{n_0-(m_0+1) +1}{L}}$ such cutsets between $\bx_{m_0+1}$
and $\bx_{n_0}$. The sum of the weights along the edges of each cutset
$E_k, k =2,\ldots$ is given by $L(L+1)w_s/2$.  In addition, 
the sum of edge weights in the first cutset $E_1$ is at most
$L(L+1)w_s/2$.  Putting everything together, the computation of the
lower bound using the Nash-Williams Lemma yields
\begin{equation*}
\begin{split}
V  \sum_k & \left [\sum_{\{n,m\} \in E_k} w_{n,m} \right]^{-1}\\  
&\geq \left[ 
  N (2L +1 ) - L (L+1)
\right] w_s 
\left (
  \floor{\frac{(n_0 - m_0)}{L}} \frac{2}{L(L+1)w_s} +
  \frac{2}{L(L+1)w_s}
\right) \\
& \ge \frac{\left[N (2L +1 ) - L (L+1) \right]}{L(L+1)}
\left (
2\left(\frac{n_0 - m_0}{L} - 1\right) + 2
\right)\\
& \ge \frac{\left[N (2L +1 ) - L (L+1) \right]}{L(L+1)}
\left (
2\frac{n_0 - m_0}{L}
\right)
\end{split}
\end{equation*}
We can summarize this result in the following lemma.\\
\begin{lemma} 
 The commute time between vertices $\bx_{n_0}$ and $\bx_{m_0}$ inside
  $\sm S(N,L)$ satisfies 
  \begin{equation} 
    \kappa(\bx_{m_0},\bo
    x_{n_0})\geq
    \frac{2\left[N(2L+1)-L(L+1)\right]}{L(L+1)}\left(\frac{n_0-m_0}{L}\right).
    \label{eqn:kappagenerallb} 
  \end{equation} 
  \label{lemma:lemma82}
\end{lemma}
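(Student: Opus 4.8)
The plan is to obtain (\ref{eqn:kappagenerallb}) directly from the Nash--Williams inequality applied to the explicit family of disjoint edge-cutsets $\{E_k\}$ constructed above. Since each $E_k$ separates $\bx_{m_0}$ from $\bx_{n_0}$ and the $E_k$ are pairwise disjoint, the Nash--Williams Lemma gives
\begin{equation*}
  \kappa(\bx_{m_0},\bx_{n_0}) \;\geq\; V \sum_k \Bigl[\sum_{\{\bx_n,\bx_m\}\in E_k} w_{n,m}\Bigr]^{-1}.
\end{equation*}
The three quantities I need to pin down are the volume $V$, the total edge-weight of each cutset, and the number of cutsets that fit between the two vertices.

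First I would compute the volume. Counting ordered pairs $(n,m)$ with $|n-m|\leq L$ inside $\{1,\ldots,N\}$ gives $N$ diagonal pairs together with $2(N-d)$ pairs for each offset $d=1,\ldots,L$, so $V = \bigl[N + \sum_{d=1}^L 2(N-d)\bigr]w_{\sm S} = \bigl[N(2L+1)-L(L+1)\bigr]w_{\sm S}$, which is exactly the factor appearing in the claimed bound. Next, each generic cutset $E_k$ is the triangular collection of edges $\{\bx_i,\bx_j\}$ in which $i$ ranges over a block of $L$ consecutive indices and $j$ runs forward as far as $i+L$; the edge counts are $L,L-1,\ldots,1$, so $|E_k| = L(L+1)/2$ and the total weight is $\tfrac{1}{2}L(L+1)w_{\sm S}$. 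The same bound $\tfrac{1}{2}L(L+1)w_{\sm S}$ holds for the truncated first cutset $E_1$, so every reciprocal term in the sum equals $\tfrac{2}{L(L+1)w_{\sm S}}$.

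It then remains to count the cutsets. After placing $E_1$, consecutive triangular cutsets of height $L$ can be stacked between $\bx_{m_0+1}$ and $\bx_{n_0}$, and $\lfloor (n_0-m_0)/L\rfloor$ of them fit, giving $\lfloor (n_0-m_0)/L\rfloor + 1$ cutsets in all. Substituting into Nash--Williams, the factor $w_{\sm S}$ in $V$ cancels against the weight denominator, and I obtain
\begin{equation*}
  \kappa(\bx_{m_0},\bx_{n_0}) \;\geq\; \frac{2\bigl[N(2L+1)-L(L+1)\bigr]}{L(L+1)}\Bigl(\Bigl\lfloor \tfrac{n_0-m_0}{L}\Bigr\rfloor + 1\Bigr).
\end{equation*}
The elementary estimate $\lfloor x\rfloor + 1 \geq x$ (equivalently $\lfloor x\rfloor \geq x-1$) converts the bracketed count into $(n_0-m_0)/L$, which yields (\ref{eqn:kappagenerallb}).

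The step I expect to require the most care is verifying that the family $\{E_k\}$ genuinely consists of \emph{disjoint} edge-cutsets, each of which every $\bx_{m_0}$--$\bx_{n_0}$ path must cross. Disjointness is what allows the Nash--Williams terms to be summed without overlap, while the separating property is what makes each $E_k$ a legitimate cutset; both rest on the fact that in $\sm S(N,L)$ a single edge can advance the index by at most $L$, so one triangular block of forward edges of height $L$ cannot be bypassed. The boundary case $m_0 < L$ needs a separate argument: the natural triangular block would extend past the first vertex, so $E_1$ is taken to be the truncated set of at most $L(L+1)/2$ edges described above, and one must check both that the self-loop at $\bx_{m_0}$ provides no alternate route and that this $E_1$ remains disjoint from the subsequent generic cutsets.
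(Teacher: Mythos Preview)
Your proposal is correct and follows essentially the same route as the paper: Nash--Williams on the family $\{E_k\}$, the volume computation $V=[N(2L+1)-L(L+1)]w_{\sm S}$, the cutset weight bound $|E_k|\le L(L+1)/2$, the count $\lfloor(n_0-m_0)/L\rfloor+1$, and the final $\lfloor x\rfloor+1\ge x$ step all match. The only cosmetic slip is writing that each reciprocal term ``equals'' $2/(L(L+1)w_{\sm S})$; for the truncated $E_1$ it is only \emph{at least} that value, but since you are proving a lower bound this is harmless.
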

Finally, we bound the average commute time in the slow graph.  Observe
that the slow graph model $\sm S(N,L)$ has $N-j$ pairs of vertices
such that $|m-n| = j$, for $j = 1,\ldots,N-1$. Therefore, using the
lower bound given in Lemma
\ref{lemma:lemma82} it follows that
\begin{equation*}
  \sum_{1\leq m<n\leq N}\kappa(\bx_m,\bx_n)
  \geq \frac{2\left[N(2L+1)-L(L+1)\right]}{L^2(L+1)} 
  \sum_{j = 1}^{N-1}(N-j)j
\end{equation*}
But
\begin{align*}
  \sum_{j = 1}^{N-1}(N-j)j
& =  \left(N \sum_{1}^{N-1} j - \sum_{1}^{N-1} j^2\right)
 = \left (\frac{N^2(N-1)}{2}
  -\frac{N(N-1)}{2}\frac{2N-1}{3}\right)\\
& = \frac{N(N-1)}{2}\frac{N+1}{3} 
\end{align*}
Dividing both sides by $N(N-1)/2$ and simplifying yields (\ref{eqn:kappafc}).
\begin{figure}[H]
    \begin{center}
      \includegraphics[width=.33\textwidth]{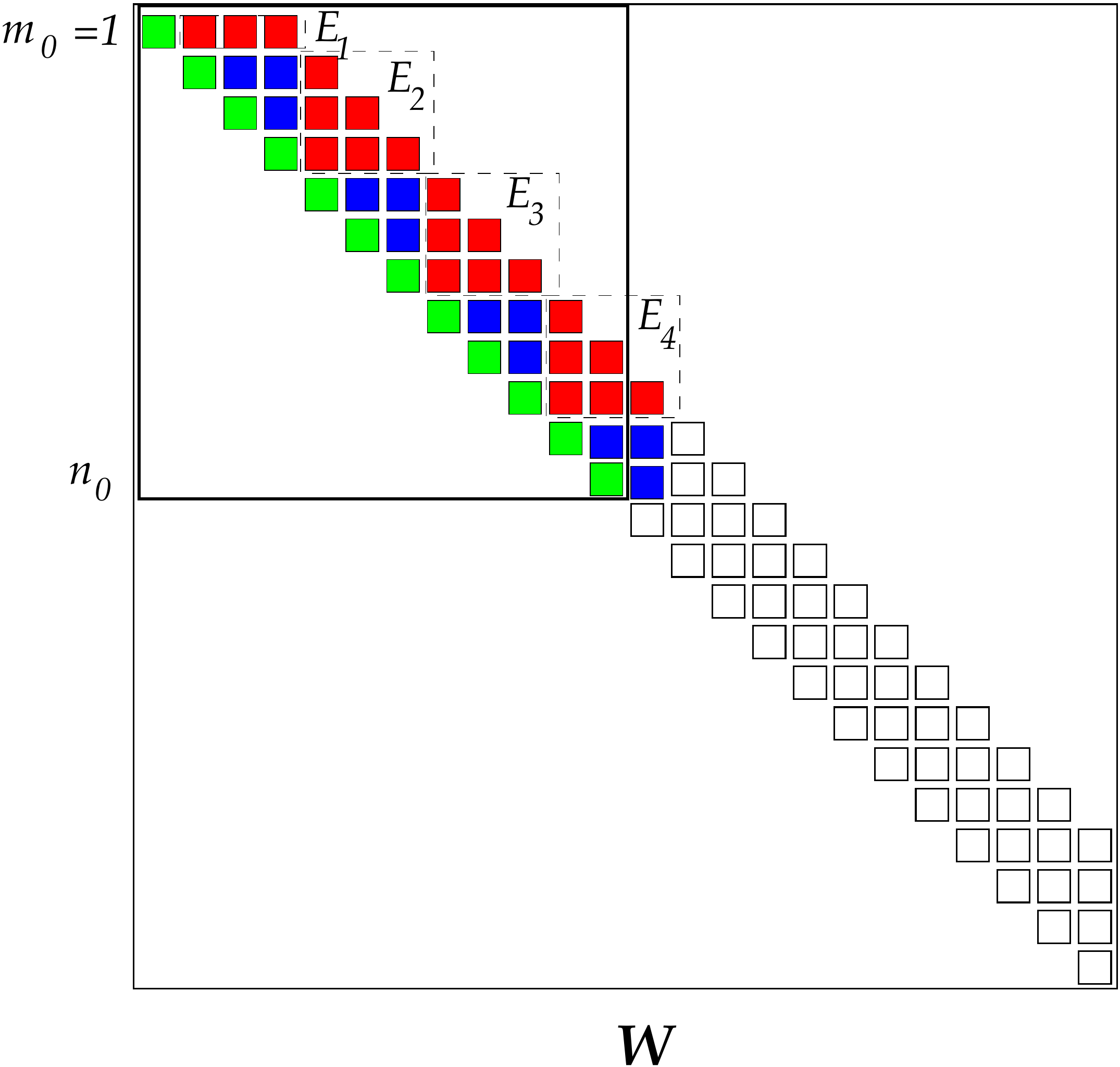} 
      \raisebox{-2mm}{\includegraphics[width=.33\textwidth]{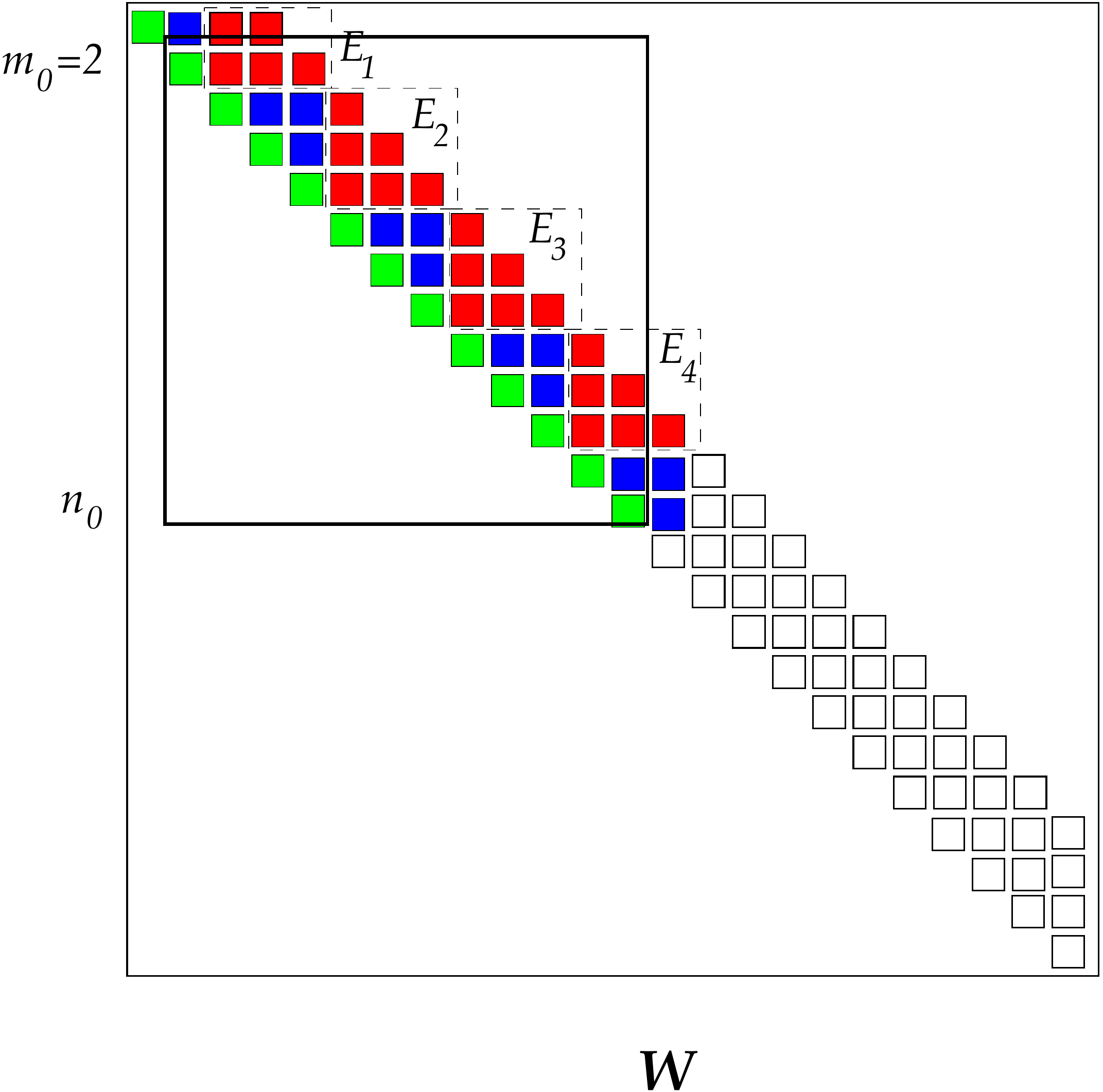}}
     \includegraphics[width=.292\textwidth]{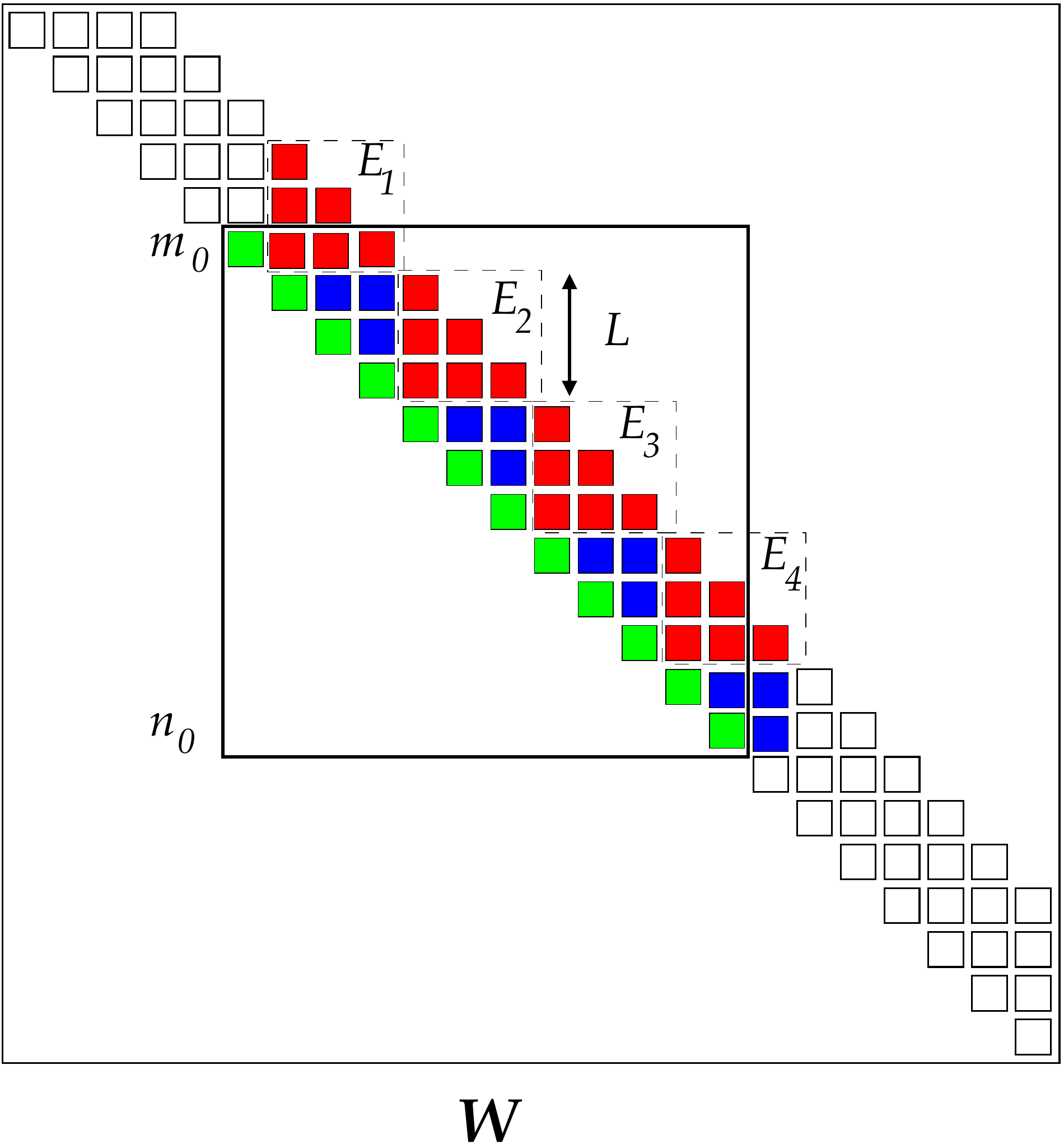} 
  \end{center}
  \caption{The small squares represent the nonzero entries in the
    upper triangular portion of the weight matrix $\bo W$ of $\sm
    S(N,L)$. The green entries on the diagonal are the self-loops. The
    edge-cutsets $E_k$ are shown in red for $m_0=1$ (left), $m_0=2$
    (center), and for $m_0 \ge L$ (right). The submatrix $\bo
    W(m_0:n_0,m_0:n_0)$ is also shown.\label{fig:matrixtile}}
  \end{figure}
 \begin{figure}[H]
    \begin{center}
      \includegraphics[width=.6\textwidth]{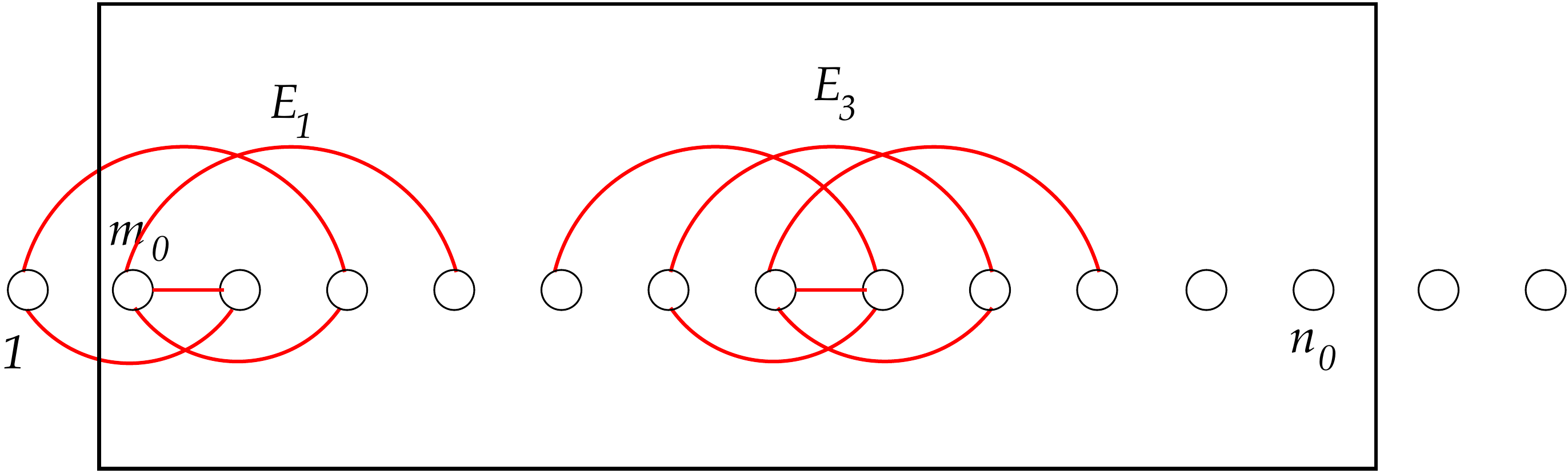} 
      \includegraphics[width=.6\textwidth]{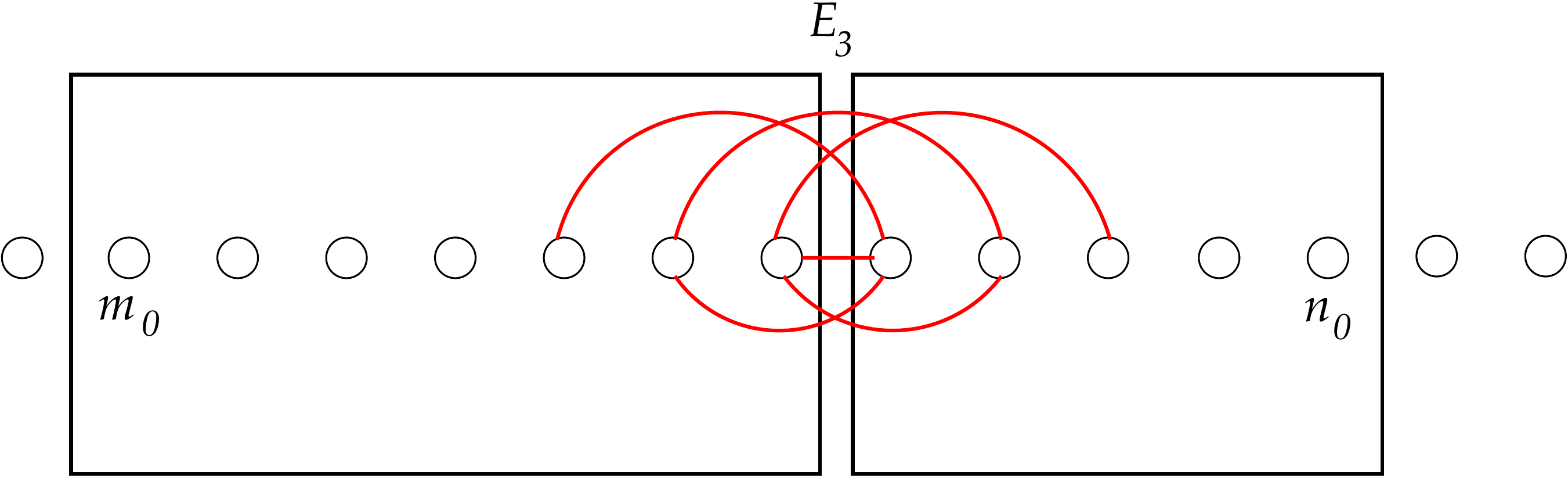} 
    \end{center}
    \caption{Top: edge-cutsets $E_1$ and $E_3$. Bottom: any path from
      $m_0$ to $n_0$ needs to use an edge of the
      edge-cutset $E_3$. \label{fig:bluegraph}}
  \end{figure}
\noindent 
\subsection{Proof of upper bound on the average commute time in the fast graph}
\label{sssec:fastgraphctproof}
Our approach relies on the relationship between electrical networks
and random walks on graphs \cite{2000math......1057D}.  We begin by
introducing the property of interest --- the \textit{effective
  resistance} --- and its relationship to the commute time.
\paragraph{The electrical network perspective} For each pair of
vertices $\bx_{n}$ and $\bx_{m}$ with a non zero weight $w_{n,m}$, we
assign the resistance
\begin{equation}
  r_{n,m} = \frac{1}{w_{n,m}}
  \label{eqn:resistanceandweight}
\end{equation}
to the edge $\{\bx_{n},\bx_{m}\}$. We note that if $w_{n,m}=0$, then there is no
connection between $\bx_n$ and $\bx_m$, and no resistance to consider.
Now, consider applying a potential difference, or voltage, across the
vertices $\bx_{m_0}$ and $\bx_{n_0}$. As a result, some current flows
across the resistors (edges) in the electrical network (graph). We may
replace the set of resistors across which some current flows by an
equivalent, {\em effective resistance}, $R_{m_0,n_0}$ that is
connected between $\bx_{m_0}$ and $\bx_{n_0}$. The effective
resistance $R_{m_0,n_0}$ is defined by the voltage necessary to
maintain a one-unit current between $\bx_{m_0}$ and $\bx_{n_0}$.  The
main result in \cite{Chandra89theelectrical}, is that the commute time
between vertices $\bx_{m_0}$ and $\bx_{n_0}$ can be expressed as
\begin{equation}
  \kappa(\bx_{m_0},\bx_{n_0}) = V R_{m_0,n_0}.
  \label{eqn:commuteitoR}
\end{equation}

Taking expectations of both sides of
Equation (\ref{eqn:commuteitoR}) with respect to the process of
generating edges and choosing terminals in a fast graph, we obtain

\begin{align}
  \kappa_{\sm{F}} = \E(V)\,\E(R)+\cov(V,R).
  \label{eqn:kappaFappend}
\end{align}
Notice that every edge in the fast graph has weight $w_\sm{F}$.  Therefore, $V$ can be expressed as
\begin{equation}
  V = \sum_{n=1}^Nw_{nn} + 2\sum_{1\leq m<l\leq N}w_{ml}= w_\sm{F}N +
  2w_\sm{F} \tilde{N},
\label{eqn:volume2}
\end{equation}
where $\tilde{N}$ is a binomial random variable representing the
number of edges connecting distinct vertices in the fast graph. We now
rewrite (\ref{eqn:kappaFappend}), using (\ref{eqn:volume2}) and the
assumption that $\cov(V,R)\leq 0$, to obtain

\begin{equation*}
  \kappa_{\sm{F}} \leq w_\sm{F}\left[N + 2\E(\tilde{N})\right]\E(R).
\end{equation*}

Recall that $\tilde{N}$ is distributed as a binomial random variable
with parameters $(N(N-1)/2,p)$. Also, the effective resistance between
two nodes of a network is at most the geodesic distance between them,
$\delta$, scaled by $1/w_{\sm F}$ \cite{Chandra89theelectrical}. It
follows that

\begin{equation*}
  \kappa_{\sm{F}} \leq \left[N(N-1)p+N\right]\E(\delta).
\end{equation*}

The authors \cite{PhysRevE.70.056110} give a closed form expression
for $\E(\delta)$ on Erd\"os-Renyi graphs, which we can utilize since
the fast graph's self-connections do not change the geodesic distance.
This yields

\begin{equation*}
  \kappa_{\sm{F}} \leq \left[N(N-1)p+N\right]\left[\frac{\log N - \gamma_e}{\log ((N-1)p+1)} +\frac{1}{2}\right],
\end{equation*}
where $\gamma_e \approx 0.5772$ is Euler's constant. Simplification using (\ref{eqn:pdefined}) gives the desired result.

\paragraph{Remark} Although $\cov(V,R)\leq 0$ is an assumption, we
conjecture that it is always satisfied due to the fact that increasing
the number of resistors $M$ in an electrical network with a fixed
number of nodes is effectively like adding resistors-in-parallel, and,
according to Rayleigh's Monotonicity Law, adding edges (increasing $M$)
can only decrease the effective resistance \cite{2000math......1057D}.

\section{Generating a random trigonometric polynomial with a specified autocorrelation}
\label{ssec:generatingzk}
Let $z(t)$ represent a random trigonometric polynomial on $[0,1)$ with an autocorrelation function given by 

\begin{equation}
  C(\tau) = 2(\cos(\pi \tau))^{2\beta}-1\hspace{.2in}\text{ for } \tau\in \left[-\frac{1}{2},\frac{1}{2}\right),
  \label{eqn:autocorrelationapdx}
\end{equation}
for some nonnegative integer $\beta$.  It follows that we can do a Fourier expansion of $C(\tau)$ to obtain
\begin{equation}
  C(\tau) = \sum_{j\in\mathbb{Z}}\hat{C}_je^{2\pi i j t},
  \label{eqn:chat}
\end{equation}
where $i = \sqrt{-1}$ and 
\begin{align*}
  \hat{C}_j &= \int_0^1C(\tau)e^{-2\pi i j \tau} d\tau\\
  &= \int_0^1\left(2^{(1-2\beta)}\left(\sum_{k =
        0}^{2\beta}\begin{pmatrix}2\beta \\ k\end{pmatrix}e^{2\pi
        i(\beta-k)\tau}\right)-1\right)e^{-2\pi i j \tau} d\tau\\ 
  &= 2^{(1-2\beta)}\sum_{k=0}^{2\beta}\begin{pmatrix}2\beta \\
    k\end{pmatrix}\int_0^1e^{2\pi i
    (\beta-k-j)\tau}\,d\tau-\int_0^1e^{-2\pi i j \tau}\,d\tau\\  
  &= \left\{\begin{array}{ll} 2^{(1-2\beta)}\begin{pmatrix}2\beta \\
        \beta\end{pmatrix}-1&\text{if } j = 0,\vspace{.05in}\\ 
      2^{(1-2\beta)}\begin{pmatrix}2\beta \\ \beta-j\end{pmatrix}& \text{if
      }|j|\leq \beta,\\ 0 & \text{if }j>\beta,\end{array}\right. 
\end{align*}

where the second equality follows after expressing cosine with
complex exponentials, and applying the binomial theorem.

It is clear that $2\beta$ is the frequency of the fastest sinusoid
making up the random signal $z(t)$, and that most of the energy is on
average at frequency $\beta$. Let $A_j$ and $B_j$ be independent and
identically distributed Normal random variables with zero mean and
unit variance. Define

\begin{equation*}
  \hat{z}_j = \sqrt{\frac{\hat{C}_j}{2}}\left(A_j+iB_j\right).
\end{equation*}
Finally, the signal $z(t)$ is defined as
\begin{equation*}
  z(t) = \sum_{j\in\mathbb{Z}}\hat{z}_je^{2\pi i j t}.
\end{equation*}
To check that the signal $z(t)$ defined above has the correct
autocorrelation, observe that linearity of the expectation,
independence and zero mean of the random variables, and the fact that
$\hat C_{j} = \hat C_{-j}$ together imply that

\begin{align*}
  \E(z(t)& \overline{z(t+\tau)}) =
  \sum_{|j|\leq2\beta}\sum_{|k|\leq2\beta}\E\left(\hat{z}_j\overline{\hat{z}_k}\right)e^{-2\pi
    i k \tau}e^{2\pi i(j-k) t}\\ 
  &= \sum_{|j|\leq2\beta}\sum_{|k|\leq2\beta}\frac{\sqrt{\hat C_j \hat
      C_k}}{2}\left[\E(A_jA_k)-i\E(A_jB_k)+i\E(A_kB_j)+\E(B_jB_k)\right]e^{-2\pi
    i k \tau}e^{2\pi i(j-k) t}\\ 
  &=\sum_{|j|\leq2\beta}\frac{\hat
    C_j}{2}\left[\E(A_j^2)+\E(B_j)^2\right]e^{-2\pi i j \tau} =\sum_{|j|\leq2\beta}\hat C_je^{2\pi i j \tau}.
\end{align*}
Therefore, referencing (\ref{eqn:chat}), it follows that $\E(z(t)\overline{z(t+\tau)}) = C(\tau)$.

\end{document}